\documentclass[conference]{IEEEtran}
\ifCLASSINFOpdf
\else
\fi

\usepackage{cite}
\usepackage{amsmath,amssymb,amsfonts,amsthm}
\usepackage{mathabx}
\usepackage{algorithmic}
\usepackage{graphicx}
\usepackage{textcomp}
\usepackage{booktabs}
\usepackage[hidelinks]{hyperref}

\usepackage{orcidlink}
\usepackage[most]{tcolorbox}
\usepackage{xcolor,soul}
\usepackage{tikz}
\usepackage{subcaption}
\usepackage{bm}
\usepackage{float}
\usepackage{stfloats}
\usepackage[cal=cm]{mathalfa}
\usepackage{pifont}
\usepackage{verbatim}
\usepackage{multirow}
\usepackage{enumitem}
\usepackage{arydshln}
\usepackage{url}
\allowdisplaybreaks[4]
\definecolor{MyColor}{RGB}{182,222,192} % Green background
\AtBeginDocument{%
  }

\newtheorem{definition}{Definition}
\newtheorem{theorem}{Theorem}

\newcommand{\grey}{\textcolor[rgb]{0.5, 0.5, 0.5}}
\newcommand{\blue}{\textcolor[rgb]{0.0, 0.0, 1.0}}

\newcommand{\violet}{\textcolor[rgb]{0.58, 0.0, 0.83}}

\newtcolorbox{mybox}[2][]{%
  sidebyside align = top seam,
  colback      = black!5!white,
  colframe     = black!75!black,
  colbacktitle = gray!85!white,
  title        = #2,#1,
  enhanced,
}

\begin{document}
%
% paper title
% Titles are generally capitalized except for words such as a, an, and, as,
% at, but, by, for, in, nor, of, on, or, the, to and up, which are usually
% not capitalized unless they are the first or last word of the title.
% Linebreaks \\ can be used within to get better formatting as desired.
% Do not put math or special symbols in the title.
\title{PriSrv+: Privacy and Usability-Enhanced \\
Wireless Service Discovery with Fast and
 Expressive Matchmaking Encryption}

% \author{\em Anonymous Authors}

\author{\IEEEauthorblockN{Yang Yang\IEEEauthorrefmark{1},
Guomin Yang\IEEEauthorrefmark{1}, 
Yingjiu Li\IEEEauthorrefmark{2},
Pengfei Wu\IEEEauthorrefmark{1}, 
Rui Shi\IEEEauthorrefmark{3},
Minming Huang\IEEEauthorrefmark{1},\\
Jian Weng\IEEEauthorrefmark{4}, 
HweeHwa Pang\IEEEauthorrefmark{1}, 
Robert H. Deng\IEEEauthorrefmark{1}
}

\IEEEauthorblockA{\IEEEauthorrefmark{1}Singapore Management University, Singapore\\ (\texttt{\{yyang, gmyang, pfwu, mmhuang, hhpang, robertdeng\}@smu.edu.sg)}}
\IEEEauthorblockA{\IEEEauthorrefmark{2}University of Oregon, USA (\texttt{yingjiul@uoregon.edu})}
\IEEEauthorblockA{\IEEEauthorrefmark{3}Hainan University, China (\texttt{shir@hainanu.edu.cn})}
\IEEEauthorblockA{\IEEEauthorrefmark{4}Jinan University, Guangzhou, China (\texttt{cryptjweng@gmail.com})}
}

% use for special paper notices
%\IEEEspecialpapernotice{(Invited Paper)}

% \IEEEoverridecommandlockouts
% \makeatletter\def\@IEEEpubidpullup{6.5\baselineskip}\makeatother
% \IEEEpubid{\parbox{\columnwidth}{
% 		Network and Distributed System Security (NDSS) Symposium 2026\\
% 		23-27 February 2026, San Diego, CA, USA\\
% 		ISBN 979-8-9919276-8-0\\
% 		https://dx.doi.org/10.14722/ndss.2026.230087\\
% 		www.ndss-symposium.org
% }
% \hspace{\columnsep}\makebox[\columnwidth]{}}

\IEEEoverridecommandlockouts
\makeatletter\def\@IEEEpubidpullup{3\baselineskip}\makeatother
\IEEEpubid{\parbox{\columnwidth}{
    This is the full version of the research work published in \textit{Network and Distributed System Security Symposium (NDSS) 2026}.\\
    https://dx.doi.org/10.14722/ndss.2026.230087\\
}
\hspace{\columnsep}\makebox[\columnwidth]{}}

% make the title area
\maketitle

% As a general rule, do not put math, special symbols or citations
% in the abstract
\begin{abstract}
Service discovery is a fundamental process in wireless networks, enabling devices to find and communicate with services dynamically, and is critical for the seamless operation of modern systems like 5G and IoT. This paper introduces PriSrv+, an advanced privacy and usability-enhanced service discovery protocol for modern wireless networks and resource-constrained environments. PriSrv+ builds upon PriSrv (NDSS'24), by addressing critical limitations in expressiveness, privacy, scalability, and efficiency, while maintaining compatibility with widely-used wireless protocols such as mDNS, BLE, and Wi-Fi. 

A key innovation in PriSrv+ is the development of Fast and Expressive Matchmaking Encryption (FEME), the first matchmaking encryption scheme capable of supporting expressive access control policies with an unbounded attribute universe, allowing any arbitrary string to be used as an attribute. FEME significantly enhances the flexibility of service discovery while ensuring robust message and attribute privacy. Compared to PriSrv, PriSrv+ optimizes cryptographic operations, achieving 7.62$\times$ faster for encryption and 6.23$\times$ faster for decryption, and dramatically reduces ciphertext sizes by 87.33$\%$. In addition, PriSrv+ reduces communication costs by 87.33$\%$ for service broadcast and 86.64$\%$ for anonymous mutual authentication compared with PriSrv. Formal security proofs confirm the security of FEME and PriSrv+. Extensive evaluations on multiple platforms demonstrate that PriSrv+ achieves superior performance, scalability, and efficiency compared to existing state-of-the-art protocols.
\end{abstract}

% no keywords

\IEEEpeerreviewmaketitle

% Contents
% \makefirstuc{we present}  make first letter upper case

% \newcommand{\name}{AKMA+}

% \newcommand{\namep}{AKMA'} 

%------------Scheme---------------------

\newcommand{\FEME}{\mathcal{FEME}} 

%------------Entity---------------------

\newcommand{\snd}{\textsf{snd}} 

\newcommand{\rcv}{\textsf{rcv}}

%------------Algo---------------------

\newcommand{\Setup}{\textsf{Setup}} 

\newcommand{\EKGen}{\textsf{EKGen}} 

\newcommand{\DKGen}{\textsf{DKGen}} 

\newcommand{\PolGen}{\textsf{PolGen}} 

\newcommand{\Enc}{\textsf{Enc}} 

\newcommand{\Dec}{\textsf{Dec}} 

\newcommand{\KeyGen}{\textsf{KeyGen}}       % MAC algo

\newcommand{\MAC}{\textsf{MAC}}             % MAC algo

\newcommand{\Verify}{\textsf{Verify}}       % MAC algo
 
%------------Param---------------------

\newcommand{\mpk}{\textsf{mpk}} 

\newcommand{\msk}{\textsf{msk}} 

\newcommand{\EK}{\textsf{EK}} 

\newcommand{\ek}{\textsf{ek}} 

\newcommand{\DK}{\textsf{DK}} 

\newcommand{\dk}{\textsf{dk}} 

\newcommand{\SK}{\textsf{SK}} 

\newcommand{\sk}{\textsf{sk}} 

\newcommand{\msg}{\textsf{msg}} 

\newcommand{\MSG}{\textsf{MSG}} 

\newcommand{\CT}{\textsf{CT}} 

\newcommand{\ct}{\textsf{ct}} 

\newcommand{\cred}{\textsf{cred}}

\newcommand{\pair}{\textmd{pair}} 

\newcommand{\A}{\textbf{A}} 

\newcommand{\M}{\textbf{M}} 

\newcommand{\sfv}{\textbf{v}} 

\newcommand{\sfy}{\textbf{y}}

%------------Complex Symbols---------------------

\newcommand{\Ssnd}{\mathcal_\textsf{snd}}

\newcommand{\gm}[1]{\textcolor{blue}{[guomin: #1]}}
\newcommand{\yy}[1]{\textcolor{orange}{#1}}

\section{Introduction}
\label{sec:Intro}

Service discovery (SD) protocols, including Wi-Fi~\cite{WiFi}, AirDrop~\cite{AirDrop}, and BLE~\cite{BLE}, are integral to modern wireless networks but lack robust privacy safeguards. This exposes them to tracking, linkability, and identity exposure attacks, where adversaries monitor device presence, track movements, and link sessions, leading to profiling and privacy breaches~\cite{cassola2015authenticating, fawaz2016protecting, stute2021disrupting}. Despite the adoption of protocols like DNS-SD~\cite{DNS-SD}, mDNS~\cite{mDNS}, SSDP~\cite{SSDP}, and UPnP~\cite{UPnP}, their use of cleartext advertisements and lack of authentication lead to spoofing, MitM, and DoS attacks~\cite{bai2016staying, wang2019looking, venkatnarayan2020leveraging}.

Existing privacy-enhancing protocols such as PrivateDrop~\cite{heinrich2021privatedrop} and WTSB~\cite{wu2016privacy} still fall short, as they lack policy-controlled access and attribute hiding, leaving users vulnerable to tracking and impersonation~\cite{stute2019billion}. Similarly, CBN~\cite{cassola2015authenticating} provides anonymous client authentication but fails to protect service providers from spoofing.

These limitations underscore the need for privacy-preserving SD mechanisms aligned with global standards. Regulations such as RFC 7258~\cite{rfc7258}, ISO/IEC 29184~\cite{iso29184}, and GDPR~\cite{gdpr} mandate confidentiality and privacy-by-design; NIST SP 800-63B~\cite{nist80063b} and ETSI TS 103 465~\cite{etsi103465} advocate bilateral access control; and RFC 6973~\cite{rfc6973}, the NIST Privacy Framework~\cite{nistprivacy}, and OECD Guidelines~\cite{oecdprivacy} emphasize bilateral anonymity. Sender authentication is equally essential, mandated by NIST SP 800-53~\cite{nist80053}, ISO/IEC 29115~\cite{iso29115}, and ETSI EN 303 645~\cite{etsi303645}.

Recently, Yang et al.~\cite{yang2024prisrv} proposed PriSrv (NDSS'24), addressing major privacy and usability challenges in service discovery. Its dual-layer architecture enables only authorized clients to discover services, protecting sensitive information during interactions. Unlike protocols such as AirDrop and BLE that lack strong privacy guarantees, PriSrv supports bilateral policy control through anonymous credential-based matchmaking encryption (ACME), providing mutual authentication and defending against MitM attacks, tracking, and profiling.

% Long Version
% Recently, Yang et al.~\cite{yang2024prisrv} developed PriSrv in NDSS'24, offering a comprehensive solution to many of the privacy and usability challenges present in the existing SD protocols. By leveraging its dual-layer architecture, PriSrv ensures that services are only discoverable by authorized clients, mitigating the risks of exposing sensitive information during device interaction.
% In contrast to protocols like AirDrop and BLE, which lack robust privacy mechanisms, PriSrv enhances mutual authentication and shields private attributes of involved parties from unauthorized access. This mechanism allows PriSrv to counteract common vulnerabilities in SD protocols, such as MitM attacks, user tracking, and device profiling, by applying anonymous credential-based matchmaking encryption (ACME) that supports bilateral policy control for secure communication.

However, PriSrv inherits limitations from ACME. It reveals public attributes in the outer layer, potentially enabling tracking. Its binary attribute vector model (each attribute can only represent 1 or 0) restricts expressiveness and increases computation with large attribute sets. ACME’s small-universe design requires system rebuilds to add new attributes, hindering scalability. Additionally, large ciphertexts lead to high communication overhead, and pre-issued anonymous credentials introduce management complexity.

% Long Version
% However, despite its strengths, PriSrv has limitations due to its reliance on ACME. One major drawback is the leakage of public attributes in the outer layer, which can still enable tracking attacks.
% The use of binary attribute vectors (each attribute can only represent 1 or 0) significantly constrains the expressiveness of attributes, and increases the computational costs during encryption and decryption due to large vector size in case of numerous attributes. Additionally, ACME’s small-universe construction limits the total number of attributes in the system. Adding new attributes requires a full system rebuild, reducing flexibility for large-scale deployments. Furthermore, the large size of the service discovery broadcast ciphertext in PriSrv results in high transmission overhead and delays, particularly constraining its usage in low-bandwidth environments such as BLE and mDNS. The requirement for pre-issued anonymous credentials in PriSrv also adds complexity to credential management, potentially affecting system efficiency.

To overcome these issues, we propose a fast and expressive matchmaking encryption (FEME) scheme, which is also of \textit{independent interest} for advancing matchmaking encryption (ME).
% Long Version
% Aiming to address these issues, we propose a fast and expressive matchmaking encryption (FEME) scheme, which is also of \textit{independent interest} in advancing matchmaking encryption (ME) techniques. FEME ensures high efficiency and robust privacy. 
Unlike prior Identity-Based ME (IBME) schemes~\cite{ateniese2019match,chen2022identity} that support only equality policies, FEME enables expressive policies with arbitrary strings and solves an open problem posed in CRYPTO'19 and ASIACRYPT'22.
% Long Version
% Unlike earlier ME schemes, such as Ateniese et al.'s Identity-Based Matchmaking Encryption (IBME) instantiation \cite{ateniese2019match} and Chen et al.'s IBME scheme \cite{chen2022identity}, which are constrained to basic equality policies, FEME supports expressive policies with arbitrary attribute values. It \textit{solves the open problem} highlighted by Ateniese et al. in CRYPTO'19 and Chen et al. in ASIACRYPT'22, calling for the development of matchmaking encryption 
% that balances policy expressiveness with privacy.
Additionally, FEME is significantly faster in encryption and decryption than the existing ME scheme supporting expressive policy control~\cite{yang2024prisrv}. 

To further ensure robust authentication and privacy, FEME introduces a novel \textit{double re-randomization and binding technique}, which prevents encryption key extraction, thwarts ciphertext forgery and component-mixing attacks, and conceals sensitive attribute values. FEME achieves bilateral access control, bilateral anonymity, and sender authentication in the context of expressive policies with high efficiency. It adopts a \textit{partially hidden access structure}~\cite{lai2012expressive}, where only attribute names are exposed while attribute values remain concealed, enabling efficient policy matching without revealing sensitive information. Combined with a \textit{randomness splitting technique}~\cite{meng2024fease}, FEME offers a practical and privacy-preserving solution for expressive matchmaking encryption.

Building on the strengths of FEME, PriSrv+ overcomes the limitations of its predecessor, PriSrv, and introduces new capabilities. It enhances usability by eliminating the reliance on anonymous credentials and the associated overhead of credential issuance and revocation. PriSrv+ supports expressive bilateral policy control and flexible attribute representation, lifting the constraints of binary vectors and small-universe designs in PriSrv. It also significantly reduces communication overhead, shrinking broadcast sizes by up to 87.33$\%$, which boosts scalability and performance in low-bandwidth settings. Additionally, PriSrv+ improves privacy by concealing all attribute values during service discovery, providing a robust and efficient solution for privacy-preserving service discovery.

% Long Version
% Building on the strengths of FEME, PriSrv+ overcomes the shortcomings of its predecessor, PriSrv, while introducing new features. PriSrv+ enhances the usability of service discovery by eliminating its dependency on anonymous credentials and the management overhead associated with credential issuance and revocation. PriSrv+ achieves high expressiveness in bilateral policy control and attribute representation, breaking the constraints of binary attribute vectors and a small attribute universe, as seen in PriSrv. PriSrv+ significantly reduces communication overhead by decreasing the size of broadcast messages by up to 87.33$\%$, thereby improving its scalability and efficiency, which is crucial in low-bandwidth environments. Furthermore, it enhances privacy by ensuring that outer layer attributes in PriSrv remain protected during the service discovery process, offering a comprehensive solution for private service discovery.

The key contributions of PriSrv+ are outlined as follows.

$\bullet$ \textit{Fast and Expressive Matchmaking Encryption (FEME)}. At the core of PriSrv+, FEME is the first matchmaking encryption scheme capable of supporting expressive access control policies with an unbounded attribute universe, allowing any arbitrary string to be used as an attribute. 
% This solves an open problem identified by Ateniese et al. in CRYPTO'19 and by Chen et al. in ASIACRYPT'21. 
FEME offers up to 7.62$\times$ faster encryption and 6.23$\times$ faster decryption compared with ACME, making PriSrv+ suitable for wireless environments.

$\bullet$ \textit{Enhanced Protocol Scalability and Flexibility}. PriSrv+ significantly improves the scalability over PriSrv by supporting unrestricted attribute space. Attributes in PriSrv+ can be arbitrary strings, such as postal addresses, eliminating the restriction of rigid binary vectors used in PriSrv. This enhancement provides greater flexibility in service discovery and access control management, enabling the applicability of PriSrv+ across diverse real-world settings while maintaining low computation and communication overheads.

$\bullet$ \textit{Optimized Performance and Scalability}. By reducing ciphertext size and optimizing cryptographic operations, PriSrv+ significantly lowers packet transmission overhead, leading to up to 7.17$\times$ faster service broadcast and 3.32$\times$ faster anonymous mutual authentication compared to PriSrv. This positions PriSrv+ as a more efficient and scalable protocol, particularly suitable for bandwidth-limited and latency-sensitive networks.

$\bullet$ \textit{Interoperability with Existing Protocols}. PriSrv+ maintains compatibility with widely-used wireless protocols such as mDNS, BLE, EAP, AirDrop, and Wi-Fi, while addressing scalability issues in PriSrv. In comparison to PriSrv, for instance, PriSrv+ reduces the packet size in mDNS by 88.89$\%$, in BLE by 87.73$\%$, and in Wi-Fi by 86.64$\%$, which makes it more suitable for low-bandwidth environments.

$\bullet$ \textit{Versatile Implementation across Platforms}. PriSrv+ has been tested on a range of platforms, including desktops, laptops, mobile devices, and IoT systems like Raspberry Pi. Experimental results indicate that PriSrv+ reduces delays in both privacy-preserving service broadcast and mutual authentication, delivering immediate responses even in resource-constrained environments. 
% On average, PriSrv+ performs 7.17$\times$ faster for broadcast and 3.32$\times$ faster for anonymous mutual authentication than its predecessor PriSrv, demonstrating its practical efficiency in real-world scenarios. 
% Our source code is available at \cite{PriSrv+}.

$\bullet$ \textit{Formal Security and Privacy Guarantees}. Rigorous formal security proofs demonstrate that FEME satisfies confidentiality, anonymity, and authenticity. PriSrv+ is proven to be a secure service discovery protocol with bilateral anonymity, offering superior protection compared to other state-of-the-art protocols.

These contributions establish PriSrv+ as an efficient, secure, and scalable solution for wireless networks, offering robust security and privacy guarantees and adaptability to the evolving demands of modern communication systems.

% \subsection{Roadmap}

% The remainder of this paper is organized as follows. In Section \ref{sec:Relatedwork}, we analyze the related works of service discovery protocols and matchmaking encryption schemes.
% In Section \ref{sec:Preliminary}, we introduce the preliminaries highly related to our scheme and protocol. Section \ref{sec:FEME} presents the core scheme, FEME, including its technical roadmap, construction, and comparisons with existing ME schemes. Section \ref{sec:PriSrv+} constructs PriSrv+ for privacy and usability-enhanced wireless service discovery. We benchmark the performance of PriSrv+ in Section \ref{sec:Implementation} and conclude the paper in Section \ref{sec:Conclusion}. 
\section{Related Work} 
\label{sec:Relatedwork}

\subsection{Service Discovery Protocols}
\label{subsec:SrvDiscovery}

Service discovery (SD) protocols such as Wi-Fi~\cite{WiFi}, AirDrop~\cite{AirDrop}, and Bluetooth Low Energy (BLE)~\cite{BLE} facilitate the automatic detection and advertisement of services and devices in dynamic networks, streamlining device interactions. However, these protocols pose significant privacy risks, particularly for users wishing to safeguard sensitive or identifying information. Studies show that about 90$\%$ of users view the exposure of device names as a privacy threat~\cite{konings2013device}, enabling adversaries to infer personal data such as location, mobility, and user profiles~\cite{wu2016privacy, zhou2019discovering, stute2019billion, stute2021disrupting}. For example, device names in public Wi-Fi can allow Internet Service Providers (ISPs) to track users~\cite{cassola2015authenticating}, while attackers in IoT networks can analyze service data to reveal user routines~\cite{fawaz2016protecting}.

% Long Version
% Service discovery (SD) protocols such as Wi-Fi \cite{WiFi}, AirDrop \cite{AirDrop}, and Bluetooth Low Energy (BLE) \cite{BLE} are essential for facilitating the detection and advertisement of services and devices within dynamic network environments. These protocols enable devices to discover available services automatically, streamlining device interactions. However, SD protocols present considerable privacy risks, especially for users seeking to protect identifying or sensitive information during their interactions. Research indicates that approximately 90$\%$ of users consider the exposure of device names to be a privacy concern \cite{konings2013device}, as it allows adversaries to infer personal details, including location, mobility patterns, and user profiles \cite{wu2016privacy, zhou2019discovering, stute2019billion, stute2021disrupting}. For instance, in public Wi-Fi networks, device names may provide Internet Service Providers (ISPs) with tracking capabilities \cite{cassola2015authenticating}, and in IoT networks, attackers can extract information from smart devices to deduce users' routines \cite{fawaz2016protecting}.

Most existing SD protocols, including DNS-SD~\cite{DNS-SD}, mDNS~\cite{mDNS}, SSDP~\cite{SSDP}, and UPnP~\cite{UPnP}, lack strong privacy safeguards, leaving them vulnerable to man-in-the-middle (MitM), spoofing, and denial-of-service (DoS) attacks~\cite{bai2016staying, wang2019looking}. These risks are exacerbated by the use of cleartext broadcasts in Wi-Fi and BLE, which expose device identifiers and enable adversarial tracking and profiling~\cite{venkatnarayan2020leveraging}. Although protocols like 
CBN~\cite{cassola2015authenticating} support anonymous client authentication, they offer insufficient protection for service providers, who remain exposed to impersonation and MitM attacks.

% Long Version
% Most existing SD protocols, such as DNS-SD \cite{DNS-SD}, mDNS \cite{mDNS}, SSDP \cite{SSDP}, and UPnP \cite{UPnP}, lack strong privacy safeguards, making them susceptible to attacks like man-in-the-middle (MitM), spoofing, and denial-of-service (DoS) \cite{bai2016staying, wang2019looking}. These vulnerabilities are worsened by the widespread use of cleartext broadcasts in Wi-Fi and BLE, which exposes device identifiers, allowing adversaries to track and profile users \cite{venkatnarayan2020leveraging}. While some protocols, like the CBN scheme \cite{cassola2015authenticating}, offer a level of anonymous client authentication, they do not provide adequate protections for service providers, leaving them exposed to impersonation and MitM attacks.

Protocols such as AirDrop~\cite{AirDrop}, PrivateDrop~\cite{heinrich2021privatedrop}, and WTSB~\cite{wu2016privacy} introduce encryption and authentication to enhance service discovery privacy. While improving mutual authentication and anonymity, they still suffer from tracking, MitM, and DoS vulnerabilities due to incomplete privacy features, such as selective attribute disclosure and multi-show unlinkability~\cite{stute2019billion, bai2016staying}. For example, reliance on certificates in AirDrop and PrivateDrop may allow attackers to link sessions and track users~\cite{heinrich2021privatedrop}.

% Long Version
% Protocols like AirDrop \cite{AirDrop}, PrivateDrop \cite{heinrich2021privatedrop}, and WTSB \cite{wu2016privacy} have introduced encryption and authentication techniques to enhance privacy during service discovery. These protocols strengthen mutual authentication and anonymity, yet they still exhibit weaknesses to tracking, MitM, and DoS attacks due to incomplete privacy features, such as selective attribute disclosure and multi-show unlinkability \cite{stute2019billion, bai2016staying}. For instance, AirDrop and PrivateDrop’s reliance on client and server certificates can be exploited by attackers to link sessions and track users across interactions \cite{heinrich2021privatedrop}.

Yang et al. introduced PriSrv~\cite{yang2024prisrv}, a private SD protocol that allows service providers and clients to define fine-grained access control policies, enabling mutual authentication while concealing private information. PriSrv leverages Anonymous Credential-based Matchmaking Encryption (ACME) to support bilateral policy control, selective attribute disclosure, and multishow unlinkability. However, its large message size results in high transmission overhead and reception delays, limiting its effectiveness in low-bandwidth networks like BLE and congested Wi-Fi. Additionally, the exposure of public attributes may lead to tracing and profiling attacks.

% Long Version
% Yang et al. introduced PriSrv\cite{yang2024prisrv}, a private service discovery protocol that enables service providers and clients to specify fine-grained access control policies, ensuring mutual authentication while concealing private information. PriSrv incorporates Anonymous Credential-based Matchmaking Encryption (ACME) to achieve bilateral policy control, selective attribute disclosure, and multishow unlinkability. PriSrv’s primary limitation is its large message size, which leads to high transmission overhead and reception delays, particularly constraining its usage in slower networks like BLE and congested Wi-Fi channels. This scalability challenge affects the protocol’s efficiency and performance. Another limitation is its leakage of public attributes, which may result in tracing and profiling attacks.

Therefore, there is a critical need for private service discovery protocols that provide stronger privacy protections and enhanced usability, a gap that PriSrv+ is designed to fill.

\subsection{Matchmaking Encryption (ME)}
\label{subsec:ME-Review}

Matchmaking Encryption (ME) was introduced by Ateniese et al.~\cite{ateniese2019match} in CRYPTO'19 as a new encryption paradigm enabling both sender and receiver to specify policies that must be mutually satisfied for successful decryption. In ME, the sender with identity or attribute $\sigma$ defines a policy $\mathbb{R}$, and the receiver with $\rho$ defines $\mathbb{S}$; decryption succeeds only if $\sigma$ satisfies $\mathbb{S}$ and $\rho$ satisfies $\mathbb{R}$. Ateniese et al. also instantiated Identity-Based Matchmaking Encryption (IBME) in the random oracle model, where equality-based identities are used, and sender authentication is achieved via embedded encryption keys.

% Long Version
% Matchmaking encryption (ME) was first introduced by Ateniese et al. \cite{ateniese2019match} in CRYPTO'19 as a new encryption paradigm where both the sender and receiver can define policies that must be satisfied for the message to be decrypted. In ME \cite{ateniese2019match}, a sender, associated with an identity or attribute $\sigma$, specifies a target identity or policy $\mathbb{R}$ during encryption, while a receiver, associated with identity or attribute $\rho$, sets a target identity or policy $\mathbb{S}$. Successful decryption occurs only if both the sender's $\sigma$ satisfies the receiver's $\mathbb{S}$ and the receiver's $\rho$ satisfies the sender's $\mathbb{R}$. Ateniese et al. also instantiated Identity-Based Matchmaking Encryption (IBME) in the random oracle model, where sender and receiver identities are specified as equality-based policies, and a sender's encryption key is embedded in the ciphertext to authenticate the sender.

Francati et al.~\cite{Francati2021identity} extended IBME to the standard model using non-standard assumptions and NIZK proofs, while Chen et al.~\cite{chen2022identity} constructed IBME under standard assumptions. Despite providing data privacy and authenticity, these schemes are limited to equality-based policies and 1-to-1 data sharing.

% Long Version
% Later, Francati et al. \cite{Francati2021identity} in INDOCRYPT’21 developed an IBME under non-standard assumptions in the standard model using non-interactive zero-knowledge (NIZK) proofs.
% Chen et al. \cite{chen2022identity} in ASIACRYPT'22 further proposed an IBME scheme under standard assumptions in the standard model. 
% Although these IBME schemes ensure data privacy and authenticity, they are limited to equality-based policies and one-to-one data sharing. 

To support one-to-many data sharing in ME, Sun et al.~\cite{sun2023privacy} and Yang et al.~\cite{yang2023lightweight} in TIFS'23 proposed privacy-aware ME (PSME) and certificateless ME (CLME), respectively—extending IBME to multi-user settings via identity-based broadcast encryption. Wu et al.~\cite{wu2023fuzzy} introduced fuzzy IBME (FBME), enabling decryption when the overlap between sender and receiver attributes exceeds a threshold. However, FBME's threshold-based policies have limited expressiveness and incur high decryption costs.

% Long Version
% To enable one-to-many secure data sharing in ME, Sun et al. \cite{sun2023privacy} in TIFS'23 proposed a privacy-aware ME scheme (PSME), and Yang et al. \cite{yang2023lightweight} in TIFS'23 introduced a certificateless ME scheme (CLME), both of which extend IBME to multi-user scenarios using identity-based broadcast encryption. Additionally, Wu et al. \cite{wu2023fuzzy} in TIFS'23 presented a fuzzy IBME (FBME) that supports fuzzy bilateral access control, allowing decryption if the overlap between the sender's and receiver's attribute sets exceeds a threshold. However, FBME supports threshold-based policy matching with limited expressiveness and incurs high computation costs during decryption. 

Recently, Yang et al.~\cite{yang2024prisrv} in NDSS'24 developed ACME, an anonymous credential-based ME scheme with flexible bilateral policy control. Despite its utility, ACME suffers from large ciphertext size and a small-universe construction that requires binary attribute vectors—leading to large vector sizes and increased computation. In contrast, FEME supports monotonic Boolean policies with an unrestricted attribute universe, allowing arbitrary strings as attributes. It also improves performance, reducing ciphertext size by 87.33\% and achieving up to 7.62$\times$ faster encryption and 6.23$\times$ faster decryption.

% Long Version
% More recently, Yang et al. \cite{yang2024prisrv} in NDSS'24 developed an anonymous credential-based matchmaking encryption (ACME) scheme that supports flexible bilateral policy control. While ACME offers valuable functionality, it is limited by large ciphertext size and a small-universe construction. Since the attribute sets used in ACME are denoted by binary attribute vectors, ACME requires large vectors to represent attributes in a wide range, leading to increased computation costs in both encryption and decryption. In comparison, FEME not only matches ACME's ability to support monotonic Boolean formulas as policies but also extends expressivity to an unrestricted attribute space, enabling any arbitrary string to be used as an attribute. Moreover, FEME significantly improves efficiency, reducing ciphertext size by 87.33$\%$, and achieving encryption and decryption speeds of 7.62$\times$ and 6.23$\times$ faster, respectively.

\section{Preliminary} \label{sec:Preliminary}

We present notations, bilinear pairing, access structure, linear secret sharing scheme, and partially hidden access structure, for constructing FEME and PriSrv+.

\subsection{Notation and Bilinear Pairing}

Let integers $m$ and $n$ satisfy $m<n$, with $[m,n]$ representing the set $\{m,m+1,...,n\}$, and $[n]$ denoting the set $\{1,...,n\}$. For a prime $p$, define $\mathbb{Z}_p$ as the set $\{0,1,...,p-1\}$, where addition and multiplication are performed modulo $p$. The set $\mathbb{Z}_p^*$ excludes 0 from $\mathbb{Z}_p$. The security parameter is denoted by $\lambda$. We use bold lowercase letters for vectors and bold uppercase letters for matrices. A vector $\mathbf{v}$ denotes a column vector by default, and $\mathbf{v}_k$ represents its $k$-th element. For a matrix $\mathbf{M}$, $\mathbf{M}_i$ is the $i$-th row, and $\mathbf{M}_{i,j}$ denotes the element at position $(i,j)$.

The notation $s \stackrel{\$}{\leftarrow} S$ indicates that $s$ is uniformly sampled from set $S$. The notation $y\leftarrow\textsf{Algo}(x)$ refers to the output $y$ after running algorithm $\textsf{Algo}$ on input $x$. An algorithm is probabilistic polynomial time (PPT) if it runs in polynomial time with respect to the input length. We assume a master public key is an implicit input to all algorithms.
A bilinear group with Type-III pairings is defined as $\mathcal{BG}=(\mathbb{G}_1,\mathbb{G}_2,\mathbb{G}_T,e,p)$, where there is no efficiently computable isomorphism between $\mathbb{G}_1$ and $\mathbb{G}_2$. For any $g_1\in\mathbb{G}_1$ and $g_2\in\mathbb{G}_2$, the pairing $e(g_1,g_2)$ maps to $\mathbb{G}_T$. For $a,b\stackrel{\$}{\leftarrow}\mathbb{Z}_p^*$, one has $e(g_1^a,g_2^b)=e(g_1,g_2)^{ab}$. 
% The proposed FEME scheme utilizes bilinear maps, with security proven in the generic group model, rather than relying on specific hardness assumptions in pairing-based cryptography.

% \begin{definition}[Computational Bilinear Diffie-Hellman (CBDH) Assumption]
%     Let $g_1$, $g_2$ be the generators of groups $\mathbb{G}_1$, $\mathbb{G}_2$, respectively. The CBDH assumption holds if for all PPT adversary $\mathcal{A}$, the advantage function $\textsf{Adv}_{\mathcal{A}}^{CBDH}:=\textsf{Pr}[\mathcal{A}(g_1,g_1^a,g_1^b,g_1^c,g_2,g_2^a,g_2^b,g_2^c)=e(g_1,g_2)^{abc}]$ is negligible, where $a,b,c\stackrel{\$}{\leftarrow}\mathbb{Z}_p^*$. 
% \end{definition}

\subsection{Access Structure}

\begin{definition}[Access Structure\cite{beimel1996secure}] 
Let $\{P_1,\cdots,P_n\}$ be a set of parties. A collection $\mathbb{A}\subseteq 2^{\{P_1,\cdots,P_n\}}$ is monotone if $\forall B,C$: if $B\in\mathbb{A}$ and $B\subseteq C$, then $C\in\mathbb{A}$. An access structure (respectively, monotone access structure) is a collection (respectively, monotone collection) $\mathbb{A}$ of non-empty subsets of $\{P_1,\cdots,P_n\}$, i.e., $\mathbb{A}\subseteq 2^{\{P_1,\cdots,P_n\}}\backslash\{\emptyset\}$. The sets in $\mathbb{A}$ are called authorized sets, and the sets not in $\mathbb{A}$ are called unauthorized sets.
\end{definition}\vspace{-1.5mm}

An access structure is said to be monotone if, for any two sets $S$ and $T$ of attributes, $S\subseteq T$ and $S$ being authorized imply that $T$ is also authorized. It ensures that any user possessing a set of attributes that satisfies the access policy continues to have access if additional attributes are granted.

% Our construction is designed for monotone access structures. General access structures can be supported by representing the negation of an attribute as a distinct attribute.

\subsection{Linear Secret Sharing Scheme (LSSS)}

\begin{definition}[Linear Secret Sharing Scheme (LSSS)\cite{beimel1996secure}] 
A secret sharing scheme $\Pi$ over a set of parties $\mathcal{P}$ is called linear (over $\mathbb{Z}_p$) if (1) the shares of each party form a vector over $\mathbb{Z}_p$. (2) there exists a matrix $\mathbf{A}$ with $m$ rows and $n$ columns called the share-generating matrix for $\Pi$. For all $i=1,\cdots,m$, the $i$-th row of $\mathbf{A}$ is labeled by a party $\rho(i)$ ($\rho$ is a function from $\{1,\cdots,m\}$ to $\mathcal{P}$). When we consider the column vector $v=(s,r_2,\cdots,r_n)$, where $s\in\mathbb{Z}_p$ is the secret to be shared, and $r_2,\cdots,r_n\in\mathbb{Z}_p$ are randomly chosen, then $\mathbf{A}v$ is the vector of $m$ shares of the secret $s$ according to $\Pi$. The share $(\mathbf{A}v)_i$ belongs to party $\rho(i)$.
\end{definition}\vspace{-1.5mm}

LSSS possesses the linear reconstruction property~\cite{beimel1996secure}. Let $\Pi$ be an LSSS for the access structure $\mathbb{A}$, and $S\in\mathbb{A}$ be an authorized set with $I\subset\{1,\cdots,m\}$, where $I=\{i|\rho(i)\in S\}$. There exists a set of constants $\{\omega_i\in\mathbb{Z}_p\}_{i\in I}$ such that, given any valid shares $\{\lambda_i\}$ of a secret $s$ in $\Pi$, the relationship $\sum\nolimits_{i\in I}\omega_i\lambda_i=s$ holds. Let $A_i$ be the $i$-th row of $\mathbf{A}$, we similarly have $\sum\nolimits_{i\in I}\omega_iA_i=(1,0,\cdots,0)$. These constants $\{\omega_i\}$ are computable in time polynomial \cite{beimel1996secure} in the size of $\textbf{A}$. Notably, constants $\{\omega_i\}$ cannot be constructed for unauthorized sets.

\textbf{Boolean Formulas}. Boolean formulae are a common way
to model access control. LSSS is a more general class of functions and
include Boolean formulas. Using established methods \cite{beimel1996secure}, any monotone Boolean formula can be transformed into an LSSS format. Such a formula can be structured as an access tree, where an access tree with $m$ nodes yields an LSSS matrix of $m$ rows.

% \textbf{Monotone span programs (MSP)} (or linear secret sharing schemes) are a more general method for expressing access structures than Boolean formulas \cite{lewko2011unbounded}. An access structure is encoded by a policy $(\mathbf{M},\pi)$, where $\mathbf{M}$ is an $\ell \times n$ matrix over $\mathbb{Z}_p$ and $\pi:\{1,...,\ell\}\rightarrow\mathcal{U}$ is a mapping function. Any (monotone) Boolean formula $F$ can be converted into an MSP $(\mathbf{M},\pi)$, where each row of $\mathbf{M}$ corresponds to an input in $F$, and the number of columns equals the number of \textbf{AND} gates in $F$. Each entry in $\mathbf{M}$ is either 0, 1, or -1.

% Given an attribute set $\mathcal{S}=\{\Psi_i\}_{i\in[m]}\subseteq\mathcal{U}$ containing $m$ attributes, define $I=\{i|i\in\{1,...,\ell\},\pi(i)\in\mathcal{S}\}$ as the set of rows in $\mathbf{M}$ corresponding to attributes in $\mathcal{S}$. The policy $(\mathbf{M},\pi)$ is satisfied by $\mathcal{S}$ if there is a linear combination of rows in $I$ that results in $(1,0,...,0)$. That is, there exist constants $\gamma_i\in\mathbb{Z}_p$ for $i\in I$ such that $\sum_{i\in I}\gamma_i\M_i=(1,0,...,0)$. These constants can be computed in polynomial time relative to the size of $\mathbf{M}$.

\subsection{Partially Hidden Access Structure}
\label{subsec:partialhidden}

In a partially hidden access structure~\cite{lai2012expressive}, attributes are divided into attribute names and attribute values, where only attribute names are exposed, while attribute values remain hidden. For example, consider an access policy that requires ``Role: Admin \textbf{AND} Department: Research \textbf{OR} Level: Confidential" to access certain data, where ``Role", ``Department", and ``Level" are attribute names, and ``Admin", ``Research", and ``Confidential" are attribute values. In a partially hidden access structure, the policy is transformed to ``Role \textbf{AND} Department \textbf{OR} Level", revealing attribute names only. 
This contrasts with a traditional access structure, where attributes are exposed in the policy.

We first define the structures of an attribute set and an access policy. Let the attribute set be $\mathcal{S}=\{u_i\}_{i\in[\ell]}$ containing $\ell$ attributes, where each attribute belongs to a unique category. Each attribute is denoted as $u_i = \langle n_i, v_i\rangle$, with $n_i$ representing the attribute name and $v_i$ the attribute value. An access policy is defined as $\mathbb{A} = (\M, \pi, \mathcal{T})$, where $\M$ is an $m \times n$ access control matrix, $\M_i$ is the $i$-th row of $\M$, and $\pi$ is a mapping function that associates each row $\M_i$ with an attribute $\pi(i)$. The policy $\mathcal{T}$ is expressed as $(\Psi_{\pi(1)}, \cdots, \Psi_{\pi(m)})$, where each $\Psi_{\pi(i)} = \langle n_{\pi(i)}, v_{\pi(i)}\rangle$ consists of a name $n_{\pi(i)}$ and value $v_{\pi(i)}$.

In a partially hidden attribute set, attribute values $v_i$ are concealed, leaving only attribute names $n_i$ visible. The resulting attribute set is modified to $\mathcal{S}_{\text{partial}}=\{n_i\}_{i\in[\ell]}$. Similarly, in a partially hidden access policy, the attribute values $v_{\pi(i)}$ are removed from $\mathcal{T}$, exposing only the attribute names. The modified policy is represented as $\mathbb{A}_{\text{partial}} = (\M, \pi, \mathcal{T}_{\text{name}})$, where $\mathcal{T}_{\text{name}} = (n_{\pi(1)}, \cdots, n_{\pi(m)})$. It conceals attribute values to enhance privacy while using attribute names for efficient policy matching. We define partial satisfaction $\mathcal{S}_{\text{partial}}\models\mathbb{A}_{\text{partial}}$ if the attribute names in $\mathcal{S}_{\text{partial}}$ match those in $\mathbb{A}_{\text{partial}}$. Full satisfaction ($\mathcal{S}\models\mathbb{A}$) requires matching both names and values, whereas partial satisfaction only matches attribute names.

\section{Fast and Expressive Matchmaking Encryption (FEME)} 
\label{sec:FEME}

We construct FEME, a fast and expressive matchmaking encryption scheme, as the core component of PriSrv+. It is also of independent interest for advancing ME techniques.

\subsection{Technical Roadmap}

In an ME system, both sender and receiver, each possessing a set of attributes, define access policies that the other must meet to decrypt any message. FEME features privacy-preserving policy matching and user anonymity. We leverage Attribute-Based Encryption (ABE)~\cite{bethencourt2007ciphertext,chase2007multi} with expressive access policies to enable bilateral matching of the policies of both sender and receiver. ABE is available in two forms: ciphertext-policy ABE (CP-ABE) and key-policy ABE (KP-ABE), both essential for building FEME.

\begin{figure}[ht]
  \centering  \includegraphics[width=.98\linewidth]{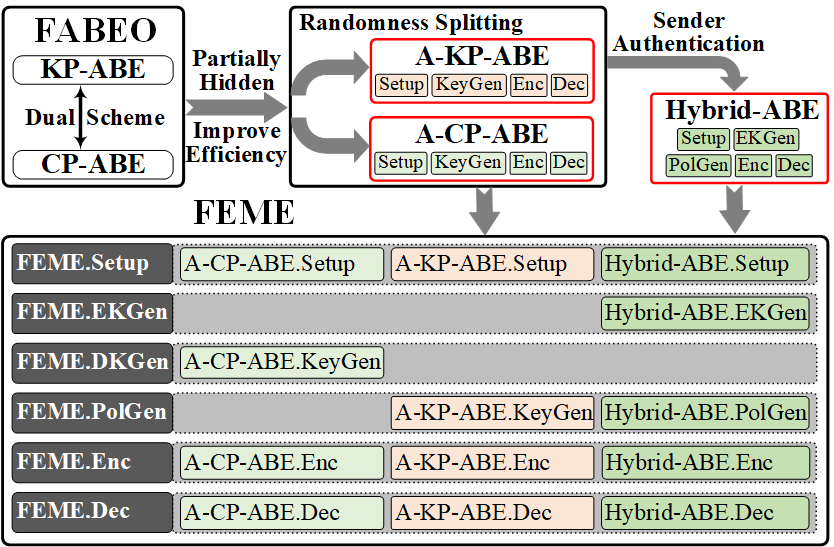}  
  \caption{Technical Roadmap of FEME} 
  \label{fig:FEME-Roadmap}
\end{figure}

The design of FEME, as shown in Fig.~\ref{fig:FEME-Roadmap}, follows a structured, multi-stage roadmap that enhances existing ABE schemes to address privacy and efficiency challenges. We build on FABEO~\cite{riepel2022fabeo}, a dual-form KP-ABE and CP-ABE scheme\footnote{Both schemes control access by matching attributes to policies, but they reverse the roles of the ciphertext and key in defining access control. They share the same design mechanism and common parameters.} that supports expressive policies without restrictions on policy type or attribute range. However, while FABEO excels in policy expressiveness, it lacks privacy-preserving policy matching or anonymity. FABEO’s CP-ABE exposes access policies with plaintext attribute values, and its KP-ABE reveals attribute values in attribute sets. Moreover, FABEO's decryption incurs high computation overhead due to pairing and exponentiation operations that scale with policy complexity.

FEME addresses these limitations in three distinct stages. Stage 1 enhances FABEO's KP-ABE and CP-ABE schemes, creating anonymous versions (A-KP-ABE in Fig.~\ref{fig:A-KP-ABE} and A-CP-ABE in Fig.~\ref{fig:A-CP-ABE}), that hide attribute values in attribute sets and access policies, greatly improving computational efficiency.
Stage 2 introduces Hybrid-ABE (Fig.~\ref{fig:Hybrid-ABE}), bridging the gap between ME with CP-ABE/KP-ABE and supporting bilateral policy-matching and \textit{sender authentication}. Stage 3 integrates A-KP-ABE, A-CP-ABE, and Hybrid-ABE schemes to create FEME, an ME that enhances both privacy and efficiency.

\subsection{Novelty of FEME}

\textbf{Distinct from Existing ABE-Based Solutions}.
FEME achieves bilateral access control, bilateral anonymity, and sender authentication—features not simultaneously supported by existing ABE-based schemes such as FABEO~\cite{riepel2022fabeo}, FEASE~\cite{meng2024fease}, and FABESA~\cite{meng2024fabesa}. These works optimize ABE efficiency or enable unilateral anonymity for searchable encryption but lack bilateral policy matching and sender authentication, both of which are critical for privacy-preserving service discovery.
% as required by standards like GDPR~\cite{gdpr}, ISO/IEC 29184~\cite{iso29184}, and RFC 7258~\cite{rfc7258}.

\textbf{Technical Challenges and Innovations}.
Designing an ME scheme for real-time service discovery presents significant challenges beyond traditional ABE systems. Existing ABE schemes, including combinations of CP-ABE and KP-ABE, cannot enforce \textit{sender authentication}—a critical requirement in ME. Without sender authentication, malicious entities can forge ciphertexts with fabricated attributes, enabling impersonation, spoofing, and injection attacks that threaten both security and availability. Prior techniques such as partial policy hiding and randomness splitting, used in works like FEASE and FABESA, fall short of defending against these advanced threats in bilateral settings.

To overcome these limitations, FEME introduces a novel \textit{double re-randomization and binding} technique that ensures both efficiency and robust sender authentication. The \textit{first-level re-randomization} randomizes encryption key components using a shared factor, preventing adversaries from extracting or reusing the sender’s encryption key. The \textit{second-level re-randomization} applies additional independent randomness to specific ciphertext components while enforcing a constraint across them. This enforces \textit{binding} between encryption key-derived components and other ciphertext elements, ensuring they cannot be mixed or tampered with to create forged messages. Moreover, the second-level re-randomization conceals sensitive attribute values to resist attribute guessing attacks. Together, these mechanisms provide strong protection against impersonation, ciphertext injection, and attribute guessing attacks—enabling secure, private, and authenticated service discovery in adversarial settings.

\subsection{Technical Details}

Following the above roadmap, we transform FABEO into a privacy-preserving and efficient ME scheme.

% As illustrated in Fig. \ref{fig:FEME-Roadmap}, the design of FEME proceeds in multiple stages. We begin with FABEO\cite{riepel2022fabeo}, a fast KP-ABE scheme and a fast CP-ABE scheme designed in a dual form\footnote{Both schemes control access by matching attributes to policies, but they reverse the roles of the ciphertext and key in defining access control. They share the same design mechanism and common parameters.}, which supports expressive policies without restrictions on policy type or attributes. However, the FABEO schemes do not inherently provide privacy-preserving policy matching or anonymity. In FABEO CP-ABE, the access policy is explicitly embedded in ciphertext, while in its KP-ABE, the attribute set is exposed. 
% Moreover, the decryption process in FABEO incurs significant computational overhead due to its pairing and exponentiation operations, which scale linearly with the complexity of the involved access policy and random vector.

% This stage-by-stage breakdown illustrates the technical roadmap for FEME (in Fig. \ref{fig:FEME-Roadmap}). The construction begins with transforming FABEO KP-ABE and CP-ABE into efficiency-enhanced anonymous variants A-CP-ABE and A-KP-ABE in Stage 1. In Stage 2, Hybrid-ABE bridges the gap between ME and CP-ABE/KP-ABE. Finally, Stage 3 culminates with the design of FEME, integrating the three schemes.

\textit{\textbf{Stage 1}}. We create A-CP-ABE and A-KP-ABE as anonymous variants of FABEO's CP-ABE and KP-ABE, respectively, using the following techniques. 

\textit{(1) Partially Hidden Access Structure}. To balance privacy and efficiency, we adopt a \textit{partially hidden access structure} ($\S$\ref{subsec:partialhidden}) that separates each attribute into a visible \textit{attribute name} and a concealed \textit{attribute value}, protecting sensitive information. Since attribute names—visible in A-KP-ABE or A-CP-ABE ciphertexts—are typically less sensitive, this design enables significant efficiency gains. Our A-CP-ABE and A-KP-ABE constructions minimize costly pairing and exponentiation operations, accelerating policy matching and improving suitability for resource-constrained environments.

\begin{figure}[h]
\scriptsize
\centering
    \begin{mybox}[colback = white, width = .98\linewidth]{A-KP-ABE: Anonymous KP-ABE}
      \underline{$\textsf{Setup}(1^{\lambda})\rightarrow(\textsf{mpk},\textsf{msk})$}.
      Generate $\mathcal{G}:=(p,\mathbb{G}_1,\mathbb{G}_2,\mathbb{G}_T,e,g_1,$ $g_2)$. Pick $\alpha,b_1,b_2\stackrel{\$}{\leftarrow}\mathbb{Z}_p^*$ and a hash function $H:\{0,1\}^*\rightarrow\mathbb{G}_1$. Compute $Z=e(g_1,g_2)^{\alpha}$, $\delta_1=g_2^{b_1}$, $\delta_2=g_2^{b_2}$.
      Output the master public key
      $\textsf{mpk}:=(\mathcal{G},H,Z,\delta_1,\delta_2)$ and master secret key $\textsf{msk}:=(\alpha,b_1,b_2)$.
      \vspace{1pt}
      
      \underline{$\textsf{KeyGen}(\textsf{msk},\mathbb{A}=(\textbf{A},\rho,\{\Psi_{\rho(i)}\}_{i\in[m]}))\rightarrow\textsf{SK}_{\mathbb{A}}$}.       
      Remind that $\{\Psi_{\rho(i)}\}_{i\in[m]}=\{\langle n_{\rho(i)},v_{\rho(i)}\rangle\}_{i\in[m]}$.
      Pick $r'\stackrel{\$}{\leftarrow}\mathbb{Z}_p^*$ and $\textbf{y}\stackrel{\$}{\leftarrow}\mathbb{Z}_p^{n-1}$. Compute      $\textsf{sk}_1=g_2^{r'}$, $\textsf{sk}_{2,i}=\big(g_1^{\textbf{A}_i(\alpha||\textbf{y})^{\top}}\cdot H(\Psi_{\rho(i)})^{r'}\big)^{\frac{1}{b_1}}$,      $\textsf{sk}_{3,i}=\big(g_1^{\textbf{A}_i(\alpha||\textbf{y})^{\top}}\cdot H(\Psi_{\rho(i)})^{r'}\big)^{\frac{1}{b_2}}$.      
      Output 
      $\textsf{SK}_{\mathbb{A}}:=((\textbf{A},\rho,\{n_{\rho(i)}\}_{i\in[m]}),\textsf{sk}_1,\{\textsf{sk}_{2,i},\textsf{sk}_{3,i}\}_{i\in[m]})$. 
      \vspace{2pt}

      \underline{$\textsf{Enc}(\mathcal{S}=\{u_i\}_{i\in[\ell]}=\{\langle n_i,v_i\rangle\}_{i\in[\ell]},\textsf{msg})\rightarrow\textsf{CT}_{\mathcal{S}}$}. Pick $s',s''\stackrel{\$}{\leftarrow}\mathbb{Z}_p^*$. Let $s=s'+s''$. Compute
      $\textsf{ct}_0=e(g_1,g_2)^{\alpha s}\cdot\textsf{msg}$, $\textsf{ct}_{1,i}=H(u_i)^{s}$, $\textsf{ct}_2=\delta_1^{s'}$, $\textsf{ct}_3=\delta_2^{s''}$.      
      Output 
      $\textsf{CT}_{\mathcal{S}}:=(\{n_i\}_{i\in[\ell]},\textsf{ct}_0,\{\textsf{ct}_{1,i}\}_{i\in[\ell]},\textsf{ct}_2,\textsf{ct}_3).$ 
      
      \underline{$\textsf{Dec}(\textsf{SK}_{\mathbb{A}},\textsf{CT}_{\mathcal{S}})\rightarrow \textsf{msg}/\bot$}. If there is any subset $I$ that matches the attribute names $\{n_i\}_{i\in[\ell]}$ in $\CT$ with $(\textbf{A},\rho,\{n_{\rho(i)}\}_{i\in[m]})$ in $\SK$, there exist constants $\{\omega_i\}_{i\in I}$ s.t. $\sum_{i\in I} \omega_i\textbf{A}_i=(1,0,\cdots,0)$. Output  $$\textsf{msg}=\frac{\textsf{ct}_0\cdot e(\prod_{i\in I}(\textsf{ct}_{1,\rho(i)})^{\omega_i},\textsf{sk}_1)}{e(\prod\nolimits_{i\in I}(\textsf{sk}_{2,i})^{\omega_i},\textsf{ct}_2)\cdot e(\prod\nolimits_{i\in I}(\textsf{sk}_{3,i})^{\omega_i},\textsf{ct}_3)}.$$
    \end{mybox}
\caption{A-KP-ABE Scheme} 
\label{fig:A-KP-ABE}
\end{figure}

\begin{figure}[h]
\scriptsize
\centering
    \begin{mybox}[colback = white, width = .98\linewidth]{A-CP-ABE: Anonymous CP-ABE}
      \underline{$\textsf{Setup}(1^{\lambda})\rightarrow(\textsf{mpk},\textsf{msk})$}.
      Generate $\mathcal{G}:=(p,\mathbb{G}_1,\mathbb{G}_2,\mathbb{G}_T,e,g_1,$ $g_2)$. Pick $\alpha\stackrel{\$}{\leftarrow}\mathbb{Z}_p^*$, $h\stackrel{\$}{\leftarrow}\mathbb{G}_1$ and a hash function $H:\{0,1\}^*\rightarrow\mathbb{G}_1$. Compute $Z=e(g_1,g_2)^{\alpha}$.
      Output the master public key
      $\textsf{mpk}:=(\mathcal{G},H,Z,h)$ and master secret key $\textsf{msk}:=\alpha$.
      
     \underline{$\KeyGen(\textsf{msk},\mathcal{S}=\{u_i\}_{i\in[\ell]}=\{\langle n_i,v_i\rangle\}_{i\in[\ell]})\rightarrow\textsf{SK}_{\mathcal{S}}$}. Pick $r\stackrel{\$}{\leftarrow}\mathbb{Z}_p^*$. Computes 
     $\textsf{sk}_1=g_1^{\alpha}h^r$, $\textsf{sk}_3=g_2^r$, $\textsf{sk}_{2,i}=H(u_i)^r\text{~for~}i\in[\ell]$.
      Output 
      $\textsf{SK}_{\mathcal{S}}:=(\{n_i\}_{i\in[\ell]},\textsf{sk}_1,\{\textsf{sk}_{2,i}\}_{i\in\ell},\textsf{sk}_3)$. 
      \vspace{2pt}

      \underline{$\textsf{Enc}(\mathbb{A}=(\textbf{M},\pi,\{\Psi_{\pi(i)}\}_{i\in[m]}),\textsf{msg})\rightarrow\textsf{CT}_{\mathbb{A}}$}.       
      Remind that $\{\Psi_{\pi(i)}\}_{i\in[m]}=\{\langle n_{\pi(i)},v_{\pi(i)}\rangle\}_{i\in[m]}$.   
      Pick $s_1,s'\stackrel{\$}{\leftarrow}\mathbb{Z}_p^*$ and vector $\textbf{v}\stackrel{\$}{\leftarrow}\mathbb{Z}_p^{n-1}$. Compute ciphertext
      $\textsf{ct}_0=e(g_1,g_2)^{\alpha s_1}\cdot\textsf{msg}$, $\textsf{ct}_1=g_2^{s_1}$, $\textsf{ct}_2=g_2^{s'}$, $\textsf{ct}_{3,i}=h^{\textbf{M}_i(s_1||\textbf{v})^{\top}}\cdot H(\Psi_{\pi(i)})^{s'}$. 
      
      Output
      $\textsf{CT}:=((\textbf{M},\pi,\{n_{\pi(i)}\}_{i\in[m]}),\textsf{ct}_0,\textsf{ct}_1,\textsf{ct}_2,\{\textsf{ct}_{3,i}\}_{i\in[m]}).$
      \vspace{2pt}
      
      \underline{$\textsf{Dec}(\textsf{SK}_{\mathcal{S}},\textsf{CT}_{\mathbb{A}})\rightarrow \textsf{msg}/\bot$}. If there is any subset $I$ that matches the attribute names $\{n_i\}_{i\in[\ell]}$ in $\SK$ with $(\textbf{M},\pi,\{n_{\pi(i)}\}_{i\in[m]})$ in $\CT$, there exist constants $\{\gamma_i\}_{i\in I}$ s.t. $\sum_{i\in I} \gamma_i\textbf{M}_i=(1,0,\cdots,0)$. Output  $$\textsf{msg}=\frac{\textsf{ct}_0\cdot e(\prod_{i\in I}(\textsf{ct}_{3,i})^{\gamma_i},\textsf{sk}_3)}{e(\textsf{sk}_1,\textsf{ct}_1)\cdot e(\prod\nolimits_{i\in I}(\textsf{sk}_{2,\pi(i)})^{\gamma_i},\textsf{ct}_2)}.$$
    \end{mybox}
\caption{A-CP-ABE Scheme} 
\label{fig:A-CP-ABE}
\end{figure}

\begin{figure}[h]
\scriptsize
\centering
    \begin{mybox}[colback = white, width = .98\linewidth]{Hybrid-ABE: Bridging CP-ABE and KP-ABE}
      \underline{$\textsf{Setup}(1^{\lambda})\rightarrow(\textsf{mpk},\textsf{msk})$}.
      Generate $\mathcal{G}:=(p,\mathbb{G}_1,\mathbb{G}_2,\mathbb{G}_T,e,g_1,$ $g_2)$. Pick $x,\mu,b_1,b_2\stackrel{\$}{\leftarrow}\mathbb{Z}_p^*$, $h\stackrel{\$}{\leftarrow}\mathbb{G}_1$ and a hash function $H:\{0,1\}^*\rightarrow\mathbb{G}_1$. Compute $Y=e(g_1,g_2)^{x\mu}$, $\delta_0=g_2^{\mu}$, $\delta_1=g_2^{b_1}$, $\delta_2=g_2^{b_2}$.
      Output the master public key
      $\textsf{mpk}:=(\mathcal{G},H,Y,h,\delta_0,\delta_1,\delta_2)$ and master secret key $\textsf{msk}:=(x,\mu,b_1,b_2)$.
      
     \underline{$\EKGen(\textsf{msk},\mathcal{S}_{\snd}=\{u_i\}_{i\in[\ell]}=\{\langle n_i,v_i\rangle\}_{i\in[\ell]})\rightarrow\textsf{EK}_{\mathcal{S}_{\snd}}$}. Pick $\tau\stackrel{\$}{\leftarrow}\mathbb{Z}_p^*$. Compute    $\textsf{ek}_1=g_1^xh^{\tau}$, $\textsf{ek}_3=\delta_1^{\tau}$, $\textsf{ek}_4=\delta_2^{\tau}$, $\textsf{ek}_{2,i}=H(u_i)^{\tau}$.
      Output 
      $\textsf{EK}_{\mathcal{S}_{\snd}}:=(\{n_i\}_{i\in[\ell]},\textsf{ek}_1,\{\textsf{ek}_{2,i}\}_{i\in\ell},\textsf{ek}_3,\textsf{ek}_4)$. 
      \vspace{2pt}

      \underline{$\PolGen(\mathbb{A}_{\rcv}=(\msk,\textbf{A},\rho,\{\Psi_{\rho(i)}\}_{i\in[m]}))\rightarrow\textsf{SK}_{\mathbb{A}_{\rcv}}$}.       
      % Remind that $\{\Psi_{\rho(i)}\}_{i\in[m]}=\{n_{\rho(i)},v_{\rho(i)}\}_{i\in[m]}$.   
      Pick $r'\stackrel{\$}{\leftarrow}\mathbb{Z}_p^*$ and $\textbf{y}\stackrel{\$}{\leftarrow}\mathbb{Z}_p^{n-1}$. Compute $\textsf{sk}_1=g_2^{r'}$, $\textsf{sk}_{2,i}=\big(h^{\textbf{M}_i(s_1||\textbf{y})^{\top}}\cdot H(\Psi_{\rho(i)})^{r'}\big)^{\frac{1}{b_1}}$, $\textsf{sk}_{3,i}=\big(h^{\textbf{M}_i(s_1||\textbf{y})^{\top}}\cdot H(\Psi_{\rho(i)})^{r'}\big)^{\frac{1}{b_2}}$.     
      Output
      $\textsf{SK}_{\mathbb{A}_{\rcv}}:=((\textbf{A},\rho,\{n_{\rho(i)}\}_{i\in[m]}),\textsf{sk}_1,\{\textsf{sk}_{2,i},\textsf{sk}_{3,i}\}_{i\in[m]}).$

      \underline{$\textsf{Enc}(\textsf{EK}_{\mathcal{S}_{\snd}},\textsf{msg})\rightarrow\textsf{CT}_{\snd}$}.    
      Pick $\tau',s',s''\stackrel{\$}{\leftarrow}\mathbb{Z}_p^*$. Let $s=s'+s''$. Compute       $\textsf{ct}_0=Y^{s}\cdot\textsf{msg}$, $\textsf{ct}_{1,i}=(\ek_{1,i}\cdot H(u_i)^{\tau'})^{s},$     
      $\ct_2=(\ek_2\cdot \delta_1^{\tau'})^{s'}$, $\ct_3=(\ek_2\cdot \delta_2^{\tau'})^{s''}$, $\ct_4=(\ek_4\cdot h^{\tau'})^{s}$.
      Output
      $\textsf{CT}_{\snd}:=(\{n_i\}_{i\in[\ell]},\textsf{ct}_0,\{\textsf{ct}_{1,i}\}_{i\in[\ell]},\textsf{ct}_2,\textsf{ct}_3,\textsf{ct}_4).$
      
      \underline{$\textsf{Dec}(\textsf{SK}_{\mathbb{A}_{\rcv}},\textsf{CT}_{\snd})\rightarrow \textsf{msg}/\bot$}. If there is any subset $I$ that matches $\{n_i\}_{i\in[\ell]}$ in $\CT_{\snd}$ with $(\A,\rho,\{n_{\rho(i)}\}_{i\in[m_2]})$ in $\SK_{\mathbb{A}_{\rcv}}$, there exist constants $\{\omega_i\}_{i\in I}$ s.t. $\sum_{i\in I} \omega_i\textbf{A}_i=(1,0,\cdots,0)$. Output  $$\textsf{msg}=\ct_0\cdot\frac{e(\prod_{i\in I}(\sk_{2,i})^{\omega_i},\ct_2)e(\prod_{i\in I}(\sk_{3,i})^{\omega_i},\ct_3)}{e(\ct_4,\delta_0)e(\prod\nolimits_{i\in I}(\ct_{1,\rho(i)})^{\omega_i},\sk_1)}.$$ 
    \end{mybox}
\caption{Hybrid-ABE Scheme} 
\label{fig:Hybrid-ABE}
\end{figure}

\textit{(2) Randomness Splitting Technique}. To address the vulnerability of attribute guessing attacks in the FABEO KP-ABE scheme, we implement a \textit{randomness splitting technique}. In the original FABEO KP-ABE scheme (see Fig. 1 in \cite{riepel2022fabeo}), the reuse of a single random value $s$ across ciphertext components $\ct_{1,u}=H(u)^s$ and $\ct_2=g_2^s$ makes it possible for an attacker to deduce an attribute $u$ by testing the equality $e(\ct_{1,u}, g_2)=e(H(u),\ct_2)$. To mitigate this risk, our A-KP-ABE scheme (see Fig. \ref{fig:A-KP-ABE}) splits the randomness $s$ into two independent values, $s'$ and $s''$, such that $s=s'+s''$. This adjustment modifies the ciphertext components as follows: $\ct_{1,i}=H(u_i)^{s}$, $\ct_2=\delta_1^{s'}$, and $\ct_3=\delta_2^{s''}$, where $\delta_1=g_2^{b_1}$ and $\delta_2=g_2^{b_2}$. To cancel the exponentiation $b_1$ and $b_2$, the decryption key includes components $\sk_{2,i}$ and $\sk_{3,i}$, which use exponentiation by $\frac{1}{b_1}$ and $\frac{1}{b_2}$. It ensures that the attribute set remains concealed, and the modified components $\ct_{1,i}$, $\ct_2$, and $\ct_3$ reveal no information for any attacker to infer attributes, thus effectively preventing attribute guessing attacks.

\begin{figure*}[bp]
\normalsize
% \small
\centering
\begin{mybox}[colback = white, width = \linewidth]{FEME: Fast and Expressive Matchmaking Encryption}
  \underline{1. $\textsf{Setup}(1^{\lambda})\rightarrow(\mpk,\msk)$. \grey{// System Setup}}
  
  This algorithm takes in the security parameter $1^{\lambda}$ and generates a bilinear pairing $\mathcal{G}:=(p,\mathbb{G}_1,\mathbb{G}_2,\mathbb{G}_T,e,g_1,g_2)$. The algorithm picks random numbers $\alpha,x,\mu,b_1,b_2\stackrel{\$}{\leftarrow}\mathbb{Z}_p^*$,  $h\stackrel{\$}{\leftarrow}\mathbb{G}_1$, hash functions $H:\{0,1\}^*\rightarrow\mathbb{G}_1$, $\hat{H}:\mathbb{G}_T\rightarrow\{0,1\}^{l_0}$, and a polynomial-time computable padding function $\phi:\{0,1\}^n\rightarrow\{0,1\}^{\ell_0}$. It computes  $Z=e(g_1,g_2)^{\alpha}, Y=e(g_1,g_2)^{x\mu}, \delta_0=g_2^{\mu}, \delta_1=g_2^{b_1}, \delta_2=g_2^{b_2}.$
  It outputs the master public key as
  $\textsf{mpk}:=(\mathcal{G},H,\hat{H},\phi,Z,Y,h,\delta_0,\delta_1,\delta_2)$, and the master secret key as $\msk:=(\alpha,x,\mu,b_1,b_2)$.
  \vspace{1mm}

  \underline{2. $\EKGen(\msk,\mathcal{S}_{\snd}=\{u_i\}_{i\in[\ell_1]}=\{\langle n_i,v_i\rangle\}_{i\in[\ell_1]})\rightarrow\EK_{\mathcal{S}_{\snd}}$. \grey{// Attribute Encryption Key Generation}}
  
  This algorithm generates the sender's attribute encryption key $\EK_{\mathcal{S}_{\snd}}$ for attributes $\mathcal{S}_{\snd}=\{u_i\}_{i\in[\ell_1]}=\{\langle n_i,v_i\rangle\}_{i\in[\ell_1]}$, where $n_i$ denotes the attribute name and $v_i$ the attribute value. It picks a random number $\tau\stackrel{\$}{\leftarrow}\mathbb{Z}_p^*$ and computes as follows:     $\ek_{1,i}= H(u_i)^{\tau}\text{~for~}i\in[\ell_1],~\ek_2=\delta_1^{\tau},~\ek_3=\delta_2^{\tau},~\ek_4=g_1^xh^{\tau}.$
  It outputs the sender attribute encryption key   $\EK_{\mathcal{S}_{\snd}}:=(\{n_i\}_{i\in[\ell_1]},\{\ek_{1,i}\}_{i\in[\ell_1]},\ek_2,\ek_3,\ek_4)$. 
  \vspace{1mm}

  \underline{3. $\DKGen(\msk,\mathcal{S}_{\rcv}=\{u_i\}_{i\in[\ell_2]}=\{\langle n_i,v_i\rangle\}_{i\in[\ell_2]})\rightarrow\DK_{\mathcal{S}_{\rcv}}$. \grey{// Attribute Decryption Key Generation}}

  To generate the receiver's attribute decryption key $\DK_{\mathcal{S}_{\rcv}}$ for attributes $\mathcal{S}_{\rcv}=\{u_i\}_{i\in[\ell_2]}=\{\langle n_i,v_i\rangle\}_{i\in[\ell_2]}$, this algorithm picks a random number $r\stackrel{\$}{\leftarrow}\mathbb{Z}_p^*$ and computes as follows:    $\dk_1=g_1^{\alpha}h^r,~\dk_{2,i}=H(u_i)^r, ~\dk_3=g_2^r.$
  It outputs the receiver attribute decryption key 
  $\DK_{\mathcal{S}_{\rcv}}:=(\{n_i\}_{i\in[\ell_2]},\dk_1,\{\dk_{2,i}\}_{i\in[\ell_2]},\dk_3)$. 
  \vspace{5pt}

  \underline{4. $\PolGen(\msk,\mathbb{A}_{\rcv}=(\A,\rho,\{\Psi_{\rho(i)}\}_{i\in[m_2]}))\rightarrow\SK_{\mathbb{A}_{\rcv}}$. \grey{// Policy Decryption Key Generation}}

  This receiver's policy decryption key generation algorithm generates the secret key $\sk_{\mathbb{A}_{\rcv}}$ with receiver's monotone span policy $\mathbb{A}_{\rcv}=(\A,\rho,\{\Psi_{\rho(i)}\}_{i\in[m_2]}))$, where $\A$ is an $m_2 \times n_2$ access control matrix, $\{\Psi_{\rho(i)}\}_{i\in[m_2]}=\{\langle n_{\rho(i)},v_{\rho(i)}\rangle\}_{i\in[m_2]}$, $n_{\rho(i)}$ denotes attribute name and $v_{\rho(i)}$ attribute value. It picks a random number $r'\stackrel{\$}{\leftarrow}\mathbb{Z}_p^*$, a random vector $\sfy\stackrel{\$}{\leftarrow}\mathbb{Z}_p^{n_2-1}$ and computes as follows:  
  $$\sk_1=g_2^{r'},~~~\sk_{2,i}=(g_1^{\A_i(\alpha||\sfy)^{\top}}\cdot H(\Psi_{\rho(i)})^{r'})^{\frac{1}{b_1}},~~~\sk_{3,i}=(g_1^{\A_i(\alpha||\sfy)^{\top}}\cdot H(\Psi_{\rho(i)})^{r'})^{\frac{1}{b_2}},$$
  $$\sk_{4,i}=(h^{\A_i(\mu||\sfy)^{\top}}\cdot H(\Psi_{\rho(i)})^{r'})^{\frac{1}{b_1}},~~~\sk_{5,i}=(h^{\A_i(\mu||\sfy)^{\top}}\cdot H(\Psi_{\rho(i)})^{r'})^{\frac{1}{b_2}},~\text{for~each~row}~i\in[m_2].$$  
  
  It outputs the receiver policy decryption key 
  $\SK_{\mathbb{A}_{\rcv}}:=((\A,\rho,\{n_{\rho(i)}\}_{i\in[m_2]}),\sk_1,\{\sk_{2,i},\sk_{3,i},\sk_{4,i},\sk_{5,i}\}_{i\in[m_2]})$. 
  \vspace{5pt}
  
  \underline{5. $\Enc(\EK_{\mathcal{S}_{\snd}},\mathbb{A}_{\snd}=(\M,\pi,\{\Psi_{\pi(i)}\}_{i\in[m_1]}),\msg)\rightarrow\CT_{\snd}$. \grey{// Encrypt}}

  This algorithm encrypts a message $\msg\in\{0,1\}^n$ with sender's monotone span policy $\mathbb{A}_{\snd}=(\M,\pi,\{\Psi_{\pi(i)}\}_{i\in[m_1]})$ and sender attribute encryption key   $\EK_{\mathcal{S}_{\snd}}$, where $\{\Psi_{\pi(i)}\}_{i\in[m_1]}=\{\langle n_{\pi(i)},v_{\pi(i)}\rangle\}_{i\in[m_1]}$ and matrix $\M\in\mathbb{Z}^{m_1\times n_1}$. It selects $s_1,s_2',s_2'',s_3',s_3'',\tau'\stackrel{\$}{\leftarrow}\mathbb{Z}_p^*$, a vector $\sfv\stackrel{\$}{\leftarrow}\mathbb{Z}_p^{n_1-1}$. Let $s_2=s_2'+s_2''$ and $s_3=s_3'+s_3''$. It computes as follows:
  $$V=Z^{s_1+s_2}\cdot Y^{s_3},~~~\ct_0=\phi(\msg)\oplus\hat{H}(V),~~~\ct_1=g_2^{s_1},~~~\ct_2=g_2^{s_3},$$  $$\ct_{3,i}=h^{\M_i(s_1||\sfv)^{\top}}\cdot H(\Psi_{\pi(i)})^{s_3}~\text{for~each~row}~i\in[m_1],~~~\ct_{4,1}=\delta_1^{s_2'},~~\ct_{4,2}=\delta_2^{s_2''},$$  $$~~~\ct_{5,i}=H(u_i)^{s_2},~~\ct_{6,i}=(\ek_{1,i}\cdot H(u_i)^{\tau'})^{s_3}\text{~for~}i\in[\ell_1],~~~\ct_7=(\ek_2\cdot \delta_1^{\tau'})^{s_3'},~~\ct_8=(\ek_3\cdot \delta_2^{\tau'})^{s_3''},~~\ct_9=(\ek_4\cdot h^{\tau'})^{s_3}.$$  
  It outputs the ciphertext 
  $$\CT_{\snd}:=((\M,\pi,\{n_{\pi(i)}\}_{i\in[m_1]}), \{n_i\}_{i\in[\ell_1]}, \ct_0,\ct_1,\ct_2,\{\ct_{3,i}\}_{i\in[m_1]},\ct_{4,1},\ct_{4,2},\{\ct_{5,i},\ct_{6,i}\}_{i\in[\ell_1]},\ct_7,\ct_8,\ct_9).$$ 
  
  \underline{6. $\Dec(\DK_{\mathcal{S}_{\rcv}},\SK_{\mathbb{A}_{\rcv}},\CT_{\snd})\rightarrow \msg/\bot$: \grey{// Decrypt}}
  
  This algorithm decrypts a given ciphertext $\CT_{\snd}$ using $\DK_{\mathcal{S}_{\rcv}}$ and $\SK_{\mathbb{A}_{\rcv}}$. If $\mathcal{S}_{\rcv}\models\mathbb{A}_{\snd}$ (denoting that $\mathcal{S}_{\rcv}$ satisfies $\mathbb{A}_{\snd}$), there exist constants $\{\gamma_i\}_{i\in I_1}$ s.t. $\sum_{i\in I_1} \gamma_i\M_i=(1,0,\cdots,0)$. If $\mathcal{S}_{\snd}\models\mathbb{A}_{\rcv}$ (denoting that $\mathcal{S}_{\snd}$ satisfies $\mathbb{A}_{\rcv}$), there exist constants $\{\omega_i\}_{i\in I_2}$ s.t. $\sum_{i\in I_2} \omega_i\A_i=(1,0,\cdots,0)$. This algorithm recovers $V$ by computing    
  $$\normalsize{V=\frac{e(\dk_1,\ct_1) e(\prod\nolimits_{i\in I_1}(\dk_{2,\pi(i)})^{\gamma_i},\ct_2)}{e(\prod_{i\in I_1}(\ct_{3,\pi(i)})^{\gamma_i},\dk_3)}\cdot \frac{e(\prod\nolimits_{i\in I_2}(\sk_{2,\rho(i)})^{\omega_i},\ct_{4,1}) e(\prod\nolimits_{i\in I_2}(\sk_{3,\rho(i)})^{\omega_i},\ct_{4,2})}{e(\prod_{i\in I_2}(\ct_{5,\rho(i)})^{\omega_i},\sk_1)}}$$  $$\cdot\frac{e(\ct_9,\delta_0)e(\prod\nolimits_{i\in I_2}(\ct_{6,\rho(i)})^{\omega_i},\sk_1)}{e(\prod_{i\in I_2}(\sk_{4,\rho(i)})^{\omega_i},\ct_7)e(\prod_{i\in I_2}(\sk_{5,\rho(i)})^{\omega_i},\ct_8)}.~~~~~~~~~~~~$$
  It computes $\phi(\msg)=\ct_0\oplus\hat{H}(V)$. If the padding is valid, this algorithm returns $\msg$. Otherwise, it returns $\bot$.
  % $$\scriptsize{\frac{\ct_0\cdot e(\prod_{i\in I_1}(\ct_{3,\pi(i)})^{\gamma_i},\dk_3)\cdot e(\prod_{i\in I_2}(\ct_{5,\rho(i)})^{\omega_i},\sk_1)\cdot e(\prod_{i\in I_2}(\sk_{4,\rho(i)})^{\omega_i},\ct_7)\cdot e(\prod_{i\in I_2}(\sk_{5,\rho(i)})^{\omega_i},\ct_8)}{e(\dk_1,\ct_1) e(\prod\nolimits_{i\in I_1}(\dk_{2,\pi(i)})^{\gamma_i},\ct_2)\cdot e(\prod\nolimits_{i\in I_2}(\sk_{2,\rho(i)})^{\omega_i},\ct_{4,1})\cdot e(\prod\nolimits_{i\in I_2}(\sk_{3,\rho(i)})^{\omega_i},\ct_{4,2})\cdot e(\ct_9,\delta_0)\cdot e(\prod\nolimits_{i\in I_2}(\ct_{6,\rho(i)})^{\omega_i},\sk_1)}.}$$ 
\end{mybox}
\caption{FEME: Fast and Expressive Matchmaking Encryption Scheme} 
\label{fig:FEME}
\end{figure*}

\textit{(3) Scalability and Efficiency Enhancement}. 
We take several steps to make FEME highly scalable and efficient.

% Long version
First, we construct large-universe A-KP-ABE and A-CP-ABE schemes that allow FEME to handle an extensive and potentially unbounded set of attributes dynamically. This approach reduces the need for pre-defining and managing fixed attribute sets, enhancing scalability and minimizing the overhead associated with system updates and attribute expansions. By replacing the term $H(|\mathcal{U}|+1)$ in FABEO with an element $h\stackrel{\$}{\leftarrow}\mathbb{G}_1$ in the master public key $\mpk$, A-KP-ABE and A-CP-ABE become independent of the size of the attribute universe $|\mathcal{U}|$. This modification eliminates dependence on a fixed set of attributes, thereby improving both scalability and efficiency, and allowing for a more flexible and adaptable system.

Second, we address the main efficiency bottleneck in FABEO CP-ABE, which stems from its ciphertext components. Specifically, its ciphertext components $\ct_{2,j}=g_2^{s'[j]}$ and $\ct_{3,i}=H(|\mathcal{U}|+1)^{\textbf{M}_i(s_1||\textbf{v})^{\top}}\cdot H(\Psi_{\pi(i)})^{s'[\zeta(i)]}$ involve a random vector $\vec{s'}\stackrel{\$}{\leftarrow}\mathbb{Z}_p^{\tau}$, where $\tau$ represents vector size and $\zeta(i):=|\{z|\pi(z)=\pi(i),z\leq i\}|$. This setup leads to decryption involving $\tau$ pairing operations and approximately $\tau I$ exponentiations, which becomes computationally expensive, where $I$ represents the number of attributes required to satisfy an access policy. To mitigate this, we propose modifications to the ciphertext components, simplifying them to $\textsf{ct}_2=g_2^{s'}$ and $\textsf{ct}_{3,i}=h^{\textbf{M}i(s_1||\textbf{v})^{\top}}\cdot H(\Psi_{\pi(i)})^{s'}$ in our developed A-CP-ABE (Fig. \ref{fig:A-CP-ABE}), where $s'\stackrel{\$}{\leftarrow}\mathbb{Z}_p$. This significantly reduces the decryption workload to a single pairing operation and $I$ exponentiations, resulting in a more efficient decryption.

Third, recognizing the dual structure between FABEO KP-ABE and FABEO CP-ABE, we apply the above optimization technique to A-KP-ABE (Fig. \ref{fig:A-KP-ABE}). We replace its secret key components $\textsf{sk}_{1,j}=g_2^{r'[j]}$, $\textsf{sk}_{2,i}=g_1^{\textbf{A}_i(\alpha||\textbf{y})^{\top}}\cdot H(\Psi_{\rho(i)})^{r'[\eta(i)]}$, where $\vec{r'}\stackrel{\$}{\leftarrow}\mathbb{Z}_p^{\tau}$, with $\textsf{sk}_1=g_2^{r'}$ and $\textsf{sk}_{2,i}=g_1^{\textbf{A}_i(\alpha||\textbf{y})^{\top}}\cdot H(\Psi_{\rho(i)})^{r'}$, using $r'\stackrel{\$}{\leftarrow}\mathbb{Z}_p$ in our developed A-KP-ABE. Here, $\eta(i)$ is defined as $|\{z|\rho(z)=\rho(i),z\leq i\}|$. Then, we utilize the \textit{randomness splitting technique} to split $\textsf{sk}_{2,i}$ into two terms $\textsf{sk}_{2,i}=\big(g_1^{\textbf{A}_i(\alpha||\textbf{y})^{\top}}\cdot H(\Psi_{\rho(i)})^{r'}\big)^{\frac{1}{b_1}}$,      
$\textsf{sk}_{3,i}=\big(g_1^{\textbf{A}_i(\alpha||\textbf{y})^{\top}}\cdot H(\Psi_{\rho(i)})^{r'}\big)^{\frac{1}{b_2}}$. These optimizations effectively reduce the computational overhead in both A-CP-ABE and A-KP-ABE, making them more suitable for real-world applications, particularly those in resource-constrained environments.

\renewcommand{\arraystretch}{1.05}
\begin{table*}[thbp]\centering 
	\footnotesize
	\setlength{\tabcolsep}{2.3mm}
	\begin{tabular}{|l|c|c|c|c|c|c|c|c|c|c|c|}
	\hline 
            \multirow{3}{*}{\textbf{Scheme}}  & \multicolumn{3}{c|}{\textbf{Expressiveness}} &  \multicolumn{3}{c|}{\textbf{Security and Privacy}} & \multicolumn{2}{c|}{\textbf{Usability}}\\
        \cline{2-12}
        & Monotonic & Arbitrary & Large  		 
		&  Data & Data &  Attribute & No Pre-registration & No Additional\\
		  &  Policy &  Attribute & Universe 	 
		 & Privacy & Authenticity & Privacy & Pairing & Component \\
		\hline
		IBME\cite{ateniese2019match} (Crypto'19) & $\times$ & $\times$ & $\surd$  & $\surd$ & $\surd$ & $\surd$ & $\times$ & $\surd$\\ 
        \hline
        IBME\cite{Francati2021identity} (IndoCrypt'21) & $\times$ & $\times$ & $\surd$  & $\surd$ & $\times/\surd$ & $\surd$ & $\times$ & $\surd$\\ 
        \hline
        IBME\cite{chen2022identity} (AsiaCrypt'21) & $\times$ & $\times$ & $\surd$ & $\surd$ & $\surd$ & $\surd$ & $\times$ & $\surd$\\ 
        \hline
        % IBME\cite{boyen2023identity} (ESORICS'23) & $\times$ & $\times$ & $\surd$ & $\surd$ & $\surd$ & $\surd$ & $\surd$ & $\times$ & $\surd$\\ 
        % \hline
        FBME\cite{wu2023fuzzy} (TIFS'23) & $\times$ & $\times$ & $\times$ & $\surd$ & $\surd$ & $\surd$ & $\times$ & $\surd$\\ 
        \hline
        PSME\cite{sun2023privacy} (TIFS'23) & $\times$ & $\times$ & $\times$ & $\surd$ & $\surd$ & $\surd$ & $\times$ & $\surd$\\ 
        \hline
        CLME\cite{yang2023lightweight} (TIFS'23) & $\times$ & $\times$ & $\surd$ & $\surd$ & $\surd$ & $\surd$ & $\times$ & $\surd$\\ 
        \hline
	ACME\cite{yang2024prisrv} (NDSS'24)  & $\surd$ & $\times$ & $\times$ &$\surd$  & $\surd$ & $\times$ & $\surd$ & $\times$\\
        \hline
        FEME  & $\surd$ & $\surd$& $\surd$  & $\surd$ & $\surd$ & $\surd$ & $\surd$ & $\surd$\\
		\hline
	\end{tabular}
	\caption{Comparison of Matchmaking Encryption (ME) Schemes}		
	\label{Tab:CompareME}	
\end{table*}

\textit{\textbf{Stage 2}}. To address the gap between ME and A-CP-ABE/A-KP-ABE, we develop a Hybrid-ABE scheme (Fig. \ref{fig:Hybrid-ABE}). This gap stems from the \textit{sender authentication} requirement in ME, which ensures that only authorized senders (with a valid encryption key $\EK$ tied to their attributes) can generate legitimate ciphertexts. However, FABEO CP-ABE and KP-ABE do not inherently support sender authentication, as senders only use the master public key $\mpk$ and an attribute set (in KP-ABE) or access policy (in CP-ABE) to derive ciphertext.

To close this gap, Hybrid-ABE integrates the frameworks of A-CP-ABE and A-KP-ABE. Hybrid-ABE comprises the following algorithms: $\Setup$, $\EKGen$, $\PolGen$, $\Enc$ and $\Dec$. The $\EKGen$ algorithm generates the sender's attribute encryption key $\EK_{\mathcal{S}_{\snd}}$, drawing from A-CP-ABE's $\KeyGen$ algorithm and incorporating the \textit{randomness splitting technique} from A-KP-ABE's $\Enc$ algorithm.
$\PolGen$ produces the receiver's policy decryption key $\SK_{\mathbb{A}_{\rcv}}$, utilizing the exponentiation tricks from A-KP-ABE's $\KeyGen$.

During encryption, the sender's encryption key $\EK_{\mathcal{S}_{\snd}}$ is used to generate ciphertext $\CT_{\snd}$, which wraps message $\msg$ in ciphertext component $\ct_0=Y^s\cdot\msg$. The sender's encryption key is re-randomized into ciphertext components 
$\textsf{ct}_{1,i}$,
$\textsf{ct}_2$, $\textsf{ct}_3$, $\textsf{ct}_4$, using a nonce $\tau'\stackrel{\$}{\leftarrow}\mathbb{Z}_p^*$ and two split randomness values $s'$ and $s''$ (where $s=s'+s''$). The decryption algorithm unifies the processes of A-CP-ABE and A-KP-ABE, requiring only 4 pairings and $3I$ exponentiations (where $I$ represents the number of attributes required to satisfy an access policy), ensuring high efficiency.

\textit{\textbf{Stage 3}}. We construct FEME as shown in Fig. \ref{fig:FEME} based on A-CP-ABE, A-KP-ABE, and Hybrid-ABE developed in Stages 1 and 2. FEME consists of the following algorithms: $\Setup$, $\EKGen$, $\DKGen$, $\PolGen$, $\Enc$ and $\Dec$. The $\Setup$ algorithm initializes the master public key and the master secret key. $\EKGen$ generates the sender's attribute encryption key following the process outlined in Hybrid-ABE. $\DKGen$ produces the receiver's attribute decryption key, as $\KeyGen$ does from A-CP-ABE. Meanwhile, $\PolGen$ generates the receiver's policy decryption key following the $\KeyGen$ method from A-KP-ABE for producing its components $(\sk_1,\{\sk_{2,i},\sk_{3,i}\}_{i\in[m_2]})$. Additionally, $\PolGen$ adopts the Hybrid-ABE framework for producing its components $(\{\sk_{4,i},\sk_{5,i}\}_{i\in[m_2]})$, where $m_2$ denotes the number of rows in a receiver's access matrix $\textbf{A}$.

During encryption, a message $\msg$ is encapsulated in ciphertext component $\ct_0=\phi(\msg)\oplus\hat{H}(V)$ with $V=Z^{s_1+s_2}\cdot Y^{s_3}$, which combines elements from the $\ct_0$ components of all three schemes A-CP-ABE, A-KP-ABE, and Hybrid-ABE, where $\hat{H}$ is a hash function, and $\phi$ is a polynomial-time computable and efficiently invertible padding function to realize authenticated encryption. The encryption process generates $(\ct_1,\ct_2,\{\ct_{3,i}\}_{i\in[m_2]})$ based on $\Enc$ in A-CP-ABE, $(\ct_{4,1},\ct_{4,2},\{\ct_{5,i}\}_{i\in[m_1]})$ from A-KP-ABE, and $(\{\ct_{6,i}\}_{i\in[\ell_1]},\ct_7,\ct_8,\ct_9)$ based on Hybrid-ABE. Here, $m_1$ denotes the number of rows in sender's access matrix $\textbf{M}$, and $\ell_1$ the number of attributes in sender's attribute set.

FEME $\Dec$ incorporates the decryption processes of A-CP-ABE, A-KP-ABE, and Hybrid-ABE, with its three decryption fractions corresponding to $\Dec$ in each scheme.

\subsection{FEME Construction}

Following the above technical details, we describe the construction of FEME in Fig. \ref{fig:FEME}. The security models for FEME, detailed in Appendix \ref{app:SecModel}, outline its confidentiality, anonymity, and authenticity properties.
% The correctness proof of FEME is shown in Appendix \ref{app:FEMECorrectness}.

In FEME, a key generation center (KGC) runs $\Setup$ to generate the master public and secret keys, incorporating an efficiently computable and invertible padding function $\phi$ that enables integrity checks, thereby ensuring authenticated message encryption and robustness against unauthorized modifications~\cite{ateniese2019match}. To enable secure communication, the KGC executes $\EKGen$ to create the sender's attribute encryption key, and $\DKGen$/$\PolGen$ to generate the receiver's attribute decryption key and policy decryption key.
During $\Enc$, the sender specifies an access policy that the receiver must meet to access the message. FEME ensures that decryption is only possible if the sender's and receiver's attributes match their respective policies, guaranteeing \textit{sender authenticity} by certifying sender attributes through the attribute encryption key to prevent forged ciphertexts.

% The $\Setup$ phase in FEME includes a padding function $\phi$ that ensures authenticated message encryption. It should be efficiently computable and invertible, allowing for polynomial-time verification of correct padding\cite{ateniese2019match}. This guarantees that all encrypted messages can be checked for integrity, preventing unauthorized modifications and enhancing the overall security of the encryption process.

A core innovation of FEME is its \textit{double re-randomization and binding technique}, which ensures secure and efficient sender authentication. \textit{First re-randomization} applies a shared random value $\tau^{\prime}$ to encryption key components $(\textsf{ek}_{1,i},\textsf{ek}_2,\textsf{ek}_3,\textsf{ek}_4)$, generating ciphertext components $(\textsf{ct}_{6,i},\textsf{ct}_7,\textsf{ct}_8,\textsf{ct}_9)$, preventing adversaries from extracting valid encryption keys. \textit{Second re-randomization} utilizes random values $s_3,s_3^{\prime},s_3^{\prime\prime}$, ensuring attackers cannot generate new valid ciphertexts via mimicry. The same $s_3$ \textit{binds} encryption key-derived components with ciphertext elements $(V,\textsf{ct}_2,\textsf{ct}_{3,i})$, preventing attackers from mixing components from different ciphertexts. To conceal sensitive attribute values and prevent attribute guessing attacks, $(s_3^{\prime},s_3^{\prime\prime})$ are applied to further re-randomize ciphertext components $(\textsf{ct}_7,\textsf{ct}_8)$.

FEME's partially hidden access structure enhances decryption efficiency by revealing only attribute names while concealing values. This allows the receiver to pre-filter unmatched ciphertexts without computation. In $\Dec$, the receiver checks whether the sender’s attribute names satisfy its policy and vice versa. If either check fails, decryption aborts with output $\bot$. Otherwise, full decryption proceeds to verify attribute values and padding integrity. The message is returned only if both policies are satisfied and the padding is valid; otherwise, $\bot$ ensures ciphertext integrity.

\begin{theorem}
    \label{theo:FEME-privacy}
    FEME satisfies confidentiality under the Generic Group Model (GGM) by modeling the hash function $H$ as a random oracle.
\end{theorem}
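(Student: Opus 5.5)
We prove confidentiality in the generic group model. The argument reduces the adversary's view to a symbolic computation over formal polynomials and shows that the session key $V$ cannot be distinguished from a random $\mathbb{G}_T$ element. Because $\ct_0=\phi(\msg)\oplus\hat{H}(V)$, it suffices to show that the adversary cannot query $\hat{H}$ on $V$, or equivalently cannot compute $V$, except with negligible probability. We also model $\hat{H}$ as a random oracle. The challenge ciphertext is then replaced by one in which $V$ is a fresh formal variable $\theta$, and we bound the distinguishing probability by the probability that some "accidental" polynomial identity holds.

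\emph{Symbolic setup.} The simulator keeps handle tables for $\mathbb{G}_1,\mathbb{G}_2,\mathbb{G}_T$ and treats every secret as a formal variable:
\begin{itemize}
\item the master secrets $\alpha,x,\mu,b_1,b_2$ and $h=g_1^{\eta}$;
\item each random-oracle output $H(u)=g_1^{t_u}$, with a fresh variable $t_u$ per queried string;
\item the randomness $\tau,r,r',\sfy$ in each $\EKGen$, $\DKGen$ and $\PolGen$ query;
\item the challenge randomness $s_1,s_2',s_2'',s_3',s_3'',\tau',\sfv$.
\end{itemize}
Terms with $1/b_1$ and $1/b_2$ are cleared by multiplying every exponent by $b_1b_2$. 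Group operations add polynomials and pairings multiply them. By Schwartz--Zippel, any two distinct polynomials of degree at most a constant $d$ collide under a random assignment with probability at most $d/p$. Over $q$ total queries, the simulation fails with probability $O(q^2d/p)$.

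\emph{Main step.} We show that no $\mathbb{G}_T$ polynomial the adversary can form equals
\[
(\alpha(s_1+s_2)+x\mu s_3)\,b_1b_2 ,
\]
which is the exponent of $V$ after scaling. The restriction is on the receiver side: for every $\DK_{\mathcal{S}_{\rcv}}$ and $\SK_{\mathbb{A}_{\rcv}}$ the adversary queried, either $\mathcal{S}_{\rcv}\not\models\mathbb{A}_{\snd}$ or $\mathcal{S}_{\snd}\not\models\mathbb{A}_{\rcv}$. On the sender side, $\EK$ queries are allowed unless the model restricts them. We split $V$ into three summands and show that each is unobtainable or only cancels jointly.
\begin{itemize}
\item \emph{The $\alpha s_1$ term.} It arises only from $e(\dk_1,\ct_1)$, which contains $\alpha s_1+\eta r s_1$. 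Cancelling $\eta r s_1$ requires the pairings $e(\ct_{3,i},\dk_3)$, which produce $\eta\M_i(s_1\|\sfv)^\top r + t_{\Psi_{\pi(i)}} s_3 r$. The $t$-cross terms cancel only against $e(\dk_{2,u},\ct_2)$ with matching $t_u$, that is, when the attribute values match. Making the $\sfv$-components vanish then forces a linear combination of rows with target $(1,0,\dots,0)$ over matched rows. This is the standard FABEO/Waters argument, and it is impossible when $\mathcal{S}_{\rcv}\not\models\mathbb{A}_{\snd}$. The fresh $r$ per key blocks collusion.
\item \emph{The $\alpha s_2$ term.} It comes from $\sk_{2,i}$ and $\sk_{3,i}$ paired with $\ct_{4,1},\ct_{4,2}$. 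The split $s_2'+s_2''$ forces both halves to appear with equal coefficients, so the same LSSS argument applies against $\ct_{5,i}$ and $\sk_1$, now using $\mathcal{S}_{\snd}\not\models\mathbb{A}_{\rcv}$.
\item \emph{The $x\mu s_3$ term.} It appears only via $e(\ct_9,\delta_0)$, with leftover $\mu(\tau+\tau')\eta s_3$. This leftover is cancelled only through $\sk_{4,i},\sk_{5,i}$ paired with $\ct_7,\ct_8$, which requires the same policy satisfaction.
\end{itemize}

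\emph{Mixed case.} The main obstacle is the case where one of the two policies is satisfied and the other is not, together with cross terms that mix keys from different queries. An example is using $\sk_{4,i}$ from one query against the $\alpha$-part via the shared $\sfy$ between $\sk_{2,i}$ and $\sk_{4,i}$. Because $\sk_{2,i}$ and $\sk_{4,i}$ share $\sfy$ and $r'$, an adversary could form differences that eliminate the $\sfy$-dependence. We handle this by a careful monomial bookkeeping argument: we list every monomial containing $s_1$, $s_2'$, $s_2''$, or $s_3$ that can appear in the span of pairwise products. We then show that the coefficient vectors needed to isolate $\alpha s_2 b_1b_2$ and $x\mu s_3 b_1b_2$ must each independently satisfy the LSSS reconstruction constraint for $\mathbb{A}_{\rcv}$. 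This coupling is why the three summands cannot be traded against one another, so unsatisfaction on either side blocks the whole of $V$.

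Finally, once $V$ is shown to be symbolically independent, the view is identical for $\msg_0$ and $\msg_1$ up to the Schwartz--Zippel failure event. This gives a negligible advantage and proves Theorem~\ref{theo:FEME-privacy}.
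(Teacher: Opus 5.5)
\textbf{The matched-rows step fails for repeated attributes.} Your outline follows the same route as the paper's proof: a symbolic GGM argument, $V$ replaced by a fresh variable, separate treatment of $\alpha s_1$, $\alpha s_2$ and $x\mu s_3$, and Schwartz--Zippel. But the step you call ``the standard FABEO/Waters argument'' does not go through for FEME as constructed. FABEO supports repeated attributes through per-occurrence randomness (the vectors $\vec{s'}$ and $\vec{r'}$). FEME collapses this to a single $s_3$ in every $\textsf{ct}_{3,i}$ and a single $r'$ in every $\textsf{sk}_{2,i},\dots,\textsf{sk}_{5,i}$. If two rows carry the same attribute $u$, their cross terms $t_u s_3 r$ (resp.\ $t_u r' s_2$) can cancel against each other. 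The cancellation condition then becomes a per-attribute sum, and it no longer forces reconstruction over matched rows.

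Concretely, take a challenge policy with rows $(1,1)$ and $(0,1)$ sharing one label $u$. Then $\textsf{ct}_{3,1}/\textsf{ct}_{3,2}=h^{s_1}$, and $e(\textsf{dk}_1,\textsf{ct}_1)/e(h^{s_1},\textsf{dk}_3)=Z^{s_1}$ for a decryption key on any set not containing $u$. The same quotient on a policy key of this shape, labelled by some $u'\notin\mathcal{S}^*_{\textsf{snd}}$, gives $g_1^{\alpha/b_j}$ and $h^{\mu/b_j}$ for $j=1,2$. From these you get $Z^{s_2}$ via $\textsf{ct}_{4,1},\textsf{ct}_{4,2}$, and $Y^{s_3}=e(\textsf{ct}_9,\delta_0)/e(g_1,g_2)^{\eta\mu\tau_1 s_3}$ via $\textsf{ct}_7,\textsf{ct}_8$, where $\tau_1=\tau+\tau'$. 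The Lewko--Waters matrix of $u\wedge(u\vee w)$, together with a key for $\{w\}$, behaves the same way. So $V$ is computable from keys satisfying neither relation. Your proof needs an explicit one-use hypothesis (injective $\pi$ and $\rho$), which neither you nor the construction states, or FABEO's per-occurrence randomness must be restored.

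\textbf{The mixed-case claim is wrong as phrased, even with injective labels.} Because $\textsf{sk}_{2,i}$ and $\textsf{sk}_{4,i}$ share $r'$, the quotient $\textsf{sk}_{2,i}/\textsf{sk}_{4,i}$ removes the attribute term $t_{\rho(i)}r'$; it does not remove the $\mathbf{y}$-dependence, as you suggest. Its exponent is $\bigl(\mathbf{A}_{i,1}(\alpha-\eta\mu)+(1-\eta)\mathbf{A}_i(0\|\mathbf{y})^{\top}\bigr)/b_1$. Combining all rows of any policy key whose rows span $(1,0,\dots,0)$ gives $g_1^{(\alpha-\eta\mu)/b_1}$ and $g_1^{(\alpha-\eta\mu)/b_2}$. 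Hence you get $e(g_1,g_2)^{(\alpha-\eta\mu)s_2}$ and, with $e(\textsf{ct}_9,\delta_0)$, $e(g_1,g_2)^{x\mu s_3+\alpha\tau_1 s_3}$, with no satisfaction at all. So combinations containing the $\alpha s_2$ and $x\mu s_3$ monomials of $V$ are reachable by unauthorized keys, and the coefficients producing them need not satisfy any reconstruction constraint. What actually has to be shown is that the residual monomials cannot be cancelled: $\eta\mu s_2$ and $\alpha\tau_1 s_3$ on the policy-key side, and the $\eta r s_1$-type terms on the decryption-key side. That is exactly the ``careful monomial bookkeeping'' you only promise. The paper's case analysis does not address either point, so under a one-use restriction this junk-monomial argument is the piece you must write out.
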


\begin{theorem}
    \label{theo:FEME-anon}
    FEME satisfies anonymity under GGM by modeling the hash function $H$ as a random oracle.
\end{theorem}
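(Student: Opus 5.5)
We prove Theorem~\ref{theo:FEME-anon} by a symbolic (master-theorem style) argument in the generic group model. In the anonymity game the adversary submits two challenge pairs $(\mathcal{S}^{(0)}_{\snd},\mathbb{A}^{(0)}_{\snd})$ and $(\mathcal{S}^{(1)}_{\snd},\mathbb{A}^{(1)}_{\snd})$, together with messages. The two pairs share the same public part: the same attribute names $\{n_i\}$ and the same $(\M,\pi,\{n_{\pi(i)}\})$. They differ only in the hidden attribute values. The usual non-triviality restrictions apply to the queried keys: no $\DK$ satisfies either challenge policy, and no $\SK$ is satisfied by either challenge attribute set.

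\textbf{Step 1: symbolic encoding.} Every group element the adversary sees is written as a formal polynomial. The variables are $\alpha,x,\mu,b_1,b_2$ and the discrete log $\eta$ of $h$. Each random-oracle output $H(u)$ gets a fresh variable $\beta_u$. The key randomness $\tau,r,r',\sfy$ and the encryption randomness $s_1,s_2',s_2'',s_3',s_3'',\tau',\sfv$ are also variables. We follow the standard GGM framework, as in FABEO~\cite{riepel2022fabeo} and FEASE~\cite{meng2024fease}. With $H$ a random oracle, the challenge components involving $H(\Psi_{\pi(i)})$ and $H(u_i)$ are indexed by variables $\beta$ that differ between the $b=0$ and $b=1$ worlds only through a renaming of values.

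\textbf{Step 2: reduce to a bad-event bound.} The adversary's view is identical in both worlds unless some polynomial in the $\mathbb{G}_T$ (or source-group) lists vanishes identically in one world but not the other. By Schwartz--Zippel, any accidental collision occurs with probability $O(q^2 d/p)$. It therefore suffices to show that no nontrivial linear combination of pairwise products of the handles is identically zero in one world only.

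\textbf{Step 3: case analysis by ciphertext block.} We split on which ciphertext block the distinguishing combination uses.

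\emph{(a) Attribute-guessing tests.} Tests of the form $e(\ct_{5,i},g_2)$ versus $e(H(u),\ct_{4,1})$ fail. The monomials $\beta_u(s_2'+s_2'')$ and $\beta_u b_1 s_2'$ cannot be matched, because $b_1,b_2$ appear in $\mathbb{G}_2$ only as $\delta_1,\delta_2$ and in $\sk_{2,i},\ldots,\sk_{5,i}$ only through $1/b_1,1/b_2$. This is the randomness splitting argument. The same reasoning applied to $\ct_{6,i},\ct_7,\ct_8$, which carry $(\tau+\tau')$ together with $s_3',s_3''$, shows that the sender values are hidden.

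\emph{(b) Policy values in $\ct_{3,i}$.} Recognizing the hidden policy values in $\ct_{3,i}$ requires pairing with $g_2^{s_3}$ and $H(\Psi)$. The resulting term $\beta_\Psi s_3$ can only be cancelled via $\dk_{2}$ from a key whose value matches. Cancelling the remaining term $\eta\M_i(s_1\|\sfv)$ additionally requires reconstructing, through LSSS coefficients, $\eta s_1 r$ against $\dk_1$. By the non-triviality restriction, no queried key allows this.

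\emph{(c) Blinding factor $V$.} For $V$, we show that the monomials $\alpha s_1$, $\alpha(s_2'+s_2'')$ and $x\mu s_3$ can only be produced through authorized combinations. This establishes confidentiality along the way, in the manner of Theorem~\ref{theo:FEME-privacy}. Consequently $\ct_0$ looks random under $\hat{H}$ modeled as an oracle.

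\textbf{Main obstacle.} The hardest step is (a)/(c) for the hybrid block $\ct_{6,i},\ldots,\ct_9$. There the exponents are products $(\tau+\tau')s_3$, and the key components $\sk_{4,i},\sk_{5,i}$ contain $\eta\A_i(\mu\|\sfy)/b_j$. We must rule out every cross-term cancellation between keys generated for different policies that could isolate $\beta_u s_3$. The first approach is to enumerate the monomials containing $1/b_1$ or $1/b_2$. Since $s_3',s_3''$ appear only in $\ct_7$ and $\ct_8$ respectively, any cancellation must pair $\sk_{4}$ with $\ct_7$ and $\sk_5$ with $\ct_8$ using the same coefficients. This forces an LSSS reconstruction for $\mathbb{A}_{\rcv}$ by the challenge sender attributes, which is forbidden.
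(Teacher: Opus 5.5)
Your step (b) fails, and the failure cannot be repaired. You claim that the term $\beta_{\Psi}s_3$ in $\ct_{3,i}$ can only be cancelled by a matching $\dk_2$, and that the $h$-part forces a reconstruction of $\eta s_1 r$ against $\dk_1$. Neither is true.

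First, $H(\Psi)$ is a random-oracle value the adversary can compute for any candidate, and $\ct_2=g_2^{s_3}$ is part of the ciphertext. So $e(H(\Psi),\ct_2)$ cancels $\beta_\Psi s_3$ with no key at all. Second, if $\ct_{3,i}$ is paired with $g_2$ rather than with $\dk_3$, the LSSS-combined $h$-part is just $\eta s_1$, which $e(h,\ct_1)$ cancels.

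Since $(\M,\pi)$ is public, the adversary computes $\{\gamma_i\}_{i\in I}$ with $\sum_{i\in I}\gamma_i\M_i=(1,0,\ldots,0)$ and checks
\[
e\Big(\prod_{i\in I}\ct_{3,i}^{\gamma_i},\,g_2\Big)\stackrel{?}{=}e(h,\ct_1)\cdot e\Big(\prod_{i\in I}H\big(\Psi^{(0)}_{\pi(i)}\big)^{\gamma_i},\,\ct_2\Big).
\]
In your symbolic language the difference of the two sides is $\sum_{i\in I}\gamma_i\big(\beta_{\Psi^{(b)}_{\pi(i)}}-\beta_{\Psi^{(0)}_{\pi(i)}}\big)s_3$. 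This vanishes identically exactly when $b=0$, e.g.\ for the one-row policy $\M=(1)$ with two different values. No $\DK$ or $\SK$ query is made, so this is a legal distinguisher that wins with probability close to $1$. FEME therefore does not hide the sender's policy values. The reason is that $s_3$ is shared between $H(\Psi_{\pi(i)})^{s_3}$ and the public $g_2^{s_3}$. This is precisely the attribute-guessing pattern that randomness splitting removes from the KP-ABE and hybrid blocks, but it was left in the CP-ABE block.

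Your ``main obstacle'' step has a second gap. Suppose you pair $\sk_{4,i},\sk_{5,i}$ with $\ct_7,\ct_8$ (or $\sk_{2,i},\sk_{3,i}$ with $\ct_{4,1},\ct_{4,2}$) under common coefficients $c_i$ and kill the share terms. This only forces $\sum_i c_i\A_i=\mathbf{0}$, a kernel relation, not the reconstruction vector $(1,0,\ldots,0)$. Kernel relations do occur in unauthorized policies.

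Concretely, take two attributes $u_1^{(0)},u_2^{(0)}$ of $\mathcal{S}^{(0)}_{\snd}$ (with different names) whose values differ from those in $\mathcal{S}^{(1)}_{\snd}$. Query $\PolGen$ for $(u_1^{(0)}\vee u_2^{(0)})\wedge w$, with $w$ in neither challenge set and rows $(1,1),(1,1),(0,-1)$; this query is allowed. Rows $1$ and $2$ carry the same share, so $\sk_{2,1}/\sk_{2,2}=g_1^{(\beta_{u_1^{(0)}}-\beta_{u_2^{(0)}})r'/b_1}$. Hence $e(\sk_{2,1}/\sk_{2,2},\ct_{4,1})\,e(\sk_{3,1}/\sk_{3,2},\ct_{4,2})=e(\ct_{5,1}/\ct_{5,2},\sk_1)$ holds iff $b=0$, and sender-attribute hiding fails as well.

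The paper's own proof overlooks both relations:
\begin{itemize}
\item Its case (1) asserts that $\kappa\lambda_i+t_{\pi(i)}s_3$ cannot be assembled. Yet $\sum_{i\in I}\gamma_i(\kappa\lambda_i+t_{\pi(i)}s_3)=s_1\kappa+\sum_{i\in I}\gamma_i s_3t_{\pi(i)}$ uses only non-violet entries of Table~\ref{Tab:FEME-privacy}.
\item Its case (2) never takes differences of rows with equal shares.
\end{itemize}
So Theorem~\ref{theo:FEME-anon} cannot be established by any generic-group bookkeeping for the scheme as written; the construction itself would have to change.
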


\begin{theorem}
    \label{theo:FEME-authn}
    FEME satisfies authenticity under GGM by modeling the hash function $H$ as a random oracle.
\end{theorem}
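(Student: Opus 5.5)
The plan is to prove authenticity in the generic group model. $H$ is a random oracle, and I will also treat $\hat{H}$ as a random oracle so that a forgery that passes the padding check forces the adversary to "know" the correct $V$. Everything is symbolic. Each group handle the adversary sees is a polynomial in formal variables: $\alpha,x,\mu,b_1,b_2$; $\eta=\log_{g_1}h$; $h_u=\log_{g_1}H(u)$ for every queried string $u$; the per-key randomness $\tau_j, r_k, r'_k, \mathbf{y}_k$; and, for oracle ciphertexts, the per-ciphertext randomness $s_1,s_2',s_2'',s_3',s_3'',\tau',\mathbf{v}$. The simulator keeps lists of these polynomials in $\mathbb{G}_1,\mathbb{G}_2,\mathbb{G}_T$ and answers group and pairing queries consistently. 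A standard Schwartz--Zippel step shows that the real and symbolic games differ only with probability $O(q^2/p)$, where $q$ bounds the total number of oracle queries. So it suffices to show that no symbolic forgery exists.

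Take a forgery $\CT^*$ that decrypts validly under the challenge receiver keys $\DK_{\mathcal{S}_{\rcv}}$ and $\SK_{\mathbb{A}_{\rcv}}$, is not an output of the encryption oracle, and is such that no queried sender set $\mathcal{S}_j$ satisfies $\mathbb{A}_{\rcv}$. Then the $\mathbb{G}_T$ polynomial for the decryption value $V^*$ must equal a polynomial the adversary can compute. Each $\mathbb{G}_1$ or $\mathbb{G}_2$ component of $\CT^*$ is a linear combination of the handles available to the adversary. I substitute these generic linear combinations into the third fraction of $\Dec$, namely $e(\ct_9,\delta_0)e(\prod(\ct_{6})^{\omega_i},\sk_1)/e(\prod\sk_4^{\omega_i},\ct_7)e(\prod\sk_5^{\omega_i},\ct_8)$. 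This fraction is the only place where a monomial containing $x\mu$ can appear.

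The core argument is a case analysis on the coefficients of these combinations.

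\textbf{Step 1.} The monomial $x\mu$ in $e(\ct_9,\delta_0)$ can only come from $\ct_9$ containing $\ek_4^{(j)}$ for some key $j$, or an oracle $\ct_9$. Either way it comes with a companion term $\eta\mu(\tau_j+\tau')s_3$.

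\textbf{Step 2.} That companion term must be cancelled. The only available source is the pairings $e(\sk_{4,i}^{\omega_i},\ct_7)$ and $e(\sk_{5,i}^{\omega_i},\ct_8)$. Here $\sk_{4,i}$ carries $\eta\mathbf{A}_i(\mu\|\mathbf{y})^\top/b_1$, and $\ct_7$ must contain $b_1\tau_j s_3'$ (similarly for $b_2$, with $s_3'+s_3''=s_3$). Because the $1/b_1$ and $1/b_2$ factors appear only in the receiver keys, $\ct_7$ and $\ct_8$ are forced to be built from $\ek_2^{(j)}$ and $\ek_3^{(j)}$ of the \emph{same} key index $j$, with split exponents summing to the $s_3$ used in $\ct_9$.

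\textbf{Step 3.} The cross terms $h_{\Psi_{\rho(i)}}r'\tau_j s_3$ produced by Step 2 must then be cancelled by $e(\prod\ct_{6,\rho(i)}^{\omega_i},\sk_1)$. This requires each $\ct_{6,i}$ to contain $h_{u_i}\tau_j s_3$ for $u_i=\Psi_{\rho(i)}$ on the rows of a set $I_2$ that spans $(1,0,\ldots,0)$. Since the $h_u$ are independent random-oracle variables and $\tau_j$ is fresh per key, such terms exist only if $\mathcal{S}_j\models\mathbb{A}_{\rcv}$. Mixing keys $j\neq j'$ leaves uncancelled $\tau_j$ and $\tau_{j'}$ monomials. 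This contradicts the admissibility condition.

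\textbf{Step 4.} The remaining route is to reuse oracle ciphertexts. Because $s_3$ binds $\ct_2$, $\ct_{3,i}$, $\ct_{6,i}$, $\ct_7$, $\ct_8$, $\ct_9$ and $V$, combining parts of different oracle ciphertexts leaves distinct $s_3$ monomials uncancelled. Re-exponentiating a single ciphertext by a known scalar changes $V$ in a way the adversary cannot predict without querying $\hat{H}$ on it, and the padding then fails except with probability $2^{-(\ell_0-n)}+q/p$. Hence any valid $\CT^*$ essentially equals an oracle output, which the game excludes.

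I expect Steps 2--3 to be the main obstacle: the combinatorial case analysis showing that the only way to cancel the $\eta\mu\tau s_3$ and $h_u r'\tau s_3$ monomials is to use the components of a single encryption key whose attribute set satisfies $\mathbb{A}_{\rcv}$. My first approach there is to fix the coefficient of each $\ek^{(j)}$ handle in $\ct_7$, $\ct_8$, $\ct_9$. I would then compare coefficients of monomials of the form $\eta\mu\tau_j\cdot(\cdot)$ and $h_u r'\tau_j(\cdot)$ key by key, using linear independence of the rows of $\mathbf{A}$ outside $(1,0,\ldots,0)$ for unauthorized sets (the LSSS property). The other monomials are fenced off by degree arguments on $b_1,b_2$ and on the randomness variables.
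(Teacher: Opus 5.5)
\textbf{The proposal has a genuine gap, and it cannot be closed, because the theorem is false for the scheme as written.} Your Steps~1--3 only treat forgeries whose decryption value contains the monomial $x\mu$, i.e.\ those in which $\ct_9$ draws on some $\ek_4^{(j)}$. You never show that a valid forgery must contain it, and it need not. Validity only requires $\ct_0\oplus\hat{H}(V_{\mathrm{dec}})$ to be well padded for whatever $V_{\mathrm{dec}}$ the algorithm $\Dec$ computes. $\Dec$ never checks that a factor $Y^{s_3}$ is present, or that it is consistent with $\ct_2$.

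Here is a concrete forgery that makes no $\mathcal{O}_1$ query.
\begin{itemize}
\item Pick any $\mathcal{S}_{\snd}=\{u_i\}$ and $\mathbb{A}_{\rcv}$ with $\mathcal{S}_{\snd}\models\mathbb{A}_{\rcv}$, and any $\mathbb{A}_{\snd}$, $\mathcal{S}_{\rcv}$ with $\mathcal{S}_{\rcv}\models\mathbb{A}_{\snd}$.
\item Compute $\ct_1,\ct_2,\{\ct_{3,i}\},\ct_{4,1},\ct_{4,2},\{\ct_{5,i}\}$ exactly as in $\Enc$; this needs only $\mpk$ and $H$.
\item Set $\ct_{6,i}=H(u_i)^{c}$, $\ct_7=\delta_1^{c'}$, $\ct_8=\delta_2^{c''}$ and $\ct_9=h^{c}$, with $c=c'+c''$.
\item Set $\ct_0=\phi(\msg)\oplus\hat{H}(Z^{s_1+s_2})$.
\end{itemize}
The first two fractions of $\Dec$ evaluate to $Z^{s_1}$ and $Z^{s_2}$. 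Write $h=g_1^{\eta}$ and $T=\sum_{i\in I_2}\omega_i\log_{g_1}H(\Psi_{\rho(i)})$. The third fraction then evaluates to $e(g_1,g_2)^{\eta\mu c+r'Tc-(\eta\mu+r'T)(c'+c'')}=1$. So $\Dec$ outputs $\msg$, and the condition on $\mathcal{Q}_{\mathcal{O}_1}$ in Definition~\ref{ME:authenticity} holds vacuously. All-identity components work as well. Modeling $\hat{H}$ as a random oracle changes nothing, since the forger can query it on $Z^{s_1+s_2}$ itself. Had you carried out your own plan of substituting fully general linear combinations and comparing coefficients, this solution would have appeared: $\ct_7,\ct_8,\ct_9$ are built only from the public $\delta_1,\delta_2,h$. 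Your Step~2 silently discards it.

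The paper's proof breaks at the same place. It posits that the forgery has the shape $V=Z^{a+b}Y^{c}$, $\ct_9=\prod_j(\ek_4^{(j)})^{f_{4,j}c}h^{ec}$, and so on. It then demands, ``by correctness'', that the third fraction equal $Y^{c}$. That demand is the only source of $\sum_j f_{4,j}=1$. The forgery above is the case $f_{k,j}=0$, $e=1$, $V=Z^{a+b}$, which the ansatz excludes by fiat rather than by argument. Your Step~4 also deals with an encryption oracle that Definition~\ref{ME:authenticity} does not provide.

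A correct authenticity result requires changing the scheme itself, so that a ciphertext whose key-derived part lacks the $g_1^{x}$ contribution cannot decrypt. No refinement of the GGM case analysis can rescue the construction in Fig.~\ref{fig:FEME}.
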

\vspace{-0.5mm}

The proofs of Theorems \ref{theo:FEME-privacy}-\ref{theo:FEME-authn} are deferred to Appendix \ref{app:proof-FEME}.

% Full Version
% The proofs of Theorems \ref{theo:FEME-privacy}-\ref{theo:FEME-authn} are deferred to Appendices \ref{app:proof-FEME-privacy}-\ref{app:proof-FEME-authn}, respectively.

% \textit{Note on Potential Applications}. FEME is suited for domains requiring secure, policy-based data sharing. It enables bilateral access control in secure communication systems, permitting only authorized senders and receivers to exchange information based on matching attributes or policies, crucial for privacy-preserving service discovery in wireless networks. FEME also supports cross-organizational data sharing, ensuring confidentiality and policy compliance. Additionally, FEME aids decentralized platforms like blockchain systems by enabling fine-grained, policy-controlled data encryption and access, enhancing transaction privacy and security.
% This versatility makes FEME a critical tool for robust data protection and efficient attribute-based data dissemination.

\subsection{Comparative Advantages of FEME}

Table \ref{Tab:CompareME} compares the existing ME schemes in terms of expressiveness, security and privacy, and usability.

\textbf{Expressiveness.} Both FEME and ACME~\cite{yang2024prisrv} support monotonic Boolean formula-based access structures, unlike other schemes limited to identity-based matching. However, ACME's small-universe design restricts it to a fixed attribute set and relies on binary vectors, leading to longer vectors and higher computational costs for expressive policies. FEME supports an unrestricted attribute universe, allowing any arbitrary string as an attribute.

% Long Version
% In terms of expressiveness, FEME and ACME \cite{yang2024prisrv} support monotonic Boolean formula-based access structures, whereas other schemes focus on identity-based matching (either single, multiple, or fuzzy identities). However, ACME’s small-universe design constrains the system to a fixed set of attributes, necessitating a complete rebuild when expanding the attribute set. Furthermore, ACME uses binary attribute vectors (where each element is either 1 or 0) to denote attribute sets, necessitating larger vectors to represent a wide range of attributes, which in turn significantly increases the computational cost during encryption and decryption. In contrast, FEME supports an unrestricted attribute universe, enabling any arbitrary string to function as an attribute.

\textbf{Security and Privacy.} All schemes ensure data confidentiality. Francati et al.~\cite{Francati2021identity} provide an IBME without authenticity and another using NIZK for authenticity. Except for ACME, all schemes preserve attribute privacy or identity privacy. ACME reveals outer-layer public attributes due to its dual-layer design.

% Long Version
% In terms of security and privacy, all ME schemes ensure data privacy. Regarding data authenticity, Francati et al. \cite{Francati2021identity} proposed one IBME scheme without authenticity and another IBME scheme with authenticity using NIZK. Regarding attribute privacy, ACME exposes the attributes in the outer layer as it employs a dual-layer matching mechanism. Except for ACME, all other ME schemes in comparison preserve attribute/identity privacy. 

\textbf{Usability.} IBME~\cite{ateniese2019match,Francati2021identity,chen2022identity}, FBME~\cite{wu2023fuzzy}, PSME~\cite{sun2023privacy}, and CLME~\cite{yang2023lightweight} require \textit{pre-registration pairing}, where the sender must know the receiver’s identity or attributes beforehand. This tight coupling restricts flexible and real-time service discovery, as these schemes rely on identity-based or broadcast encryption. ACME~\cite{yang2024prisrv} avoids such pairing but incurs extra overhead due to its dependence on anonymous credentials for sender authentication. In contrast, FEME \textit{eliminates} both pre-registration pairing and external credential management by integrating sender authentication directly into ciphertexts, offering higher usability and scalability.

% Long Version
% In terms of usability, IBME \cite{ateniese2019match,Francati2021identity,chen2022identity}, FBME \cite{wu2023fuzzy}, PSME \cite{sun2023privacy}, and CLME \cite{yang2023lightweight} require pre-registration pairing, in that the sender must know the receiver's identity before encryption. ACME \cite{yang2024prisrv} relies on anonymous credentials as an additional component for sender authentication, adding overhead due to credential issuance, management, and revocation. In contrast, FEME avoids pre-registration pairing and additional components for sender authentication, achieving higher usability among the ME schemes.

In summary, FEME stands out as the only ME scheme that achieves expressive bilateral access control, robust security and privacy, and advantageous usability.

\section{PriSrv+ Protocol} 
\label{sec:PriSrv+}

PriSrv+ is a private service discovery protocol that leverages FEME to enable privacy-preserving service broadcasts and mutual authentication between a service provider and a client. Building on its predecessor PriSrv\cite{yang2024prisrv}, PriSrv+ replaces the core ACME scheme with FEME to eliminate the need for issuing, managing, and revoking credentials.

The overall workflow of PriSrv+ is as follows. (1) During the system setup phase, a KGC generates the attribute encryption key, attribute decryption key, and policy decryption key for the service provider ($S$) and the client ($C$) according to the FEME scheme. Both parties require a complete set of encryption and decryption keys as they act as both sender and receiver during interactions. (2) During the broadcast phase, the service provider announces an encrypted broadcast message using FEME. This message includes a service policy (in the partially hidden structure), service details, the provider's Diffie-Hellman (DH) public key, and a MAC key for authentication. Consider a private journalist network operated by an NGO as an example, where the provider's policy is ``(Journalist Type: Investigative \textbf{AND} Focus Area: Government Corruption \textbf{AND} Journalist Affiliation: Independent Media) \textbf{OR} (Role: Whistleblower \textbf{AND} Level: High Threat)," which is transformed to its partially hidden form ``(Journalist Type \textbf{AND} Focus Area \textbf{AND} Journalist Affiliation) \textbf{OR} (Role \textbf{AND} Level)" in the broadcast.

(3) In the service discovery phase, the client checks if its attribute names match the service policy and if the provider's attribute names (included in the broadcast message) meet its own policy. For the same example, the client may set its connection policy as ``(Network Type: Investigative \textbf{AND} Affiliation: NGO-Backed) \textbf{OR} (Jurisdiction: EU \textbf{AND} Support: Protection Available)".
If both checks pass, the client executes the FEME decrypt algorithm to verify whether the hidden attribute values of both parties satisfy those in each other's policies. If it succeeds, the client generates a response, and sends a FEME encrypted reply with its policy in its partially hidden form (which is ``(Network Type \textbf{AND} Affiliation) \textbf{OR} (Jurisdiction \textbf{AND} Support)" in the above example) and an authentication tag, including its DH public key and MAC key, back to the provider. (4) The service provider decrypts and verifies the client's response, then sends a confirmation message with an authentication tag to the client. (5) Both parties independently compute a shared session key using their respective DH secret keys, ensuring mutual authentication and maintaining privacy for both the client and the provider.

\subsection{Security and Threat Models of PriSrv+}

\textbf{Security Model}. The security model of PriSrv+ follows that of PriSrv (Appendix C in full version)\cite{yang2024prisrv}, which defines service discovery security and bilateral anonymity (i.e., both anonymity of service provider and anonymity of client). 
\textit{Service discovery security} ensures privacy-preserving service advertisement and anonymous mutual authentication with bilateral policy control, protecting sessions from adversarial exposure. It maintains confidentiality and authentication, allowing only authorized clients and service providers to establish secure communications.
The \textit{bilateral anonymity} property implies that neither the PPT service provider nor the PPT client can learn anything about the other participant's attribute values unless they satisfy each other's access policies.

\textbf{Threat Model}. Similar to PriSrv, PriSrv+ assumes a fully trusted Key Generation Center (KGC) for key distribution, which does not participate in service discovery. Service providers and clients are considered \textit{untrustworthy} and may attempt to gain unauthorized information or disrupt the protocol. Malicious providers may impersonate the other providers, track clients, or inject forged ciphertexts. Malicious clients may impersonate users or launch excessive requests to overwhelm providers. The mitigation of excessive requests is further discussed in Section~\ref{subsec:PriSrv+Discussions}.

Following the Canetti-Krawczyk model for authenticated key exchange (AKE)~\cite{canetti2001analysis,canetti2002security} and the service discovery model in~\cite{wu2016privacy}, PriSrv+ considers a strong adversary capable of controlling public communications—eavesdropping, injecting, modifying, replaying, or interleaving messages across sessions. The adversary, which may be external, a rogue service provider, or a compromised client, can launch attacks including spoofing, impersonation, MitM, and DoS. Their goals include breaking authenticated key exchange and exposing sensitive information for tracking and inference.

\subsection{PriSrv+ Construction}

\begin{figure*}[htbp]
	\centering
	\footnotesize
	\renewcommand{\arraystretch}{1.09}
	\begin{tabular}{|ccc|}
		\hline
		\multicolumn{3}{|c|}{\textbf{Privacy-preserving Service Broadcast Phase}}\\
		\cdashline{1-3}[1.5pt/2pt]
		\multicolumn{3}{|c|}{Service Provider $S$'s Broadcast: $bid, \CT_\textsf{B}\leftarrow\blue{\FEME.\Enc(\EK_{\mathcal{S}_{s}},\mathbb{A}_{s},\MSG_\textsf{B})}$}\\
		\multicolumn{3}{|c|}{where $\MSG_\textsf{B}=(bid||Z||Service_{Type}||Service_{Par}||K_c)$, $z\xleftarrow{\$} \mathbb{Z}_p^*$, $Z\leftarrow h^z\in G_2$, $K_c\leftarrow\mathcal{MAC}.\KeyGen(1^{\lambda})$}\\
		\hline\hline
		\multicolumn{3}{|c|}{\textbf{Anonymous Mutual Authentication Phase}}\\
		\cdashline{1-3}[1.5pt/2pt]
		\textbf{Client} ($C$)  && \textbf{Service Provider}  ($S$)\\	
		$\blue{(\EK_{\mathcal{S}_{c}}},\DK_{\mathcal{S}_{c}},\SK_{\mathbb{A}_{c}})$ && $(\blue{\EK_{\mathcal{S}_{s}}},\DK_{\mathcal{S}_{s}},\SK_{\mathbb{A}_{s}})$\\
		\hline
		$\MSG_\textsf{B}\leftarrow \blue{\FEME.\Dec(\DK_{\mathcal{S}_c},\SK_{\mathbb{A}_c},\CT_\textsf{B})}$&&\\
		$x_1,x_2\xleftarrow{\$} \mathbb{Z}_p^*$, $X_1\leftarrow g^{x_1}\in G_1$, $X_2\leftarrow h^{x_2}\in G_2$  &&\\		
		$\sigma_c\leftarrow\mathcal{MAC}.\MAC(K_c,M_c)$&&\\
		
		where $M_c=(``C\rightarrow S",bid,sid,X_1,X_2,Z)$&&\\	
		
		$K_s\leftarrow\mathcal{MAC}.\KeyGen(1^{\lambda})$&$\xrightarrow{~~~~~~
			bid,sid,\sigma_c,\CT_c~~~~~}
		$		
		&$\MSG_c\leftarrow \blue{\FEME.\Dec(\DK_{\mathcal{S}_s},\SK_{\mathbb{A}_s},\CT_c)}$\\	
		
		$\CT_{c}\leftarrow\blue{\FEME.\Enc(\EK_{\mathcal{S}_{c}},\mathbb{A}_{c},\MSG_c)}$&&$b_c\leftarrow\mathcal{MAC}.\Verify(K_c,M_c,\sigma_c)$\\	
		
		where $\MSG_c=(K_s,M_c)$&& If $b_c=0$, abort; otherwise,\\

		%&&$K_s\leftarrow\mathcal{MAC}.\KeyGen(1^{\lambda})$\\

		&&$y\xleftarrow{\$} \mathbb{Z}_p^*$, $Y\leftarrow g^y\in G_1$\\
		
		$b_s\leftarrow\mathcal{MAC}.\Verify(K_s,M_s,\sigma_s)$	
		&&  $\sigma_s\leftarrow\mathcal{MAC}.\MAC(K_s,M_s)$\\

		If $b_s=0$, abort; otherwise,
		&$\xleftarrow{~~~~~~~~~~
			M_s,\sigma_s~~~~~~~~~~}$
		&where $M_s=(``S\rightarrow C",bid,sid,X_1,X_2,Y,Z)$\\
		
		$SSK_{c,s}\leftarrow \bar{H}(Y^{x_1},Z^{x_2})$
		&&$SSK_{c,s}\leftarrow \bar{H}(X_1^{y},X_2^{z})$\\
		\hline
	\end{tabular}	
	\caption{PriSrv+ Protocol} 
	\label{Fig:PriSrv+} 
\end{figure*}

Fig.~\ref{Fig:PriSrv+} presents the PriSrv+ protocol, comprising a privacy-preserving service broadcast phase and an anonymous mutual authentication phase. Key differences from PriSrv (highlighted in blue) include: (1) PriSrv+ uses the sender’s encryption key $\EK$ for authentication, replacing PriSrv’s anonymous credentials; and (2) PriSrv+ adopts FEME for bilateral policy control, whereas PriSrv employs ACME.

% Long Version
% Fig. \ref{Fig:PriSrv+} illustrates the PriSrv+ protocol, which operates through two phases: a privacy-preserving service broadcast phase and an anonymous mutual authentication phase. The differences between PriSrv+ and PriSrv, highlighted in blue, are as follows: (1) PriSrv+ employs the sender's encryption key, $\EK$, for sender authentication, whereas PriSrv uses anonymous credentials for this purpose. (2) PriSrv+ utilizes FEME to achieve bilateral policy control, while PriSrv relies on ACME for this functionality.

A unique broadcast identifier ($bid$) is assigned to each broadcast cycle, and a session identifier ($sid$) is assigned to each session. The broadcast cycle has a lifetime (e.g., 30 seconds), with the timestamp included in $bid$. Clients verify the timestamp upon decryption to ensure message freshness. Let $\mathcal{FEME}=(\Setup,\EKGen,\DKGen,\PolGen,\Enc,\Dec)$ be a FEME scheme, $\mathcal{MAC}=(\Setup, \KeyGen, \MAC, \Verify)$ be a message authentication code (MAC) scheme, and  $\bar{H}: \{0,1\}^* \rightarrow \mathcal{K}$ be a hash function where $\mathcal{K}$ represents the secret session key space. 

\textit{Service Broadcast Phase}: To initiate a broadcast, $S$ selects an access policy $\mathbb{A}_s$ that $C$ should satisfy. $S$ chooses an ephemeral Diffie-Hellman (DH) exponent $z\xleftarrow{\$} \mathbb{Z}_p^*$ and computes $Z = h^z$. It generates a MAC key $K_c \leftarrow \mathcal{MAC}.\textsf{KeyGen}(1^{\lambda})$. The broadcast message $\MSG_B = (bid || Z || Service_{Type} || Service_{Par} || K_c)$ includes the broadcast identifier, service type, parameters, and the MAC key. $S$ encrypts it into a ciphertext $\textsf{CT}_B$ using $\mathcal{FEME}.\textsf{Enc}$, and broadcasts $bid$ and $\textsf{CT}_B$ publicly.

\textit{Anonymous Mutual Authentication Phase}: This phase establishes a session key ($SSK_{c,s}$) between $C$ and $S$.

(1) Client Response: $C$ checks whether its attribute name set $\{n_i\}_{i\in[\ell_2]}$ satisfies $S$'s policy $(\M,\pi,\{n_{\pi(i)}\}_{i\in[m_1]})$, and whether $S$'s attribute name set $\{n_i\}_{i\in[\ell_1]}$ satisfies $C$'s policy $(\A,\rho,\{n_{\rho(i)}\}_{i\in[m_2]})$. If either test fails, $C$ discards the broadcast ciphertext without decryption. Otherwise, $C$ decrypts $\textsf{CT}_B$ using its decryption keys $(\textsf{DK}_{\mathcal{S}_c}, \textsf{DK}_{\mathbb{A}_c})$. If the decryption succeeds, $C$ generates DH values $X_1 = g^{x_1}$ and $X_2 = h^{x_2}$, where $x_1, x_2 \xleftarrow{\$}\mathbb{Z}_p^*$. $C$ then computes an authentication tag $\sigma_c$ for the message $M_c = (``C\rightarrow S", bid, sid, X_1, X_2, Z)$ using $K_c$ from the broadcast message $\MSG_B$. $C$ defines a policy $\mathbb{A}_c$ for $S$, encrypts its message $\MSG_c=(K_s,M_c)$ to ciphertext $\CT_c$ using $\mathcal{FEME}.\textsf{Enc}$, and sends $(bid,sid,\sigma_c,\textsf{CT}{c})$ to $S$.

(2) Service Provider Response: $S$ decrypts $C$'s ciphertext, verifies $\sigma_c$, and generates its own DH value $Y \leftarrow g^y$ using a random exponent $y \xleftarrow{\$} \mathbb{Z}_p^*$. It creates a message $M_s = (``S\rightarrow C", bid, sid, X_1, X_2, Y, Z)$ and a tag $\sigma_s$ using the MAC key $K_s$ from $MSG_c$. $S$ then computes a session key $SSK{c,s} \leftarrow \bar{H}(X_1^y, X_2^z)$ and sends $(M_s, \sigma_s)$ to $C$.

(3) Client Finalization: Upon receiving $(M_s, \sigma_s)$, $C$ verifies $\sigma_s$. If valid, $C$ computes a session key $SSK_{c,s} \leftarrow \bar{H}(Y^{x_1}, Z^{x_2})$ using its secret DH exponents $(x_1, x_2)$. As $X_1^y = Y^{x_1} = g^{x_1y}$ and $X_2^z = Z^{x_2} = h^{x_2z}$, both $C$ and $S$ derive the same session key $SSK_{c,s}$.

\begin{theorem}
\label{Theo:PriSrv+Security}
Suppose that the DDH assumption holds, $\mathcal{FEME}$ is secure, $\mathcal{MAC}$ is unforgeable, and $H$ is a random oracle, then PriSrv+ is a secure service discovery protocol and satisfies bilateral anonymity.
\end{theorem}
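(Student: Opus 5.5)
We prove Theorem~\ref{Theo:PriSrv+Security} with a sequence of games in the Canetti--Krawczyk style, following the proof of PriSrv~\cite{yang2024prisrv}. Since PriSrv+ keeps the message flow of PriSrv and only swaps ACME and anonymous credentials for FEME, the plan is to reuse that game structure. Every step that previously appealed to ACME security or credential unforgeability will instead appeal to Theorems~\ref{theo:FEME-privacy}--\ref{theo:FEME-authn}. We treat the two properties separately: first service discovery security (secure key exchange plus mutual authentication), then bilateral anonymity. Throughout, $q_s$ bounds the number of sessions and broadcast cycles, and the hybrid losses are polynomial factors in $q_s$.

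\textbf{Service discovery security.}
\begin{itemize}
\item Game 0 is the real experiment. The adversary may control the network, corrupt parties whose attributes do not satisfy the test session's policies, and reveal keys of non-matching sessions.
\item Game 1 aborts if any accepted broadcast $\CT_\textsf{B}$ or client ciphertext $\CT_c$ decrypts correctly under a policy satisfied by honest keys but was not produced by an honest party holding a valid $\EK$. Any such ciphertext is a forgery against FEME authenticity (Theorem~\ref{theo:FEME-authn}); the reduction embeds the FEME challenge keys and forwards the adversary's key queries to the FEME oracles.
\item Game 2 replaces the plaintexts $\MSG_\textsf{B}$ and $\MSG_c$ of the test session by random strings of equal length. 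A reduction to FEME confidentiality (Theorem~\ref{theo:FEME-privacy}) handles this, using a hybrid over the $q_s$ broadcasts. After this game, $K_c$ and $K_s$ are uniformly hidden from any adversary whose attributes do not match.
\item Game 3 aborts if the adversary produces a valid $\sigma_c$ or $\sigma_s$ on a message not sent by the honest peer. This reduces to MAC unforgeability, because $K_c$ and $K_s$ are now independent of the adversary's view. Consequently $X_1, X_2, Y, Z$ are exactly the honest ephemeral values, and the session is bound to $(bid, sid)$, which rules out MitM and replay.
\item Game 4 replaces the pair $(X_1^{y}, X_2^{z})$ by random group elements, via DDH on the tuple $(g, X_1, Y, g^{x_1y})$, and treats $\bar H$ as a random oracle (the theorem's hypothesis names $H$; we use it for $\bar H$ as well). $SSK_{c,s}$ is then uniform, and the adversary's advantage is zero.
\end{itemize}
One extra care is needed because $Z$ is reused within a broadcast cycle. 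The DDH step should therefore be applied to the per-session exponent pair $(x_1, y)$, which is fresh for every session, so the second DH component does not need fresh randomness.

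\textbf{Bilateral anonymity.} In the anonymity game, the adversary is a malicious provider or client whose attributes do not satisfy the honest party's policy, or vice versa. It chooses two attribute/policy-value assignments with identical attribute names and must tell which one was used. We argue in three steps:
\begin{itemize}
\item Each FEME ciphertext the honest party sends is indistinguishable across the two assignments by FEME anonymity (Theorem~\ref{theo:FEME-anon}). Only attribute names are exposed, and these coincide by the admissibility condition.
\item The remaining transcript elements $\sigma_c$, $M_s$, $\sigma_s$ carry no attribute information. In a non-matching case the honest party aborts after a failed decryption, and the abort event is independent of the hidden values except through the matching predicate, which the game already permits the adversary to learn.
\item A hybrid over all honestly generated ciphertexts completes the argument.
\end{itemize}

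The main obstacle is the reduction in Game 1 and in the anonymity argument, where the simulator must answer the adversary's decryption attempts on ciphertexts it injects. FEME does not provide chosen-ciphertext security on its own. I would first exploit the padding $\phi$ together with the random oracle $\hat H$: the simulator can answer decryptions by checking $\hat H$ queries against the candidate $V$, and the padding check rejects malformed ciphertexts. If this fails, the fallback is to restrict the oracle to the decryption outcomes the FEME authenticity game already provides.
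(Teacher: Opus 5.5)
Your proposal takes essentially the same route as the paper, whose proof is only a sketch: it says the argument parallels PriSrv's with ACME replaced by FEME, obtaining service discovery security from FEME confidentiality and authenticity (Theorems~\ref{theo:FEME-privacy} and~\ref{theo:FEME-authn}) and bilateral anonymity from FEME anonymity (Theorem~\ref{theo:FEME-anon}), which is exactly your decomposition. Your explicit MAC and DDH game hops, and the simulation issue you flag (honest parties decrypting injected ciphertexts, which FEME's chosen-plaintext-style notions do not directly cover), are details the paper neither spells out nor resolves.
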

\vspace{-1mm}

\textit{Proof Sketch}. The security proof of PriSrv+ in Theorem \ref{Theo:PriSrv+Security} parallels that of PriSrv, as the main distinction between the protocols lies in replacing ACME (in PriSrv) with FEME (in PriSrv+).
The service discovery security is proved based on the confidentiality and authenticity of FEME in Theorems \ref{theo:FEME-privacy} and \ref{theo:FEME-authn}. The bilateral anonymity is proved based on FEME's anonymity in Theorem \ref{theo:FEME-anon}.
% Thus, the proof for PriSrv extends to PriSrv+, confirming it as a secure service discovery protocol and ensuring bilateral anonymity. 
% The complete proof of Theorem  \ref{Theo:PriSrv+Security} is shown in the full version of this paper \cite{PriSrv+}.
\vspace{-2mm}

\renewcommand{\arraystretch}{1.05}
\begin{table*}[thbp]
    \centering 
    \footnotesize
    \setlength{\tabcolsep}{1.35mm}
    \begin{tabular}{|c|c|c|c|c|c|c|c|c|c|c|c|c|c|c|c|c|c|c|}		
		\hline
		\multirow{3}{*}{\textbf{Protocol}} & \multicolumn{3}{c|}{\textbf{Expressiveness}} & \multicolumn{4}{c|}{\textbf{Security and Privacy}}
        & \multicolumn{5}{c|}{\textbf{Usability}}  \\
		\cline{2-13}		
		& Monotonic & Arbitrary  & Large & Privacy & Mutual  & Bilateral & Pub. Attri. & No Pre-reg. & \multicolumn{2}{c|}{No 3rd-party} & \multicolumn{2}{c|}{No In-advance} \\
		& Policy & Attribute & Universe & Broadcast & Authn. & Anon. & Hidden & Pairing & \multicolumn{2}{c|}{Dependence} & \multicolumn{2}{c|}{ID Issuance}  \\
		\hline
	PriSrv\cite{yang2024prisrv} & $\surd$ & $\times$ & $\times$ & $\surd$ & $\surd$ & $\surd$ & $\times$ & $\surd$ & \multicolumn{2}{c|}{$\surd$} & \multicolumn{2}{c|}{$\times$} \\
    \hline
    PriSrv+ & $\surd$ & $\surd$ & $\surd$ & $\surd$ & $\surd$ & $\surd$ & $\surd$ & $\surd$ & \multicolumn{2}{c|}{$\surd$} & \multicolumn{2}{c|}{$\surd$} \\
    \hline        
    \end{tabular}
    \caption{Comparison of Private Service Discovery Protocols with Bilateral Policy Control}
    \label{tab:comparePriSrv}
\end{table*}

\subsection{Comparative Advantages of PriSrv+}

In Table I of \cite{yang2024prisrv}, PriSrv is shown to be the only protocol among 10 SD protocols—including 7 standard ones (DNS-SD\cite{DNS-SD}, mDNS\cite{mDNS}, SSDP\cite{SSDP}, UPnP\cite{UPnP}, Wi-Fi\cite{WiFi}, BLE\cite{BLE}, AirDrop\cite{AirDrop}) and 3 privacy-preserving ones (PrivateDrop\cite{heinrich2021privatedrop}, CBN\cite{cassola2015authenticating}, WTSB\cite{wu2016privacy})—that achieves both high privacy and usability. Rather than revisiting prior comparisons, we directly compare PriSrv+ with PriSrv in terms of expressiveness, security and privacy, and usability. Table \ref{tab:comparePriSrv} summarizes the results.

% Long Version
% In Table I of \cite{yang2024prisrv}, PriSrv distinguishes itself as the only privacy-preserving service discovery protocol that enhances privacy while maintaining high usability, based on a thorough comparison with 7 standard service discovery protocols (DNS-SD\cite{DNS-SD}, mDNS\cite{mDNS}, SSDP\cite{SSDP}, UPnP\cite{UPnP}, Wi-Fi\cite{WiFi}, BLE\cite{BLE}, AirDrop\cite{AirDrop}) and 3 privacy-preserving SD protocols (PrivateDrop\cite{heinrich2021privatedrop}, CBN\cite{cassola2015authenticating}, WTSB\cite{wu2016privacy}). Instead of revisiting these earlier comparisons, we focus on comparing PriSrv+ directly with PriSrv in terms of expressiveness, security and privacy, and 
% usability. Table \ref{tab:comparePriSrv} summarizes comparison results.

In terms of \textit{expressiveness}, PriSrv+ supports LSSS-defined policies with a large-universe construction, allowing any arbitrary string (e.g., a postal address) as an attribute. In contrast, PriSrv’s small-universe design restricts it to a fixed attribute set, requiring a full system rebuild to add new attributes. Moreover, PriSrv represents attributes as binary vectors (1 or 0), resulting in longer vectors and increased encryption/decryption overhead for broader attribute sets.

% Long Version
% In terms of \textit{expressiveness}, PriSrv+ supports expressive policies defined by LSSS and uses a large-universe construction to accommodates an unrestricted attribute set, allowing any arbitrary string, such as a postal address, to serve as an attribute. In contrast, ACME’s small-universe design restricts PriSrv to a fixed set of attributes, necessitating a full system rebuild to add new attributes, making it less adaptable for large, flexible deployments. PriSrv also employs binary attribute vectors, where each attribute is represented as 1 or 0, leading to longer vectors and increased computational overheads for broader attribute ranges during encryption and decryption.

In terms of \textit{security and privacy}, both protocols provide similar privacy protections. However, PriSrv discloses public attributes and access policies in its outer layer due to its dual-layer design. PriSrv+ mitigates this exposure by using a partially hidden access structure that separates attribute names from values, revealing only names during matching while keeping values confidential.

% Long Version
% In terms of \textit{security and privacy}, both protocols provide similar privacy protections. However, PriSrv discloses a set of public attributes and a public access policy in its outer layer during the matching process due to its dual-layer architecture. PriSrv+ addresses this attribute exposure by employing a partially hidden access structure, which separates each attribute into an attribute name and an attribute value. Only the attribute names are revealed during matching, while the attribute values remain confidential.

In terms of \textit{usability}, neither protocol requires pre-registered pairings or third-party support for service discovery. However, PriSrv depends on anonymous credential issuance, adding management overhead. PriSrv+ eliminates this requirement by directly leveraging the sender’s encryption key, simplifying deployment without compromising functionality.

% Long Version
% In terms of \textit{usability}, neither protocol requires pre-registered pairings or third-party dependencies for service discovery. However, PriSrv requires the issuance of anonymous credentials in advance, which adds overhead for credential management. PriSrv+ eliminates this requirement by using the sender's encryption key directly, thereby avoiding the need for credential management.

This comparison shows that PriSrv+ achieves greater expressiveness, improved security and privacy, and usability.

\subsection{Discussions}
\label{subsec:PriSrv+Discussions}

\textbf{Justifying Attribute Concealment in PriSrv+}. Although PriSrv could be modified to conceal all attributes and policies within its inner layer, this would substantially increase overhead. Its dual-layer architecture relies on public attributes in the outer layer to efficiently filter mismatched services without decryption. Concealing all attributes would force clients to decrypt every broadcast, incurring significant latency. PriSrv+ addresses this limitation using FEME, which reveals only attribute names while concealing values. This preserves fast filtering and eliminates tracking risks, leading to up to 3.55× faster encryption and 6.23× faster decryption compared to PriSrv, even with its outer-layer filtering mechanism.

\textbf{Need for Expressive Attributes}.
Modern service discovery in IoT, smart cities, enterprise systems, and 5G networks requires expressive attributes—e.g., building names, project codes, or service categories—not supported by fixed identifier sets. PriSrv+ accommodates these scenarios through support for arbitrary string attributes, enabling fine-grained policy enforcement beyond traditional service discovery.

\textbf{Mitigating DoS Attacks}. To resist Denial-of-Service (DoS) attacks from excessive attribute submissions, PriSrv+ can incorporate a \textit{flexible Proof-of-Work (PoW) mechanism}. The service provider embeds a difficulty level in the broadcast identifier ($bid$), and clients must compute a session ID ($sid$) such that $H(bid,sid,\sigma_c,\textsf{CT}_c)$ satisfies this level (e.g., number of leading zeros). This imposes minimal overhead on legitimate users and service providers while significantly raising the attack cost. Dynamic difficulty adjustment and complementary measures—such as attribute limits and request throttling—further enhance resilience without harming usability.

% \blue{More detailed discussions are available in Appendix \ref{app:Discussion}.}
\section{Implementation and Evaluation} 
\label{sec:Implementation}

For a fair comparison, we used the same benchmarks, crypto library, elliptic curves, and parameters as PriSrv\cite{yang2024prisrv} when implementing PriSrv+ and PriSrv in C/C++.
We utilized (i) MIRACL library\footnote{MIRACL: multiprecision integer and rational arithmetic c/c++ library.
https://github.com/miracl/MIRACL.} for FEME and PriSrv+ implementations, (ii) three elliptic curves, including MNT159 (80-bit security), MNT201 (90-bit security), and BN256 (100-bit security) to evaluate different security levels\footnote{Pairing-Friendly Curves. https://datatracker.ietf.org/doc/draft-irtf-cfr
g-pairing-friendly-curves.}, 
(iii) SHA-256 for the the hash function, and (iv) MAC-GGM\cite{chase2014algebraic} for the MAC scheme. Our source code is available at\cite{PriSrv+}.
% Our source code, approximately \red{11,290} lines, is available at\cite{PriSrv+}.

 \subsection{Evaluation of FEME and ACME}
 \label{subsec:Eval-FEME}

Table \ref{Tab:PerFEME} presents the computation cost (comp.) and communication (comm.) cost of FEME and ACME for various algorithms on
a desktop (Intel Core i9-7920X, 12 cores, 16GB RAM). The experimental setup ensures that FEME and ACME operate under equivalent conditions as used in~\cite{yang2024prisrv}, with the same attribute numbers and policies. For FEME, parameters are $\ell_1=\ell_2=4$, $m_1=m_2=2$, and $I_1=I_2=4$, representing the sender’s and receiver’s attribute numbers, access matrix rows, and the number of attributes satisfying the access policy. ACME parameters are set similarly for comparability: $n=10$, $\ell_1=\ell_2=4$, $k=2$, and $I_1=I_2=4$, 
where $n$ is the system's total attribute number (which is fixed in the small-universe setting), and $k$ is the number of access matrix rows. 
This alignment allows a fair comparison of both schemes.

% The experiment setting ensures that FEME and ACME share the same attribute number, access matrix and policy: the parameters
% for FEME are $l_1=l_2=4$, $m_1=m_2=2$, and $I_1=I_2=4$, where $n$ is the total number of attributes in the system, $k$ is the matrix parameter and $\hat{m}$ is the circuit size. For ACME, the parameters are $n = 10$, $k = 2$, $\hat{m} = 9$ and $I_1=I_2=4$, where $l_1/l_2$ denotes the sender/receiver's attribute number, $m_1/m_2$ is the row number of the sender/receiver's access matrix, and $I_1/I_2$ is the sender/receiver's attribute number to satisfy the access policy.

% In the performance comparison, FEME and ACME share the same attribute number $l_1=l_2=|x_i=1|=4$, the same row number of access matrix $m_1=m_2=k=2$ and the same coefficient number for access structure $I_1=I_2=\hat{m}=4$.

\renewcommand{\arraystretch}{1.05}
\begin{table}[thbp]\centering 
	\footnotesize
	\setlength{\tabcolsep}{0.7mm}
    \setlength\dashlinedash{1.5pt}     % dash line setting: dash length
    \setlength\dashlinegap{1pt}        % dash line setting: gap length    
	\begin{tabular}{|c:c||c:c||c:c||c:c|}
		\hline  \multicolumn{2}{|c||}{\multirow{2}{*}{Curves}} &  \multicolumn{2}{c||}{MNT159} & \multicolumn{2}{c||}{MNT201} & \multicolumn{2}{c|}{BN256} \\
           \multicolumn{2}{|c||}{~}&  \multicolumn{2}{c||}{(80-bit Security)} & \multicolumn{2}{c||}{(90-bit Security)} & \multicolumn{2}{c|}{(100-bit Security)} \\
		\hline  \multicolumn{2}{|c||}{Schemes}& ACME & \blue{FEME} & ACME & \blue{FEME} & ACME & \blue{FEME}  \\
        \hline\hline
        \multicolumn{2}{|c||}{Algorithms}&\multicolumn{6}{c|}{Computation Costs (ms)} \\
		\hline
		\multicolumn{2}{|c||}{$\Setup$} & 20.526 & \blue{8.411} & 26.882 & \blue{9.699} & 33.344 & \blue{11.402}\\
		\multicolumn{2}{|c||}{$\EKGen$} & 35.451 & \blue{8.124} & 41.365 & \blue{9.393} & 48.485
  & \blue{10.031}\\
		\multicolumn{2}{|c||}{$\DKGen$} & 21.630 & \blue{4.150} & 18.640 & \blue{3.836} &	15.750 & \blue{4.787}\\
		\multicolumn{2}{|c||}{$\PolGen$} & 359.807 & \blue{2.998} & 327.796 & \blue{3.026} &	237.675 & \blue{3.697}\\
		\multicolumn{2}{|c||}{$\Enc$} & 146.931 & \blue{19.660} & 167.337 & \blue{20.302} &	187.822 & \blue{18.275}\\
		\multicolumn{2}{|c||}{$\Dec$} & 123.772 & \blue{20.109} & 188.346 & \blue{28.832} &	231.214 & \blue{27.283}\\
		\hline
		\hline   \multirow{1}{*}{Algo.}& \multirow{1}{*}{Param.}&
		\multicolumn{6}{c|}{Communication Costs (KB)}\\
		\hline
		$\Setup$ &	$|\mpk|$ & 1.044 & \blue{0.344} & 1.332 & \blue{0.428} &	4.128 & \blue{1.071}\\
		$\Setup$ &	$|\msk|$ & 1.2 & \blue{0.065} & 1.36 & \blue{0.074} & 1.6 & \blue{0.083}\\	
		$\EKGen$ & $|\EK|$ & 0.172 & \blue{0.320} & 0.220 & \blue{0.417} &	0.544 & \blue{1.184}\\	
		$\DKGen$ & $|\DK|$ & 0.86 & \blue{0.235} & 1.1 & \blue{0.306} &	2.72 & \blue{0.912}\\	
		$\PolGen$ & $|\SK|$ & 13.932 & \blue{0.127} & 17.82 & \blue{0.165} &	44.064 & \blue{1.169}\\
		$\Enc$ & $|\CT|$ & 164.34 & \blue{23.287} & 212.964 & \blue{29.599} &	537.984 & \blue{63.104}\\	
		\hline
	\end{tabular}
	\caption{Performance of FEME and ACME (on Desktop)}	
	\label{Tab:PerFEME}
\end{table}

Table \ref{Tab:PerFEME} shows that FEME's computation (upper part) and communication costs (lower part) are substantially lower than ACME's. While both schemes see performance declines as security levels increase from 80-bit to 100-bit, FEME's advantage generally grows, except for $\DKGen$ and $\PolGen$, where the performance gap narrows.

In $\Setup$, ACME's computation cost is $(n+3)k^2\cdot \exp_1+k\cdot\exp_T$, while FEME's is only $3\exp_2+2\exp_T$, where $\exp_1$, $\exp_2$, and $\exp_T$ represent the exponentiation costs in $\mathbb{G}_1$, $\mathbb{G}_2$, and $\mathbb{G}_T$, respectively. ACME averages 26.917 ms for system setup, whereas FEME takes 9.837 ms, making it 1.74$\times$ faster. FEME also reduces the sizes of the master public key and master private key by 71.68$\%$ and 94.66$\%$, respectively.

In $\EKGen$, FEME's time cost is $(l_1+2)\exp_1+2\exp_2$, while ACME's time cost is $(n+2)\exp_1+7\exp_2+2 ~pair$, where $pair$ is the computation time for a bilinear pairing operation. FEME averages 9.183 ms versus ACME's 41.767 ms, making it 3.55$\times$ faster, though FEME's $\EK$ size is larger at 0.640 KB compared to ACME’s 0.312 KB.

In $\DKGen$ and $\PolGen$, FEME shows significant improvements: $\DKGen$ time drops from 18.673 ms to 4.258 ms, and $\PolGen$ time from 308.426 ms to 3.240 ms, achieving 3.39$\times$ and 94.19$\times$ speedups, respectively. FEME's $\DK$ and $\SK$ sizes are also reduced by 68.97$\%$ and 98.07$\%$, respectively.
FEME's $\Enc$ and $\Dec$ are 7.62$\times$ and 6.23$\times$ faster than ACME's, with times reduced from 167.383 ms to 19.412 ms and from 181.111 ms to 25.041 ms, respectively. Additionally, FEME's ciphertext size $|\CT|$ is 87.33$\%$ smaller on average.

Overall, FEME significantly outperforms ACME, except in the $|\EK|$ size of $\EKGen$.

\subsection{Evaluation of PriSrv+ and PriSrv}
\label{subsec:Eval-PriSrv+}

\renewcommand{\arraystretch}{1.03}
\begin{table*}[thbp]\centering 
	\footnotesize
	\setlength{\tabcolsep}{1.6mm}
    \setlength\dashlinedash{1.5pt}     % dash line setting: dash length
    \setlength\dashlinegap{1pt}        % dash line setting: gap length    
	\begin{tabular}{|c|c||c:c|c:c||c:c|c:c||c:c|c:c|c|c|c|c|c|c|c|c|c|c|}
		\hline  \multicolumn{2}{|c||}{Curves} &  \multicolumn{4}{c||}{MNT159 (80-bit Security)} & \multicolumn{4}{c||}{MNT201 (90-bit Security)} & \multicolumn{6}{c|}{BN256 (100-bit Security)} \\
		\hline  \multicolumn{2}{|c||}{Costs}& \multicolumn{2}{c|}{Comp. (ms)} & \multicolumn{2}{c||}{Comm. (KB)} & \multicolumn{2}{c|}{Comp. (ms)} & \multicolumn{2}{c||}{Comm. (KB)} & \multicolumn{2}{c|}{Comp. (ms)} & \multicolumn{2}{c|}{Comm. (KB)} \\
		\hline  \multicolumn{2}{|c||}{Protocols}& PriSrv & \blue{PriSrv+} & PriSrv & \blue{PriSrv+} & PriSrv & \blue{PriSrv+} & PriSrv & \blue{PriSrv+} & PriSrv & \blue{PriSrv+} & PriSrv & \blue{PriSrv+} \\
        \hline\hline
        \multirow{1}{*}{No.}& \multirow{1}{*}{Devices}&\multicolumn{12}{c|}{Privacy-preserving Service Broadcast} \\
		\hline
		1 & Desktop & 158.931 & \blue{22.891} & 164.34 & \blue{23.29} &	180.337 & \blue{23.512} & 212.96 & \blue{29.602} &	202.822 & \blue{20.674} & 537.98 & \blue{63.107}\\
		2&	Laptop &	216.493 & \blue{31.168} & 164.34 & \blue{23.29} & 261.059 & \blue{34.035} & 212.96 & \blue{29.602} &	287.287 & \blue{29.281} & 537.98 & \blue{63.107}\\
		3&	Phone &	385.553 & \blue{55.531} & 164.34 & \blue{23.29} & 443.686 & \blue{57.853} & 212.96 & \blue{29.602} & 482.725 & \blue{49.202} & 537.98 & \blue{63.107}\\
		4&	Raspberry Pi&	638.259 & \blue{91.927} & 164.34 & \blue{23.29} & 880.868 & \blue{114.832} & 212.96 & \blue{29.602} &	1188.392 & \blue{121.139} & 537.98 & \blue{63.107}\\		
		\hline
		\hline   \multirow{1}{*}{No.}& \multirow{1}{*}{Devices}&
		\multicolumn{12}{c|}{Anonymous Mutual Authentication}\\
		\hline
		1 &	Desktop &	429.282 & \blue{97.067}	& 164.45 & \blue{24.69} &517.512 & \blue{119.389} & 213.09 & \blue{31.364} & 673.039 & \blue{158.003}	& 538.83 & \blue{66.385}\\	
		2& Laptop &	576.161 & \blue{130.268} & 164.45 & \blue{24.69} &686.054 & \blue{158.271}& 213.09 & \blue{31.364} &	854.177 & \blue{201.518} & 538.83 & \blue{66.385}\\	
		3&	Phone &	727.572 & \blue{164.512}& 164.45 & \blue{24.69} &892.712 & \blue{205.952}& 213.09 & \blue{31.364} &	972.163 & \blue{228.222} & 538.83 & \blue{66.385}\\
		4&	Raspberry Pi &	1224.365 & \blue{276.851}& 164.45 & \blue{24.69} &1832.187 & \blue{422.671}& 213.09 & \blue{31.364} & 2711.013 & \blue{636.443} & 538.83 & \blue{66.385}\\		
		\hline
	\end{tabular}
	\caption{Performance of PriSrv+ and PriSrv (on Four Platforms)}	
	\label{Tab:PerPriSrv}
\end{table*}

To comprehensively evaluate PriSrv+ and PriSrv~\cite{yang2024prisrv}, we conducted tests on four platforms (Table~\ref{Tab:PerPriSrv}) using the same parameters as Table~\ref{Tab:PerFEME} across three elliptic curves. The platforms include a desktop (Intel Core i9-7920X), a laptop (Intel Core i5-10210U), a smartphone (ARM Cortex @2.4GHz), and a Raspberry Pi (ARM Cortex @1.5GHz), covering both high-performance and mobile environments relevant to wireless service discovery.

% Long Version
% To comprehensively evaluate PriSrv+ and PriSrv \cite{yang2024prisrv}, we conducted performance tests on four platforms as shown in Table \ref{Tab:PerPriSrv}, using the same parameter as in Table \ref{Tab:PerFEME} across three elliptic curves. The platforms include a desktop (Intel Core i9-7920X, 12 cores, 16GB RAM), a laptop (Intel Core i5-10210U, 4 cores, 8GB RAM), a smartphone (ARM Cortex @2.4GHz, 4 cores, 4GB RAM), and a Raspberry Pi (ARM Cortex @1.5GHz, 4 cores, 2GB RAM). The desktop setup matched our earlier evaluations of FEME and ACME, while the other platforms were chosen for mobile device relevance to wireless service discovery.
 
PriSrv+ consistently outperforms PriSrv in computation overhead. In the broadcast phase (upper Table~\ref{Tab:PerPriSrv}), PriSrv+ averages 54.336 ms across platforms, compared to PriSrv’s 443.868 ms—7.17$\times$ faster. In the mutual authentication phase (lower Table), PriSrv+ averages 233.181 ms vs. PriSrv’s 1008.02 ms, achieving a 3.32$\times$ improvement. The total computation time of PriSrv+ is 287.517 ms, well under one second and perceived as an ``immediate response''~\cite{heinrich2021privatedrop,yang2024prisrv}, while PriSrv takes 1451.88 ms, making PriSrv+ 4.05$\times$ faster overall.

% Long Version
% PriSrv+ outperforms PriSrv in computation overhead (comp.) across all scenarios. During the privacy-preserving broadcast phase (upper section of Table \ref{Tab:PerPriSrv}), PriSrv+ averages 54.336 ms across four platforms and three elliptic curves, whereas PriSrv averages 443.868 ms, making PriSrv+ 7.17$\times$ faster.
% In the anonymous mutual authentication phase (lower section), PriSrv+ averages 233.181 ms, compared to 1008.02 ms for PriSrv, showing a 3.32$\times$ improvement. PriSrv+ achieves an average total computation time of 287.517 ms for both phases, which is well below one second and thus can be perceived as an ``immediate response" by users~\cite{heinrich2021privatedrop,yang2024prisrv}. In comparison, PriSrv's total computation time is 1451.88 ms, making PriSrv+ 4.05$\times$ faster than PriSrv.

Table \ref{Tab:PerPriSrv} indicates similar communication costs between the two phases within each protocol, dominated by the ciphertext size. FEME’s efficiency results in PriSrv+ reducing broadcast phase communication by 87.33$\%$ and authentication phase by 86.64$\%$ compared to PriSrv on average.

\begin{figure}[htp]        
	\color{black}\begin{center}
		\subfloat[\footnotesize{Computation Overheads}]{			\includegraphics[width=3.25in]{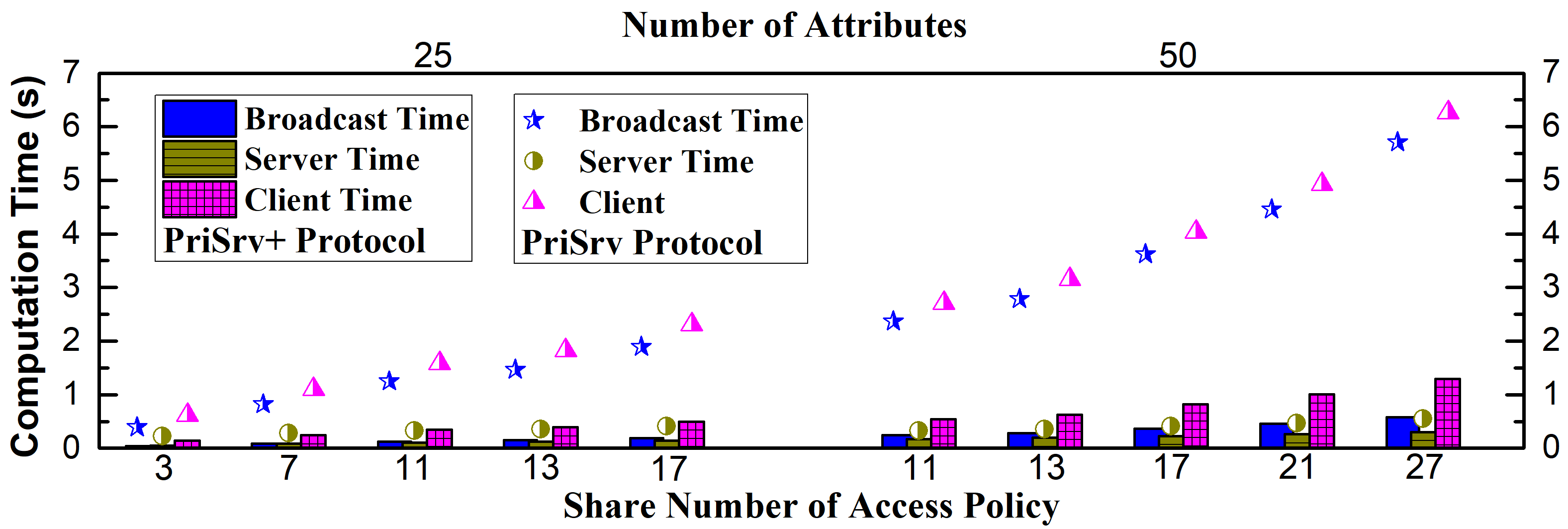}}\\
        \vspace{2mm}
		\subfloat[\footnotesize
        {Communication Overheads}]{			\includegraphics[width=3.25in]{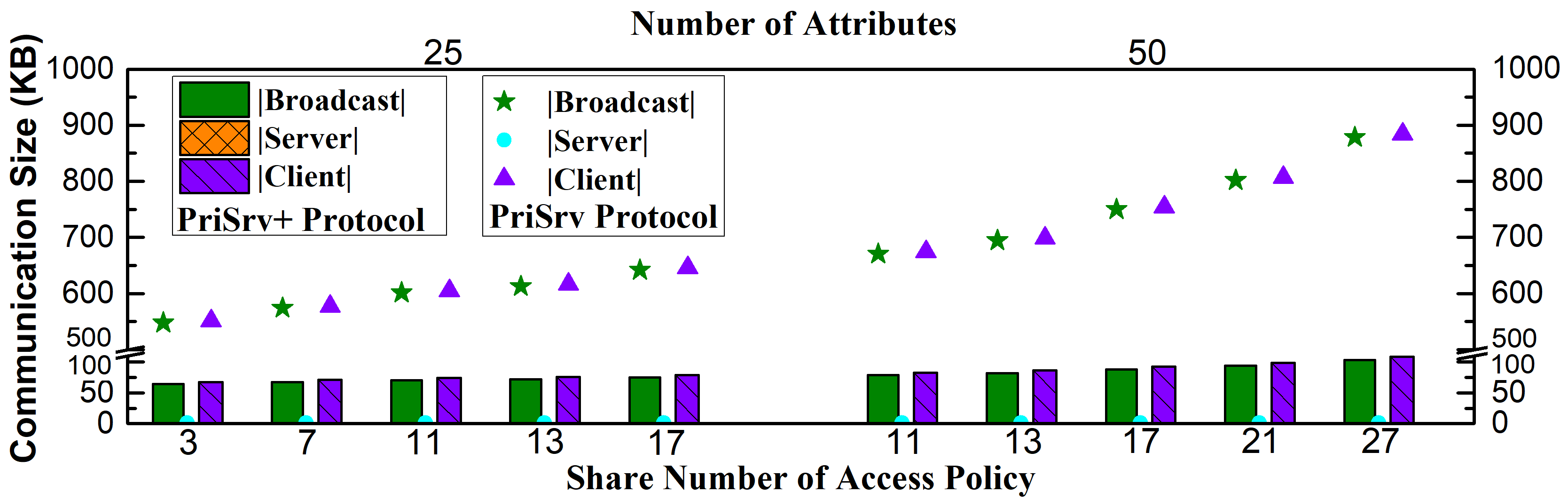}}
	\end{center}
 \vspace{-2mm}
\caption{Comparison of PriSrv+ and PriSrv}
	\label{Fig:PriSrvPerf}
\end{figure}

We implemented two protocols in a real-world wireless environment using an open-source Wi-Fi Alliance project~\cite{WiFi} supporting IEEE 802.1X, with \textit{wpa$\_$supplicant} for the client and \textit{hostapd} for the service provider. Experiments were run on two laptops using BN256.

Fig.~\ref{Fig:PriSrvPerf}(a) compares the computation costs of PriSrv+ and PriSrv under complex policies with more attributes. Bars show PriSrv+’s broadcast time ($T_B$), server time ($T_S$), and client time ($T_C$) in the anonymous authentication phase; symbols represent the same for PriSrv. We vary the number of attributes (top x-axis) and policy share number (bottom x-axis), where the latter reflects the number of shares required to reconstruct a secret~\cite{beimel1996secure} (i.e., the edge number in an equivalent access tree). Both protocols are tested with identical attribute and policy sizes to ensure fairness.

% Long Version
% Fig. \ref{Fig:PriSrvPerf}(a) presents the computation costs of two protocols with more complex policies and larger attribute numbers, where the bars represent PriSrv+'s broadcast time ($T_B$), server's computation time ($T_S$), and client's computation time ($T_C$) during the anonymous authentication phase, while the symbols represent those for PriSrv.
% We compare them by varying the number of attributes (top x-axis) and the share number of access policy (bottom x-axis), where the latter is the number of shares needed to reconstruct a shared secret based on the access policy\cite{beimel1996secure} (i.e., the edge number in an equivalent access tree). The experiment ensures that they share the same number of attributes and policies for a fair comparison.

In Fig. \ref{Fig:PriSrvPerf}(a), we vary the attribute number between $\{25,50\}$ and the share number of the access policy among $\{3,7,11,13,17,21,27\}$ to test performance with expressive policies. PriSrv+ shows significantly lower computation times than PriSrv. Average times for PriSrv+ are $T_B$ = 0.252 s, $T_S$ = 0.166 s, and $T_C$ = 0.592 s, while for PriSrv, $T_B$ = 2.478 s, $T_S$ = 0.374 s, and $T_C$ = 2.853 s, making PriSrv+ 8.83$\times$, 1.25$\times$, and 3.82$\times$ faster, respectively.

Fig. \ref{Fig:PriSrvPerf}(b) shows communication overheads using the same parameters as Fig. \ref{Fig:PriSrvPerf}(a). Bars indicate PriSrv+ broadcast and the service provider's/client's authentication overheads ($|\textsf{Broadcast}|$, $|\textsf{Server}|$, $|\textsf{Client}|$), with PriSrv shown as symbols. The server's communication cost during the authentication phase is constant and identical for both protocols at $|\textsf{Server}|$ = 0.82 KB. PriSrv+ has average communication costs of $|\textsf{Broadcast}|$ = 79.623 KB and $|\textsf{Client}|$ = 83.64 KB, while PriSrv averages $|\textsf{Broadcast}|$ = 677.488 KB and $|\textsf{Client}|$ = 681.505 KB. PriSrv+ reduces these costs by 88.25$\%$ and 87.73$\%$, respectively, compared to PriSrv.

The evaluation shows that PriSrv+ is notably more efficient than PriSrv in computation and communication costs.

% \blue{\textbf{Optimized Efficiency and Scalability}.
% PriSrv+ maintains computational efficiency by ensuring that its overhead scales proportionally with the number of attributes, a fundamental characteristic of well-designed ABE schemes. Unlike conventional ABE systems, where increased attribute flexibility often leads to significant performance degradation, PriSrv+ is specifically optimized to handle an unrestricted attribute universe without incurring excessive computational costs. Empirical results demonstrate up to 7.17$\times$ faster service broadcasts and 3.32$\times$ faster mutual authentication than PriSrv, even with complex attribute sets. These improvements stem from efficient encryption and decryption optimizations, allowing PriSrv+ to provide expressive attribute matching while preserving practical performance for real-time service discovery in wireless networks.} 

\textbf{Optimized Efficiency and Scalability}.
PriSrv+ ensures computational efficiency with overhead growing proportionally to the attribute number, consistent with standard ABE schemes. Unlike conventional ABE systems that suffer from performance degradation with flexible attributes, PriSrv+ supports an unrestricted attribute universe with minimal overhead. Experiments show up to 7.17$\times$ faster service broadcasts and 3.32$\times$ faster mutual authentication than PriSrv, even with complex policies. These gains result from encryption and decryption optimizations, enabling expressive attribute matching while maintaining real-time performance in wireless networks.

\subsection{Interoperability}

PriSrv+ is an enhanced version of PriSrv, improving expressiveness, privacy, and usability. Like PriSrv, it is interoperable with existing SD protocols like mDNS, BLE, EAP, AirDrop.
% aligning with the evolving security needs of modern wireless networks.

\textbf{Integration Approaches.} PriSrv+ can be integrated through two approaches. The first places PriSrv+ at the application layer, allowing it to work with existing lower-layer protocols. If the payload exceeds protocol limits, the lower layers handle segmentation and reassembly without altering PriSrv+’s logic. The second replaces lower-layer protocols with PriSrv+, requiring specific adaptations. We illustrate the first approach using mDNS and BLE, and the second with EAP and AirDrop.

% Long Version
% \textbf{Integration Approaches}. PriSrv+ can be integrated into wireless protocols through two approaches. The first positions PriSrv+ at the application layer, allowing it to work with existing lower-layer protocols. If payload size exceeds protocol limits, lower layers manage segmentation and reassembly without modifying PriSrv+’s logic. The second approach replaces lower-layer protocols with PriSrv+, requiring specific adaptations. Below we use mDNS and BLE as examples to illustrate the first approach and use EAP and AirDrop as examples to illustrate the second.

\textbf{Privacy-Enhanced mDNS and BLE.} PriSrv+ can integrate into the Vanadium framework \cite{Vanadium} to develop privacy-enhanced mDNS and BLE. Vanadium provides service discovery APIs for protocols like mDNS \cite{mDNS} and BLE \cite{BLE}. mDNS, often paired with DNS-SD \cite{DNS-SD}, supports additional attributes in TXT records (up to 65,535 bytes). PriSrv+ broadcasts $(bid,\textsf{CT}_B)$ using 64,622 bytes on BN256 within a single TXT record, whereas PriSrv requires nine TXT records (to transmit 531,996 bytes), reducing the mDNS packet size by 88.89\%.

% Long Version
% \textbf{Privacy-Enhanced mDNS and BLE}. 
% PriSrv+ can integrate into the Vanadium framework \cite{Vanadium} to develop privacy-enhanced mDNS and BLE. Vanadium provides service discovery APIs for protocols like mDNS \cite{mDNS} and BLE \cite{BLE}. mDNS, often paired with DNS Service Discovery (DNS-SD)\cite{DNS-SD}, carries additional attributes in TXT records of up to 65,535 bytes. PriSrv+ broadcasts $(bid,\textsf{CT}_B)$ using 64,622 bytes on BN256, fitting within a single TXT record. In contrast, PriSrv's broadcast of the same message takes 531,996 bytes on BN256 and thus requires nine TXT records. PriSrv+ reduces the mDNS packet size by 88.89$\%$, demonstrating its efficiency in minimizing transmission size.

BLE’s standard 31-byte payload challenges the transmission of large ciphertexts. Using the BLE Attribute Protocol (ATT) and PDU Segmentation, if the payload exceeds BLE’s limit, it is segmented into multiple PDUs and reassembled by the receiver. PriSrv+ achieves an average 87.73\% reduction for clients compared to PriSrv, enabling efficient segmentation and limited fragmentation.

% Long Version
% BLE’s standard 31-byte broadcast payload challenges transmitting large ciphertexts. To support privacy-enhanced BLE with PriSrv+, the BLE Attribute Protocol (ATT) and PDU Segmentation can increase payload capacity. If the payload exceeds BLE’s standard packet size, ATT segments the data into multiple PDUs for sequential transmission, which are reassembled by the receiver. PriSrv+ achieves an average 87.73$\%$ reduction for clients compared to PriSrv, enabling efficient segmentation and minimal fragmentation.

\textbf{Privacy-Enhanced EAP.} PriSrv+ enhances privacy in EAP by integrating encrypted broadcasts and responses. An access point (AP) acts as a pass-through for interactions between the client and service provider. The provider broadcasts privacy-preserving service information, including a broadcast identifier and FEME-generated ciphertext, matching EAP’s initial authentication request. A successful decryption by the client prompts an encrypted reply, which the provider decrypts and then responds with an authentication tag containing DH shares and a MAC. The client verifies the tag, computes a secret session key, and signals success, while the provider computes the same key for a secure session. EAP messages are then encapsulated in EAPOL frames and sent as RADIUS packets. Compared to PriSrv, PriSrv+ reduces communication costs by 86.64\% during mutual authentication.

% Long Version
% \textbf{Privacy-Enhanced EAP}. PriSrv+ enhances privacy in the Extensible Authentication Protocol (EAP) by integrating encrypted broadcasts and responses. An access point (AP) facilitates interactions between the client and service provider, acting as a pass-through for the authentication server. The service provider broadcasts privacy-preserving service information, including a broadcast identifier and ciphertext generated using FEME, aligning with EAP’s initial authentication request. If the client successfully decrypts a broadcast message, it responds with an encrypted reply. The service provider decrypts the reply and, upon success, sends an authentication tag to the client, containing DH shares and a MAC. The client verifies the authentication tag, computes a secret session key, and signals success to the server. The service provider computes the same secret session key to establish a secure session. After this, EAP messages are encapsulated in EAPOL frames and sent as RADIUS packets. Compared to PriSrv, PriSrv+ reduces communication costs by 86.64$\%$ during mutual authentication, making it more suitable for privacy-preserving Wi-Fi connections in resource-constrained environments.

\textbf{Privacy-Enhanced Apple AirDrop.} 
AirDrop uses BLE to broadcast a hashed service provider identity for detecting nearby clients, followed by a TLS handshake that exposes identities via cleartext certificate exchange. Using the PrivateDrop mechanism \cite{heinrich2021privatedrop}, PriSrv+ improves privacy by preventing the transmission of service provider identifiers during BLE advertising and encrypting both parties' certificates with FEME at the start of the TLS handshake. Apple may act as a key generation center, producing the necessary secret keys alongside existing iCloud certificates.

% Long Version
% \textbf{Privacy-Enhanced Apple AirDrop}. 
% AirDrop uses BLE to broadcast the hashed identity of a service provider to detect nearby potential clients. If a match is found, a TLS handshake follows, exchanging their certificates in cleartext and exposing their identities to potential tracking attacks. Using the PrivateDrop mechanism \cite{heinrich2021privatedrop}, PriSrv+ enhances AirDrop's privacy by preventing the transmission of service provider identifiers during BLE advertising and encrypting both parties' certificates with FEME at the start of the TLS handshake. In this setting, Apple may act as a
% key generation center to generate necessary secret keys for the involving parties
% alongside existing iCloud certificates.

\section{Limitations and Future Work} 
\label{sec:Limitations}

While PriSrv+ introduces notable advancements in private service discovery, certain limitations remain.
% These include reliance on a trusted authority, lack of a built-in revocation mechanism, and absence of client abuse mitigation. Addressing these challenges in future work can enhance the robustness and adaptability of PriSrv+.

\textbf{Trusted Authority for Attribute Assignment}.
Although PriSrv+ eliminates credential issuance and revocation burdens of PriSrv, it still depends on a \textit{trusted authority} for attribute assignment during registration—a common but centralized assumption in ABE systems. \textit{Future work} could explore decentralized alternatives to reduce this trust dependency.

\textbf{Revocation Mechanism}.
PriSrv+ does not include an integrated revocation scheme, but it can adopt existing ABE-based approaches such as \textit{key updates}, \textit{attribute expiration}, \textit{proxy re-encryption}, and \textit{server-assisted revocation}. \textit{Future research} may focus on lightweight, scalable revocation strategies suited to dynamic service discovery settings.

\textbf{Mitigating Client Abuse}.
Strong anonymity in PriSrv+ may allow clients to misuse services. Although not addressed natively, PriSrv+ can incorporate ABE-based \textit{traitor tracing} techniques, such as embedding user-specific identifiers in keys or using traceable ciphertexts. \textit{Key-insulated architectures} and dynamic revocation can further mitigate abuse. \textit{Future work} should explore integrating such accountability mechanisms without compromising user privacy.

\section{Conclusion} 
\label{sec:Conclusion}

PriSrv+ significantly advances privacy-preserving service discovery in wireless networks by introducing Fast and Expressive Matchmaking Encryption (FEME). It overcomes the limitations of prior schemes by enabling expressive bilateral access control while enhancing efficiency, security, privacy, and usability. Evaluations demonstrate notable gains in performance and reduced communication overhead, making it well-suited for resource-constrained devices. With formal security guarantees and compatibility with existing wireless protocols, PriSrv+ effectively meets the privacy and security demands of wireless service discovery environments.

% Long Version
% PriSrv+ marks a significant advance in privacy-preserving service discovery protocols for wireless networks. By introducing Fast and Expressive Matchmaking Encryption (FEME), PriSrv+ addresses the limitations of prior schemes, supporting expressive access control policies while improving efficiency, security and privacy, and usability. Comprehensive evaluations show substantial improvements in encryption and decryption performance, with reduced ciphertext size and communication overhead, making PriSrv+ suitable for resource-constrained devices. The interoperability of PriSrv+ with existing wireless service discovery protocols, coupled with formal security proofs, ensures that PriSrv+ fulfills the security and privacy requirements of modern wireless networks.

\section*{Acknowledgment}

The authors thank the anonymous reviewers for their valuable comments and insightful suggestions.

This research is supported by the National Research Foundation, Singapore and Infocomm Media Development Authority under its Trust Tech Funding Initiative, AXA Research Fund, National Natural Science Foundation of China (No. 62372110, 62332007, U22B2028), Fujian Provincial Natural Science of Foundation (No. 2023J02008), Lee Kong Chian Chair Professorship, University of Oregon School of Law, Consumer Protection Research Grant 2025-2026 (under Grant No. 4236D0), National Science Foundation (No. 2112471), Werner Siemens-Stiftung (WSS) as part of the Centre for Cyber Trust (CEYT),  Science and Technology Major Project of Tibetan Autonomous Region of China (No. XZ202201ZD0006G), Open Research Fund of Machine Learning and Cyber Security Interdiscipline Research Engineering Center of Jiangsu Province (No. SDGC2131), National Joint Engineering Research Center of Network Security Detection and Protection Technology, Guangdong Key Laboratory of Data Security and Privacy Preserving, Guangdong Hong Kong Joint Laboratory for Data Security and Privacy Protection, and Engineering Research Center of Trustworthy AI, Ministry of Education. Any opinions, findings and conclusions or recommendations expressed in this material are those of the author(s) and do not reflect the views of National Research Foundation, Singapore and Infocomm Media Development Authority.

% \clearpage
\section*{Ethics Considerations}

This work presents PriSrv+, a cryptographic protocol for privacy-preserving service discovery. It does not involve human subjects, real-world deployments, or personal data—only synthetic datasets and simulated attributes were used. PriSrv+ is designed to enhance user privacy and prevent tracking and profiling attacks, without introducing or exploiting system vulnerabilities. All cryptographic techniques follow established models, and no responsible disclosure was required. We conducted this research in accordance with ethical principles outlined in the Menlo Report and believe it contributes positively to secure communications.

% Long Version
% This work focuses on the design and evaluation of PriSrv+, a cryptographic protocol for privacy-preserving service discovery in wireless networks. Our research does not involve human subjects, personal data, or any user-identifiable information. All experiments and evaluations were conducted using synthetic datasets and simulated service attributes. No real-world deployments, user studies, or data collection from individuals were performed. 
% We believe that PriSrv+ poses no direct ethical risk. On the contrary, it is specifically designed to enhance user privacy, mitigate identity exposure, and prevent tracking and profiling attacks in service discovery protocols. The cryptographic techniques used in PriSrv+ follow well-established models and assumptions in the privacy and security community. 
% This research does not introduce or exploit vulnerabilities in existing systems and does not require responsible disclosure. We followed ethical research principles throughout, in line with the Menlo Report guidelines. We believe the publication of this work contributes positively to the development of secure and privacy-preserving communication systems.

\bibliographystyle{plain}
\bibliography{ref}

% \clearpage
\appendices

\section{Security Model of FEME}
\label{app:SecModel}

The security models for FEME outline its confidentiality, anonymity, and authenticity properties. \textit{Confidentiality} ensures that no probabilistic polynomial-time (PPT) adversary can distinguish between two challenge messages encrypted under a target attribute set and policy, even with access to all the key generation oracles, with a restriction that the decryption keys for the target attribute set and policy have not been queried. \textit{Anonymity} guarantees that no PPT adversary, who outputs two target attribute sets and policies, can distinguish which attribute set or policy was used by the challenger to create a ciphertext,
even with access to all the key generation oracles, with a restriction that the decryption keys for the two target attribute sets and policies have not been queried.  \textit{Authenticity} ensures that an adversary cannot forge a valid ciphertext capable of passing decryption without possessing an attribute encryption key with attributes that satisfy the target access policy, even with access to all the key generation oracles. 

% \begin{definition}
% \label{ACME:correctness}
% A matchmaking encryption encryption scheme $\mathcal{ME}$ is \textit{correct} if for all security parameter $\lambda$,
% \small 
% $$\textsf{Pr}\left[
% 	\begin{array}{l}	
% 	(\mpk,\msk)\xleftarrow{\$}\Setup(1^{\lambda});\\
% 	\textsf{EK}_{\mathcal{S}_{\snd}}\xleftarrow{\$}\EKGen(\msk,\mathcal{S}_{\snd});\\
% 	\textsf{DK}_{\mathcal{S}_{\rcv}}\xleftarrow{\$}\DKGen(\msk,\mathcal{S}_{\rcv});\\
% 	\textsf{SK}_{\mathbb{A}_{\rcv}}\xleftarrow{\$}\PolGen(\msk,\mathbb{A}_{\rcv});\\	
% 	\CT_{\snd}\xleftarrow{\$}\Enc(\textsf{EK}_{\mathcal{S}_{\snd}},\mathbb{A}_{\snd},\msg):\\
% 	 \mathcal{S}_{\snd}\models\mathbb{A}_{\rcv} ~\bigwedge~ \mathcal{S}_{\rcv}\models\mathbb{A}_{\snd} ~\bigwedge\\
% 	\Dec(\textsf{DK}_{\mathcal{S}_{\rcv}},\textsf{SK}_{\mathbb{A}_{\rcv}},\CT_{\snd})=\bot
% 	\end{array}
% 	\right]$$
% \normalsize
% $\leq\nu(\lambda)$, where $\nu$ is a negligible function.
% \end{definition}

\begin{definition}
\label{ME:privacy}
A FEME scheme $\mathcal{FEME}$ satisfies \textit{confidentiality} if for any PPT adversary $\mathcal{A}=(\mathcal{A}_1,\mathcal{A}_2)$, there exists a negligible function $\nu$ such that $\textsf{Adv}_{\mathcal{FEME}}^{\textsf{conf}}(\lambda)\overset{\textsf{def}}{=}$
\small
$$\textsf{Pr}\left[b'=b~
\begin{array}{|l}
	(\mpk,\msk)\leftarrow\Setup(1^{\lambda}),b\xleftarrow{\$}\{0,1\}\\
	(\msg_0^*,\msg_1^*,\mathcal{S}_{\snd}^*, \mathbb{A}_{\snd}^*)\\
	\quad\quad\quad\quad\quad\quad\quad\quad\quad\quad\leftarrow\mathcal{A}_1^{\mathcal{O}_1,\mathcal{O}_2,\mathcal{O}_3}(\mpk)\\	    \CT_{\snd}^*\xleftarrow{\$}\Enc(\EK_{\mathcal{S}_{\snd}^*},\mathbb{A}_{\snd}^*,\msg_b)\\
	b'\leftarrow\mathcal{A}_2^{\mathcal{O}_1,\mathcal{O}_2,\mathcal{O}_3}(\CT_{\snd}^*)\\
\end{array}
\right]$$
\normalsize
$\leq\nu(\lambda),$ where oracles $\mathcal{O}_1$, $\mathcal{O}_2$, $\mathcal{O}_3$ are implemented by $\EKGen$ $(\msk,\cdot)$, $\DKGen(\msk,\cdot)$, $\PolGen(\msk,\cdot)$, respectively. 
It is required that $\mathcal{O}_2$ and $\mathcal{O}_3$  are not queried for attributes and policies that can satisfy $(\mathcal{S}_{\snd}^*, \mathbb{A}_{\snd}^*)$. 
\end{definition}

% This model only captures security under chosen plaintext attacks (CPA). We can extend the above definition by introducing another decryption oracle $\mathcal{O}_4$ which can decrypt ciphertexts except the challenge ciphertext $\CT^*$ to capture security under chosen-ciphertext attacks (CCA). 

\begin{definition}
\label{ME:Anon}
A FEME scheme $\mathcal{FEME}$ satisfies \textit{anonymity} if for any PPT adversary $\mathcal{A}=(\mathcal{A}_1,\mathcal{A}_2)$, there exists a negligible function $\nu$ such that $\textsf{Adv}_{\mathcal{FEME}}^{\textsf{anon}}(\lambda)\overset{\textsf{def}}{=}$
\small
$$\textsf{Pr}\left[b'=b~
\begin{array}{|l}
	(\mpk,\msk)\leftarrow\Setup(1^{\lambda}),b\xleftarrow{\$}\{0,1\}\\
	(\msg^*,\mathcal{S}_{\snd_0}^*, \mathcal{S}_{\snd_1}^*,\mathbb{A}_{\snd_0}^*,\mathbb{A}_{\snd_1}^*)\\
	\quad\quad\quad\quad\quad\quad\quad\quad\quad\quad\leftarrow\mathcal{A}_1^{\mathcal{O}_1,\mathcal{O}_2,\mathcal{O}_3}(\mpk)\\	    \CT_{\snd}^*\xleftarrow{\$}\Enc(\EK_{\mathcal{S}_{\snd}^*},\mathbb{A}_{\snd}^*,\msg_b)\\
	b'\leftarrow\mathcal{A}_2^{\mathcal{O}_1,\mathcal{O}_2,\mathcal{O}_3}(\CT_{\snd}^*)\\
\end{array}
\right]$$
\normalsize
$\leq\nu(\lambda),$ where oracles $\mathcal{O}_1$, $\mathcal{O}_2$, $\mathcal{O}_3$ are implemented by $\EKGen$ $(\msk,\cdot)$, $\DKGen(\msk,\cdot)$, $\PolGen(\msk,\cdot)$, respectively. 
It is required that $\mathcal{O}_2$ and $\mathcal{O}_3$  are not queried for attributes and policies that can satisfy $(\mathcal{S}_{\snd_0}^*, \mathbb{A}_{\snd_0}^*)$ or $(\mathcal{S}_{\snd_1}^*, \mathbb{A}_{\snd_1}^*)$, where $\mathcal{S}_{\snd_0}^*=\{ n_i,v_{i,0}\}_{i\in[\ell_1]}$, $\mathcal{S}_{\snd_1}^*=\{ n_i,v_{i,1}\}_{i\in[\ell_1]}$, $\mathbb{A}_{\snd_0}^*=(\M,\pi,\Psi_{\pi(i),0}=\{ n_{\pi(i)},v_{\pi(i),0}\}_{i\in[m_1]})$, and $\mathbb{A}_{\snd_1}^*=(\M,\pi,\Psi_{\pi(i),1}=\{ n_{\pi(i)},v_{\pi(i),1}\}_{i\in[m_1]})$. 
\end{definition}

\begin{definition}
\label{ME:authenticity}
A FEME scheme $\mathcal{FEME}$ satisfies \textit{authenticity} if for any PPT adversary $\mathcal{A}$, there exists a negligible function $\nu$ such that $\textsf{Adv}_{\mathcal{FEME}}^{\textsf{auth}}(\lambda)\overset{\textsf{def}}{=}$
\small
$$\textsf{Pr}\left[
\begin{array}{l}	    
    (\mpk,\msk)\leftarrow\Setup(1^{\lambda})\\
	(\CT_{\snd},\mathbb{A}_{\rcv},\mathcal{S}_{\rcv})\leftarrow\mathcal{A}^{\mathcal{O}_1,\mathcal{O}_2,\mathcal{O}_3}(\mpk)\\ 
    \textsf{DK}_{\mathcal{S}_{\rcv}}\leftarrow\DKGen(\msk,\mathcal{S}_{\rcv})\\
 
	\textsf{SK}_{\mathbb{A}_{\rcv}}\leftarrow\PolGen(\msk,\mathbb{A}_{\rcv})\\
 
	\msg=\Dec(\textsf{DK}_{\mathcal{S}_{\rcv}},\textsf{SK}_{\mathbb{A}_{\rcv}},\CT_{\snd})\\
	\forall\mathcal{S}_{\snd}\in \mathcal{Q}_{\mathcal{O}_1}:(\mathcal{S}_{\snd}\not\models \mathbb{A}_{\rcv})\bigwedge (\msg\neq\bot)\\
\end{array}
\right]$$
\normalsize
$\leq\nu(\lambda),$ where oracles $\mathcal{O}_1$, $\mathcal{O}_2$, $\mathcal{O}_3$ are implemented by 
$\EKGen(\msk,\cdot)$, $\DKGen(\msk,\cdot)$, $\PolGen(\msk,\cdot)$, and $\mathcal{S}\not\models \mathbb{A}$ denotes that attributes $\mathcal{S}$ does not satisfy $\mathbb{A}$.
\end{definition}
\section{Security Proofs of FEME}
\label{app:proof-FEME}

\textbf{Generic Group Model (GGM)}. We use an extended GGM for bilinear groups, as outlined in \cite{shoup1997lower}. This model includes three random encodings $\sigma_1$, $\sigma_2$, and $\sigma_T$ for the additive group $\mathbb{Z}_q$. These encodings are injective mappings $\sigma_1, \sigma_2, \sigma_T: \mathbb{Z}_q \rightarrow \{0,1\}^m$, where $m > 3\log(q)$. The probability that an adversary $\mathcal{A}$ can guess an element within the image of $\sigma_1$, $\sigma_2$, or $\sigma_T$ is negligible. For $i=1,2,T$, we define the sets $\mathbb{G}_i = {\sigma_i(x) : x \in \mathbb{Z}_p}$. The model provides oracles to compute group operations on $\mathbb{G}_1$, $\mathbb{G}_2$, and $\mathbb{G}_T$, as well as an oracle for a bilinear map $e: \mathbb{G}_1 \times \mathbb{G}_2 \rightarrow \mathbb{G}_T$.

\textbf{Random Oracle}. The challenger $\mathcal{C}$ maintains a list $\mathcal{L}_H$ containing entries of the form $(u_i, h_i, t_i)$, which starts out empty. When the adversary $\mathcal{A}$ queries the oracle with an attribute $u_i = (n_i, v_i)$, the challenger checks whether a tuple with $u_i$ already exists in $\mathcal{L}_H$. 
If such a tuple is found, $\mathcal{C}$ responds with $H(u_i) = h_i \in \mathbb{G}_1$. 
If no match is found, $\mathcal{C}$ selects a random value $t_i\stackrel{\$}{\leftarrow}\mathbb{Z}_p^*$, computes 
$h_i = g_1^{t_i} \in \mathbb{G}_1$, returns $H(u_i) = h_i$, and adds the tuple $(u_i, h_i, t_i)$ to $\mathcal{L}_H$.

% In cryptography, the \textit{Generic Group Model (GGM)} \cite{shoup1997lower} is a theoretical framework used to analyze the security of cryptographic protocols that rely on group operations. In the GGM, the group is treated as a ``black box," meaning that no specific properties of the group elements, other than those arising from the group axioms, are assumed to be known by the adversary. Instead, the adversary can only perform group operations (such as multiplication, inversion, and testing for equality) via an oracle, without direct access to the representation of the group elements.

% Formally, in the \textit{Generic Group Model (GGM)} \cite{shoup1997lower},

% 1. A cyclic group $\mathbb{G}$ of order $q$ is represented abstractly, where group elements are encoded as random labels.

% 2. The adversary is given access to an oracle that allows them to:

%     $-$ Perform the group operation. Given two encodings $\sigma(a)$ and $\sigma(b)$, the oracle returns the encoding $\sigma(ab)$.

%     $-$ Compute inverse. Give an encoding $\sigma(a)$, the oracle returns $\sigma(a^{-1})$.

%     $-$ Test equality. Given two encodings $\sigma(a)$ and $\sigma(b)$, the oracle checks if $a=b$.

% 3. The adversary cannot exploit any structure of the group elements beyond these oracle operations. In particular, they cannot ``deduce" the specific internal representation of group elements or their mathematical structure beyond group axioms.

\subsection{Proof of Theorem \ref{theo:FEME-privacy} (FEME: Confidentiality)}
\label{app:proof-FEME-privacy}

\begin{table*}[thbp]
\tiny
\centering 
\setlength{\tabcolsep}{0.001mm}
\renewcommand\arraystretch{1.2}
     \begin{tabular}{|c|c|c|c|c|c|c|c|c|c|c|c|c|c|c|}     
        \hline  
        $1$ & $\kappa$ & $t_i$ & $t_i\tau$ & $t_ir$ & \violet{$t_is_2$} & $x\!+\!\kappa\tau$ &  \violet{$\kappa\lambda_i\!+\!t_{\pi}s_3$} & \violet{$t_i\tau_1s_3$} & $(\!x\!+\!\kappa\tau_1\!)s_3$ & $\frac{1}{b_1}(\!\lambda_i\!+\!t_{\rho}r'\!)$ & $\frac{1}{b_2}(\!\lambda_i\!+\!t_{\rho}r'\!)$ & $\frac{1}{b_1}(\!\kappa\psi_i\!+\!t_{\rho}r'\!)$ & $\frac{1}{b_2}(\!\kappa\psi_i\!+\!t_{\rho}r'\!)$ \\
        \hline
        $\mu$ & $\mu\kappa$ & $\mu t_i$ & $\mu t_i\tau$ & $\mu t_ir$ & \violet{$\mu t_is_2$} & $\mu(\!x\!+\!\kappa\tau\!)$ &  \violet{$\mu(\!\kappa\lambda_i\!+\!t_{\pi}s_3\!)$} & \violet{$\mu t_i\tau_1s_3$} & $\mu(\!x\!+\!\kappa\tau_1\!)s_3$ & $\frac{\mu}{b_1}(\!\lambda_i\!+\!t_{\rho}r'\!)$ & $\frac{\mu}{b_2}(\!\lambda_i\!+\!t_{\rho}r'\!)$ & $\frac{\mu}{b_1}(\!\kappa\psi_i\!+\!t_{\rho}r'\!)$ & $\frac{\mu}{b_2}(\!\kappa\psi_i\!+\!t_{\rho}r'\!)$ \\
        \hline 
        $b_1$ & $b_1\kappa$ & $b_1t_i$ & $b_1t_i\tau$ & $b_1t_ir$ & \violet{$b_1t_is_2$} & $b_1(\!x\!+\!\kappa\tau\!)$ &  \violet{$b_1(\!\kappa\lambda_i\!+\!t_{\pi}s_3\!)$} & \violet{$b_1t_i\tau_1s_3$} & $b_1(\!x\!+\!\kappa\tau_1\!)s_3$ & $\lambda_i\!+\!t_{\rho}r'$ & $\frac{b_1}{b_2}(\!\lambda_i\!+\!t_{\rho}r'\!)$ & $(\!\kappa\psi_i\!+\!t_{\rho}r'\!)$ & $\frac{b_1}{b_2}(\!\kappa\psi_i\!+\!t_{\rho}r'\!)$ \\
        \hline
        $b_2$ & $b_2\kappa$ & $b_2t_i$ & $b_2t_i\tau$ & $b_2t_ir$ & \violet{$b_2t_is_2$} & $b_2(\!x\!+\!\kappa\tau\!)$ &  \violet{$b_2(\!\kappa\lambda_i\!+\!t_{\pi}s_3\!)$} & \violet{$b_2t_i\tau_1s_3$} & $b_2(\!x\!+\!\kappa\tau_1\!)s_3$ & $\frac{b_2}{b_1}(\!\lambda_i\!+\!t_{\rho}r'\!)$ & -- & $\frac{b_2}{b_1}(\!\kappa\psi_i\!+\!t_{\rho}r'\!)$ & -- \\ 
        \hline
        $r$ & $r\kappa$ & $rt_i$ & $rt_i\tau$ & $t_ir^2$ & \violet{$rt_is_2$} & $r(\!x\!+\!\kappa\tau\!)$ &  \violet{$r(\!\kappa\lambda_i\!+\!t_{\pi}s_3\!)$} & \violet{$rt_i\tau_1s_3$} & $r(\!x\!+\!\kappa\tau_1\!)s_3$ & $\frac{r}{b_1}(\!\lambda_i\!+\!t_{\rho}r'\!)$ & $\frac{r}{b_2}(\!\lambda_i\!+\!t_{\rho}r'\!)$ & $\frac{r}{b_1}(\!\kappa\psi_i\!+\!t_{\rho}r'\!)$ & $\frac{r}{b_2}(\!\kappa\psi_i\!+\!t_{\rho}r'\!)$ \\
        \hline 
        $r'$ & $r'\kappa$ & $r't_i$ & $r't_i\tau$ & $t_irr'$ & \violet{$r't_is_2$} & $r'(\!x\!+\!\kappa\tau\!)$ &  \violet{$r'(\!\kappa\lambda_i\!+\!t_{\pi}s_3\!)$} & \violet{$r't_i\tau_1s_3$} & $r'(\!x\!+\!\kappa\tau_1\!)s_3$ & $\frac{r'}{b_1}(\!\lambda_i\!+\!t_{\rho}r'\!)$ & $\frac{r'}{b_2}(\!\lambda_i\!+\!t_{\rho}r'\!)$ & $\frac{r'}{b_1}(\!\kappa\psi_i\!+\!t_{\rho}r'\!)$ & $\frac{r'}{b_2}(\!\kappa\psi_i\!+\!t_{\rho}r'\!)$ \\
        \hline
        $s_1$ & $s_1\kappa$ & $s_1t_i$ & $s_1t_i\tau$ & $s_1t_ir$ & \violet{$s_1t_is_2$} & $s_1(\!x\!+\!\kappa\tau\!)$ &  \violet{$s_1(\!\kappa\lambda_i\!+\!t_{\pi}s_3\!)$} & \violet{$s_1t_i\tau_1s_3$} & $s_1(\!x\!+\!\kappa\tau_1\!)s_3$ & $\frac{s_1}{b_1}(\!\lambda_i\!+\!t_{\rho}r'\!)$ & $\frac{s_1}{b_2}(\!\lambda_i\!+\!t_{\rho}r'\!)$ & $\frac{s_1}{b_1}(\!\kappa\psi_i\!+\!t_{\rho}r'\!)$ & $\frac{s_1}{b_2}(\!\kappa\psi_i\!+\!t_{\rho}r'\!)$ \\
        \hline
        $s_3$ & $s_3\kappa$ & $s_3t_i$ & $s_3t_i\tau$ & $s_3t_ir$ & \violet{$s_3t_is_2$} & $s_3(\!x\!+\!\kappa\tau\!)$ &  \violet{$s_3(\!\kappa\lambda_i\!+\!t_{\pi}s_3\!)$} & \violet{$t_i\tau_1s_3^2$} & $(\!x\!+\!\kappa\tau_1\!)s_3^2$ & $\frac{s_3}{b_1}(\!\lambda_i\!+\!t_{\rho}r'\!)$ & $\frac{s_3}{b_2}(\!\lambda_i\!+\!t_{\rho}r'\!)$ & $\frac{s_3}{b_1}(\!\kappa\psi_i\!+\!t_{\rho}r'\!)$ & $\frac{s_3}{b_2}(\!\kappa\psi_i\!+\!t_{\rho}r'\!)$ \\ 
        \hline
        $b_1\tau$ & $b_1\tau\kappa$ & $b_1\tau t_i$ & $b_1\tau t_i\tau$ & $b_1\tau t_ir$ & \violet{$b_1\tau t_is_2$} & $(\!x\!+\!\kappa\tau\!)$ &  \violet{$b_1\tau(\!\kappa\lambda_i\!+\!t_{\pi}s_3\!)$} & \violet{$b_1\tau t_i\tau_1s_3$} & $b_1\tau(\!x\!+\!\kappa\tau_1\!)s_3$ & $\tau(\!\lambda_i\!+\!t_{\rho}r'\!)$ & $\frac{b_1\tau}{b_2}(\!\lambda_i\!+\!t_{\rho}r'\!)$ & $\tau(\!\kappa\psi_i\!+\!t_{\rho}r'\!)$ & $\frac{b_1\tau}{b_2}(\!\kappa\psi_i\!+\!t_{\rho}r'\!)$ \\
        \hline
        $b_2\tau$ & $b_2\tau\kappa$ & $b_2\tau t_i$ & $b_2\tau t_i\tau$ & $b_2\tau t_ir$ & \violet{$b_2\tau t_is_2$} & $(\!x\!+\!\kappa\tau\!)$ &  \violet{$b_2\tau(\!\kappa\lambda_i\!+\!t_{\pi}s_3\!)$} & \violet{$b_2\tau t_i\tau_1s_3$} & $b_2\tau(\!x\!+\!\kappa\tau_1\!)s_3$ & $\frac{b_2\tau}{b_1}(\!\lambda_i\!+\!t_{\rho}r'\!)$ & -- & $\frac{b_2\tau}{b_1}(\!\kappa\psi_i\!+\!t_{\rho}r'\!)$ & -- \\  
        \hline
        $b_1s_2'$ & $b_1s_2'\kappa$ & $b_1s_2't_i$ & $b_1s_2't_i\tau$ & $b_1s_2't_ir$ & \violet{$b_1s_2't_is_2$} & $b_1s_2'(\!x\!+\!\kappa\tau\!)$ &  \violet{$b_1s_2'(\!\kappa\lambda_i\!+\!t_{\pi}s_3\!)$} & \violet{$b_1s_2't_i\tau_1s_3$} & $b_1s_2'(\!x\!+\!\kappa\tau_1\!)s_3$ & $s_2'(\!\lambda_i\!+\!t_{\rho}r'\!)$ & $\frac{b_1s_2'}{b_2}(\!\lambda_i\!+\!t_{\rho}r'\!)$ & $s_2'(\!\kappa\psi_i\!+\!t_{\rho}r'\!)$ & $\frac{b_1s_2'}{b_2}(\!\kappa\psi_i\!+\!t_{\rho}r'\!)$ \\ 
        \hline
        $b_2s_2''$ & $b_2s_2''\kappa$ & $b_2s_2''t_i$ & $b_2s_2''t_i\tau$ & $b_2s_2''t_ir$ & \violet{$b_2s_2''t_is_2$} & $b_2s_2''(\!x\!+\!\kappa\tau\!)$ &  \violet{$b_2s_2''(\!\kappa\lambda_i\!+\!t_{\pi}s_3\!)$} & \violet{$b_2s_2''t_i\tau_1s_3$} & $b_2s_2''(\!x\!+\!\kappa\tau_1\!)s_3$ & $\frac{b_2s_2''}{b_1}(\!\lambda_i\!+\!t_{\rho}r'\!)$ & $s_2''(\!\lambda_i\!+\!t_{\rho}r'\!)$ & $\frac{b_2s_2''}{b_1}(\!\kappa\psi_i\!+\!t_{\rho}r'\!)$ & $s_2''(\!\kappa\psi_i\!+\!t_{\rho}r'\!)$ \\  
        \hline
        \multirow{1}{*}{$\alpha\!+\!\kappa r$} & \multicolumn{2}{c|}{$(\alpha\!+\!\kappa r)\kappa$}  & \multicolumn{2}{c|}{$(\alpha\!+\!\kappa r)t_i\tau$} & \multicolumn{2}{c|}{\violet{$(\alpha\!+\!\kappa r)t_is_2$}} &  \multicolumn{2}{c|}{\violet{$(\alpha\!+\!\kappa r)(\!\kappa\lambda_i\!+\!t_{\pi}s_3\!)$}} & $(\alpha\!+\!\kappa r)\cdot$ & \multicolumn{2}{c|}{$\frac{\alpha\!+\!\kappa r}{b_1}(\!\lambda_i\!+\!t_{\rho}r'\!)$} & \multicolumn{2}{c|}{$\frac{\alpha\!+\!\kappa r}{b_1}(\!\kappa\psi_i\!+\!t_{\rho}r'\!)$} \\    
        \cline{1-9}\cline{11-14}
        $\alpha$ & \multicolumn{2}{c|}{$(\alpha\!+\!\kappa r)t_i$} & \multicolumn{2}{c|}{$(\alpha\!+\!\kappa r)t_ir$} & \multicolumn{2}{c|}{$(\alpha\!+\!\kappa r)(\!x\!+\!\kappa\tau\!)$} &  \multicolumn{2}{c|}{\violet{$(\alpha\!+\!\kappa r)t_i\tau_1s_3$}} & $(\!x\!+\!\kappa\tau_1\!)s_3$ & \multicolumn{2}{c|}{$\frac{\alpha\!+\!\kappa r}{b_2}(\!\lambda_i\!+\!t_{\rho}r'\!)$} & \multicolumn{2}{c|}{$\frac{\alpha\!+\!\kappa r}{b_2}(\!\kappa\psi_i\!+\!t_{\rho}r'\!)$} \\ 
        \hline
        \multirow{1}{*}{$b_1\tau_1s_3'$} & \multicolumn{2}{c|}{$b_1\tau_1s_3'\kappa$}  & \multicolumn{2}{c|}{$b_1\tau_1s_3't_i\tau$} & \multicolumn{2}{c|}{\violet{$b_1\tau_1s_3't_is_2$}} &  \multicolumn{2}{c|}{\violet{$b_1\tau_1s_3'(\!\kappa\lambda_i\!+\!t_{\pi}s_3\!)$}} & $b_1\tau_1s_3'\cdot$ & \multicolumn{2}{c|}{$\tau_1s_3'(\!\lambda_i\!+\!t_{\rho}r'\!)$} & \multicolumn{2}{c|}{$\tau_1s_3'(\!\kappa\psi_i\!+\!t_{\rho}r'\!)$} \\    
        \cline{1-9}\cline{11-14}
        $x\mu$ & \multicolumn{2}{c|}{$b_1\tau_1s_3't_i$} & \multicolumn{2}{c|}{$b_1\tau_1s_3't_ir$} & \multicolumn{2}{c|}{$b_1\tau_1s_3'(\!x\!+\!\kappa\tau\!)$} &  \multicolumn{2}{c|}{\violet{$b_1\tau_1^2s_3't_is_3$}} & $(\!x\!+\!\kappa\tau_1\!)s_3$ & \multicolumn{2}{c|}{$\frac{b_1\tau_1s_3'}{b_2}(\!\lambda_i\!+\!t_{\rho}r'\!)$} & \multicolumn{2}{c|}{$\frac{b_1\tau_1s_3'}{b_2}(\!\kappa\psi_i\!+\!t_{\rho}r'\!)$} \\ 
        \hline
        \multirow{1}{*}{$b_2\tau_1s_3''$} & \multicolumn{2}{c|}{$b_2\tau_1s_3''\kappa$}  & \multicolumn{2}{c|}{$b_2\tau_1s_3''t_i\tau$} & \multicolumn{2}{c|}{\violet{$b_2\tau_1s_3''t_is_2$}} &  \multicolumn{2}{c|}{\violet{$b_2\tau_1s_3''(\!\kappa\lambda_i\!+\!t_{\pi}s_3\!)$}} & $b_2\tau_1s_3''\cdot$ & \multicolumn{2}{c|}{$\frac{b_2\tau_1s_3''}{b_1}(\!\lambda_i\!+\!t_{\rho}r'\!)$} & \multicolumn{2}{c|}{$\frac{b_2\tau_1s_3''}{b_1}(\!\kappa\psi_i\!+\!t_{\rho}r'\!)$} \\    
        \cline{1-9}\cline{11-14}
        \blue{$\alpha s\!+\!x\mu s_3$} & \multicolumn{2}{c|}{$b_2\tau_1s_3''t_i$} & \multicolumn{2}{c|}{$b_2\tau_1s_3''t_ir$} & \multicolumn{2}{c|}{$b_2\tau_1s_3''(\!x\!+\!\kappa\tau\!)$} &  \multicolumn{2}{c|}{\violet{$b_2\tau_1s_3''t_i\tau_1s_3$}} & $(\!x\!+\!\kappa\tau_1\!)s_3$ & \multicolumn{2}{c|}{$\tau_1s_3''(\!\lambda_i\!+\!t_{\rho}r'\!)$} & \multicolumn{2}{c|}{$\tau_1s_3''(\!\kappa\psi_i\!+\!t_{\rho}r'\!)$} \\
        \hline
     \end{tabular}
    \caption{Pairing elements in $\mathbb{G}_T$ for the confidentiality (anonymity) proof of FEME \\{\footnotesize($t_{\pi}$ denotes $t_{\pi(i)}$, and $t_{\rho}$ denotes $t_{\rho(i)}$)}}
    \label{Tab:FEME-privacy}
\end{table*}

\noindent\textit{Proof}. Our proof close follows the proof structure in \cite{bethencourt2007ciphertext}. We start with a standard observation derived from a basic hybrid argument. 
    In the confidentiality game, $\mathcal{C}$ generates a challenge ciphertext with a component $\textsf{ct}_0$, which is either $\hat{H}(V)\oplus\phi(\textsf{msg}_0^*)$ or $\hat{H}(V)\oplus\phi(\textsf{msg}_1^*)$, where $V=e(g_1,g_2)^{\alpha s+x\mu s_3}$ and $s=s_1+s_2$. Alternatively, We consider a modified game where $V$ is either $e(g_1,g_2)^{\alpha s+x\mu s_3}$ or $e(g_1,g_2)^{\theta}$, with $\theta$ randomly chosen from $\mathbb{Z}_p^*$. We show that the adversary $\mathcal{A}_1$ in the original game can be reduced to an adversary $\mathcal{A}_2$ in the modified game. Since no $\mathcal{A}_2$ has a non-negligible advantage, it implies that $\mathcal{A}_1$ cannot either.

    Given that $\mathcal{A}_1$ has an advantage $\epsilon$ in the original game, we can construct $\mathcal{A}_2$ as follows. During the challenge phase, after receiving $\msg_0$ and $\msg_1$ from $\mathcal{A}_1$ and $V$ (which is either $V^{(1)}=e(g_1,g_2)^{\alpha s+x\mu s_3}$ or $V^{(2)}=e(g_1,g_2)^{\theta}$ from the challenger $\mathcal{C}$, $\mathcal{A}_2$ flips a coin $\beta\in\{0,1\}$ and sends $\hat{H}(V)\oplus\phi(\msg_\beta)$ to $\mathcal{A}_1$. Once $\mathcal{A}_1$ outputs a bit $\beta'$, $\mathcal{A}_2$ outputs 1 if $\beta'=\beta$, or 0 otherwise. 
    
    If $V=V^{(1)}=e(g_1,g_2)^{\alpha s+x\mu s_3}$, the challenge is a well-formed FEME ciphertext, and $\mathcal{A}_1$ has an advantage $\epsilon$ in correctly guessing $\beta'=\beta$. If $V=V^{(2)}=e(g_1,g_2)^{\theta}$, the challenge becomes independent of $\msg_0$ and $\msg_1$, giving $\mathcal{A}_2$ an advantage of 0. Consequently, we have
    {\small
    \begin{eqnarray*}
    \textsf{Pr}[\mathcal{A}_2\textsf{~win}]
    &=&\textsf{Pr}[V=V^{(1)}]\cdot\textsf{Pr}[\beta'=\beta|V^{(1)}]\\
    &&+\textsf{Pr}[V=V^{(2)}]\cdot\textsf{Pr}[\beta'=\beta|V=V^{(2)}]\\
    &\leq&1/2\cdot(1/2+\epsilon)+1/2\cdot 1/2=1/2+\epsilon/2,
    \end{eqnarray*}}
    and the overall advantage of $\mathcal{A}_2$ is $\epsilon/2$. The existence of any successful $\mathcal{A}_1$ implies the existence of a corresponding $\mathcal{A}_2$ with a non-negligible advantage. 
    
    Finally, we prove that no such $\mathcal{A}_2$ can distinguish between $e(g_1,g_2)^{\alpha s+x\mu s_3}$ and $e(g_1,g_2)^{\theta}$ in polynomial time. The combination of these results shows that no $\mathcal{A}_1$ can have a non-negligible advantage. \vspace{1mm}

    % and the adversary should decide which is the case. Any adversary that has advantage $\epsilon$ in the privacy game can be transformed into an adversary that has an advantage of at least $\epsilon/2$ in the modified privacy game. The reason is illustrated below. Consider two hybrids: one in which the adversary must distinguish between $e(g_1,g_2)^{\alpha s+x\mu s_3}\cdot\textsf{msg}_0^*$ and $e(g_1,g_2)^{\theta}$; another in which it must distinguish between $e(g_1,g_2)^{\theta}$ and $e(g_1,g_2)^{\alpha s+x\mu s_3}\cdot\textsf{msg}_1^*$. Both of these are equivalent to the modified game above.

    \noindent\textbf{Simulation of the Modified Game}. Let $g_1=\sigma_1(1)$, $g_2=\sigma_2(1)$ and $e(g_1,g_2)=\sigma_T(1)$. We write $g_1^x$ to denote $\sigma_1(x)$, $g_2^y$ to denote $\sigma_2(y)$ and $e(g_1,g_2)^z$ to denote $\sigma_T(z)$.

    \noindent{\scriptsize$\bullet$} \textbf{\textit{Setup}}. The challenger $\mathcal{C}$ chooses $\alpha,x,\mu,b_1,b_2,\kappa\stackrel{\$}{\leftarrow}\mathbb{Z}_p^*$, and calculates $Z=e(g_1,g_2)^{\alpha}$, $Y=e(g_1,g_2)^{x\mu}$, $\delta_0=g_2^{\mu}$, $\delta_1=g_2^{b_1}$, $\delta_2=g_2^{b_2}$ and $h=g_1^{\kappa}$. $\mathcal{C}$ sends the master public key $\mpk=(Z,Y,h,\delta_0,\delta_1,\delta_2)$ to $\mathcal{A}$.

    \noindent{\scriptsize$\bullet$} \textbf{\textit{Phase 1}}. In phase 1, $\mathcal{A}$ can make oracle queries to the random oracle and a key generation oracle as follows.

    \noindent$-$ \textit{Random oracle} $(\mathcal{O}_{H})$. Same as defined above.

    \noindent$-$ \textit{Attribute encryption key generation oracle} $(\mathcal{O}_{\EKGen})$. When $\mathcal{A}$ makes a key query for an attribute set $\mathcal{S}_{\textsf{snd}}$, $\mathcal{C}$ picks $\tau\stackrel{\$}{\leftarrow}\mathbb{Z}_p^*$. Then, $\mathcal{C}$ computes  
    {\small    $$\textsf{ek}_{1,i}=g_1^{t_i\tau},~~\textsf{ek}_2=g_2^{b_1\tau},~~\textsf{ek}_3=g_2^{b_2\tau},~~\textsf{ek}_4=g_1^{x+\kappa\tau}.$$}
    Then, $\mathcal{C}$ sends to $\mathcal{A}$ the attribute encryption key $\textsf{EK}_{\mathcal{S}_{\textsf{snd}}}=(\{n_i\}_{i\in[\ell_1]},\{\textsf{ek}_{1,i}\}_{i\in[\ell_1]},\textsf{ek}_2,\textsf{ek}_3,\textsf{ek}_4)$.   

    \noindent$-$ \textit{Attribute decryption key generation oracle} $(\mathcal{O}_{\DKGen})$. When $\mathcal{A}$ makes a key query for an attribute set $\mathcal{S}_{\textsf{rcv}}$, $\mathcal{C}$ picks $r\stackrel{\$}{\leftarrow}\mathbb{Z}_p^*$. Then, $\mathcal{C}$ generates the attribute decryption key as
    {\small    $$\textsf{dk}_1=g_1^{\alpha+\kappa r},~~\textsf{dk}_{2,i}=g_1^{t_ir},~~\textsf{dk}_3=g_2^{r}.$$}
    Then, $\mathcal{C}$ sends to $\mathcal{A}$ the attribute decryption key $\textsf{DK}_{\mathcal{S}_{\textsf{rcv}}}=(\{n_i\}_{i\in[\ell_2]},\textsf{dk}_1,\{\textsf{dk}_{2,i}\}_{i\in[\ell_2]},\textsf{dk}_3)$.

    \noindent$-$ \textit{Policy decryption key generation oracle} $(\mathcal{O}_{\PolGen})$. When $\mathcal{A}$ makes a key query for a policy $\mathbb{A}_{\textsf{rcv}}=(\textbf{A},\rho,\{\Psi_{\rho(i)}\}_{i\in[m_2]})$, $\mathcal{C}$ picks $r'\stackrel{\$}{\leftarrow}\mathbb{Z}_p^*$ and a vector $\textbf{y}\stackrel{\$}{\leftarrow}\mathbb{Z}_p^{m_2-1}$. Let $\lambda_i=\textbf{A}_i(\alpha||\textbf{y})^{\top}$ and $\psi_i=\textbf{A}_i(\mu||\textbf{y})^{\top}$. Note that the $\lambda_i$ (resp. $\psi_i$) are chosen uniformly and independently at random from $\mathbb{Z}_p^*$ subject to the random distribution of $\alpha$ (resp. $\mu$) and $\textbf{y}$. Then, $\mathcal{C}$ generates the policy decryption key as   
    {\small    $$\textsf{sk}_1=g_2^{r'},~~\textsf{sk}_{2,i}=g_1^{\frac{1}{b_1}(\lambda_i+t_{\rho(i)}r')},~~\textsf{sk}_{3,i}=g_1^{\frac{1}{b_2}(\lambda_i+t_{\rho(i)}r')},$$    $$\textsf{sk}_{4,i}=g_1^{\frac{1}{b_1}(\kappa\psi_i+t_{\rho(i)}r')},~~\textsf{sk}_{5,i}=g_1^{\frac{1}{b_2}(\kappa\psi_i+t_{\rho(i)}r')}.$$}
    Then, $\mathcal{C}$ sends to $\mathcal{A}$ the policy decryption key $\textsf{SK}_{\mathbb{A}_{\textsf{rcv}}}=((\textbf{A},\rho,\{n_{\rho(i)}\}_{i\in[m_2]}),\textsf{sk}_1,\{\textsf{sk}_{2,i},\textsf{sk}_{3,i},\textsf{sk}_{4,i},\textsf{sk}_{5,i}\}_{i\in[m_2]})$.   

    \noindent{\scriptsize$\bullet$} \textbf{\textit{Challenge}}. $\mathcal{A}$ outputs the sender's attribute sets $\mathcal{S}_{\textsf{snd}}^*$, a policy $\mathbb{A}_{\textsf{snd}}^*$ and two messages $\textsf{msg}_0^*$, $\textsf{msg}_1^*$ of equal length that it intends to challenge. $\mathcal{C}$ checks if $\mathcal{S}_{\textsf{snd}}^*$ satisfies any of the access policy $\mathbb{A}_{\textsf{rcv}}$ queried in Phase 1. If yes, $\mathcal{C}$ aborts. Otherwise, $\mathcal{C}$ chooses $s_1,s_2',s_2'',s_3',s_3'',\tau'\stackrel{\$}{\leftarrow}\mathbb{Z}_p^*$ and sets $s_2=s_2'+s_2''$, $s_3=s_3'+s_3''$, $s=s_1+s_2$. Then $\mathcal{C}$ selects $\lambda_1,\cdots,\lambda_{m_1}\stackrel{\$}{\leftarrow}\mathbb{Z}_p^*$ for encryption of $\mathcal{S}_{\textsf{snd}}^*$. 
    The challenger selects $\theta\stackrel{\$}{\leftarrow}\mathbb{Z}_p^*$. The challenger flips random coin $b\in\{0,1\}$ to encrypt $\msg_b^*$, and random coin $\beta\in\{0,1\}$ to determine which of the following ciphertext should be created. 
    
    If $\beta=0$, it generates the challenge ciphertext as follows:  
    {\small    $$V=e(g_1,g_2)^{\alpha s+x\mu s_3},~\ct_0=\phi(\msg_b^*)\oplus\hat{H}(V),$$ $$\textsf{ct}_1=g_2^{s_1},~\textsf{ct}_2=g_2^{s_3},$$    $$\textsf{ct}_{3,i}=g_1^{\kappa\lambda_i+t_{\pi(i)}s_3},~\textsf{ct}_{4,1}=g_2^{b_1s_2'},~\textsf{ct}_{4,2}=g_2^{b_2s_2''},$$
    $$\textsf{ct}_{5,i}=g_1^{t_is_2},~\textsf{ct}_{6,i}=g_1^{t_i\tau_1s_3},~\textsf{ct}_7=g_2^{b_1\tau_1s_3'},$$
    $$\textsf{ct}_8=g_2^{b_2\tau_1s_3''},~\textsf{ct}_9=g_1^{(x+\kappa\tau_1)s_3}, \text{~where~}\tau_1=\tau+\tau'.$$}
    
    Otherwise, it generates $V=e(g_1,g_2)^{\theta}$, and the other ciphertext components are kept the same. 
    
    Then, $\mathcal{C}$ sends to adversary $\mathcal{A}$ the ciphertext $\textsf{CT}_{\snd}=((\textbf{M},\pi,\{n_{\pi(i)}\}_{i\in[m_1]}),\textsf{ct}_0,\textsf{ct}_1,\textsf{ct}_2,(\textsf{ct}_{3,i})_{i\in[m_1]},\textsf{ct}_{4,1},\textsf{ct}_{4,2},$ $\{\textsf{ct}_{5,i},$ $\textsf{ct}_{6,i}\}_{i\in[\ell_1]},\textsf{ct}_7,\textsf{ct}_8,\textsf{ct}_9)$.

    \noindent{\scriptsize$\bullet$} \textbf{\textit{Phase 2}}. It is the same as in Phase 1 with the restriction that any input access policy $\mathbb{A}$ are not allowed to satisfy the challenge attribute sets $\mathcal{S}_{\textsf{snd}}^*$.

    \noindent{\scriptsize$\bullet$} \textbf{\textit{Guess}}. $\mathcal{A}$ outputs a bit as a guess.

    \textbf{Analysis of $\mathcal{A}$'s Success Probability}. To demonstrate that no PPT adversary $\mathcal{A}$ can distinguish between $\ct_0$ in the aforementioned game, we assume the contrary. The only way $\mathcal{A}$'s views could differ is if there exist two distinct terms yielding the same result when $\theta=\delta(\alpha s+x\mu s_3)$ but producing different results when $\theta$ is sampled randomly. Let $\theta_1$ and $\theta_2$ be two such terms. Since $\theta$ only occurs in $e(g_1,g_2)^{\theta}$, which cannot be paired, $\mathcal{A}$ can only create queries where $\theta$ is an additive term. Thus, $\theta_1$ and $\theta_2$ can be written as $\theta_1=\delta\theta+\theta_1'$ and $\theta_2=\delta\theta+\theta_2'$ for some $\theta_1'$, $\theta_2'$ that do not contain $\theta$. Based on the assumption that $\theta_1=\theta_2$ when $\theta=\delta(\alpha s+x\mu s_3)$, we have $\delta_1(\alpha s+x\mu s_3)+\theta_1'=\delta_2(\alpha s+x\mu s_3)+\theta_2'$. Rearranging this equation gives $\theta_1'-\theta_2'=(\delta_2-\delta_1)(\alpha s+x\mu s_3)$ , implying that $\mathcal{A}$ can algebraically construct $e(g_1,g_2)^{\delta(\alpha s+x\mu s_3)}$ for some $\delta\in\mathbb{Z}_q$ using the oracle outputs it has already queried. Below, we show that constructing such an expression is computationally infeasible, giving $\mathcal{A}$ only a negligible advantage in winning the confidentiality game.

    % If $\mathcal{A}$ can construct $e(g_1,g_2)^{\delta(\alpha s+x\mu s_3)}$ for some $\delta\in\mathbb{Z}_p^*$ that can be combined from the oracle outputs he has already queried, then $\mathcal{A}$ can use it to distinguish $e(g_1,g_2)^{\alpha s+x\mu s_3}$ from $e(g_1,g_2)^{\theta}$. Therefore, we show that $\mathcal{A}$ can construct $e(g_1,g_2)^{\delta(\alpha s+x\mu s_3)}$ for some $\delta$ with only negligible probability, and cannot gain a non-negligible advantage in the privacy game.   

    To calculate the probability of $\mathcal{A}$ constructing $e(g_1,g_2)^{\delta(\alpha s+x\mu s_3)}$ for some $\delta\in\mathbb{Z}_q$, we perform a case analysis based on the information $\mathcal{A}$ receives from the simulation. For completeness, we first summarize the exponent elements available to $\mathcal{A}$ in groups $\mathbb{G}_1$, $\mathbb{G}_2$, $\mathbb{G}_T$.\vspace{1mm}

    \noindent$-$ $\mathbb{G}_1$ elements: $1$, $\kappa$, $t_i$, $t_i\tau$, $t_ir$, $t_is_2$, $x+\kappa\tau$,  $\kappa\lambda_i+t_{\pi(i)}s_3$, $t_i\tau_1s_3$, $(x+\kappa\tau_1)s_3$, $\frac{1}{b_1}(\lambda_i+t_{\rho(i)}r')$, $\frac{1}{b_2}(\lambda_i+t_{\rho(i)}r')$, $\frac{1}{b_1}(\kappa\psi_i+t_{\rho(i)}r')$, $\frac{1}{b_2}(\kappa\psi_i+t_{\rho(i)}r')$.\vspace{1mm}

    \noindent$-$ $\mathbb{G}_2$ elements: 1, $\mu$, $b_1$, $b_2$, $r$, $r'$, $s_1$, $s_3$, $b_1\tau$, $b_2\tau$, $b_1s_2'$, $b_2s_2''$, $\alpha+\kappa r$,  $b_1\tau_1s_3'$, $b_2\tau_1s_3''$. \vspace{1mm}

    \noindent$-$ $\mathbb{G}_T$ elements: 1, $\alpha$, $x\mu$.\vspace{1mm}

    %listed in sequence 
    % \noindent$-$ $\mathbb{G}_1$ elements: $1$, $\kappa$, $t_i$, $t_i\tau$,  $x+\kappa\tau$, $t_ir$, $\frac{1}{b_1}(\lambda_i+t_ir')$, $\frac{1}{b_2}(\lambda_i+t_ir')$, $\frac{1}{b_1}(\kappa\lambda_i+t_ir')$, $\frac{1}{b_2}(\kappa\lambda_i+t_ir')$, $\kappa\lambda_i+t_is_3$, $t_is_2$, $t_i(\tau+\tau')s_3$, $(x+\kappa(\tau+\tau'))s_3$.\vspace{1mm}

    % \noindent$-$ $\mathbb{G}_2$ elements: 1, $\mu$, $b_1$, $b_2$, $b_1\tau$, $b_2\tau$, $\alpha+\kappa r$, $r$, $r'$, $s_1$, $s_3$, $b_1s_2'$, $b_2s_2''$, $b_1(\tau+\tau')s_3'$, $b_2(\tau+\tau')s_3''$.\vspace{1mm}

    % \noindent$-$ $\mathbb{G}_T$ elements: 1, $\alpha$, $x\mu$.\vspace{1mm}

    We now enumerate all possible queries into $\mathbb{G}_T$ using the bilinear map and the group elements available to $\mathcal{A}$, as shown in Table \ref{Tab:FEME-privacy}. Note that the blue element $\alpha s+x\mu s_3$ in Table \ref{Tab:FEME-privacy} is not used in this proof.
    
    $\mathcal{A}$ can compute arbitrary linear combinations of these terms, and we will demonstrate that none can take the form $\delta(\alpha s+x\mu s_3)$, where $s=s_1+s_2$.

    %----------------------------------------------------

    (1) Consider how to construct $e(g_1,g_2)^{\delta\alpha s_1}$ for some $\delta$. From Table \ref{Tab:FEME-privacy}, the only possibility is that $\mathcal{A}$ could create $\lambda_is_1$ using terms such as $s_1(\kappa\lambda_i+t_{\pi(i)}s_3)$, $s_1\kappa$, $s_1t_i$, $s_1t_i\tau$, $s_1t_ir$, $s_1t_is_2$, $s_1(x+\kappa\tau)$, $s_1t_i\tau_1s_3$, and $s_1(x+\kappa\tau_1)s_3$. However, $\lambda_is_1$ cannot be combined through addition or subtraction with any other elements in $\mathbb{G}_1$, $\mathbb{G}_2$, or $\mathbb{G}_T$. Hence, constructing $\delta\alpha s_1$ in $\mathbb{G}_T$ is impossible for $\mathcal{A}$.

    %----------------------------------------------------

    (2) Now, consider constructing $e(g_1,g_2)^{\delta\alpha s_2}$ for some $\delta$, where $s_2=s_2'+s_2''$. According to Table \ref{Tab:FEME-privacy}, $\mathcal{A}$ cannot combine $s_2=s_2'+s_2''$ via simple addition or subtraction within the elements of $\mathbb{G}_1$, $\mathbb{G}_2$, or $\mathbb{G}_T$. The only way $\mathcal{A}$ could generate a term involving $\alpha s_2$ is by pairing $\frac{1}{b_1}\cdot(\lambda_i+t_{\rho(i)}r')$ with $b_1\tau_1s_3'$ and pairing $\frac{1}{b_2}\cdot(\lambda_i+t_{\rho(i)}r')$ with $b_2\tau_1s_3''$, producing $(\lambda_i+t_{\rho(i)}r')s_2'$ and $(\lambda_i+t_{\rho(i)}r')s_2''$, which combine to form $(\lambda_i+t_{\rho(i)}r')(s_2'+s_2'')=\lambda_is_2+t_{\rho(i)}r's_2$ in $\mathbb{G}_T$.

    For $\mathcal{A}$ to construct $\delta\alpha s_2$ in $\mathbb{G}_T$, it must first construct $t_{\rho(i)}r's_2$ and then cancel out the term of $\lambda_is_2$. From Table \ref{Tab:FEME-privacy}, $\mathcal{A}$ can deduce $t_{\rho(i)}r's_2$ by pairing $t_{\rho(i)}s_2$ (derived from $t_is_2$) with $r'$. However, 
    canceling $\lambda_i s_2$ by reconstructing $\lambda_i$ as $\alpha$ is impossible. The reason is that the input access policy $\mathbb{A}$ cannot be satisfied by the attribute sets $\mathcal{S}^*$. Therefore, constructing $\delta\alpha s_2$ in $\mathbb{G}_T$ is infeasible for $\mathcal{A}$.
    
    %----------------------------------------------------

    (3) Finally, consider constructing $e(g_1, g_2)^{\delta x\mu s_3}$ for some $\delta$, where $s_3 = s_3' + s_3''$. As shown in Table \ref{Tab:FEME-privacy}, combining $s_3 = s_3' + s_3''$ through addition or subtraction within the elements of $\mathbb{G}_1$, $\mathbb{G}_2$, or $\mathbb{G}_T$ is impossible. The only way $\mathcal{A}$ could generate a term involving $xs_3$ is by pairing $\frac{1}{b_1} \cdot (\kappa\psi_i + t_{\rho(i)}r')$ with $b_1\tau_1s_3'$ and pairing $\frac{1}{b_2} \cdot (\kappa\psi_i + t_{\rho(i)}r')$ with $b_2\tau_1s_3''$, yielding $(\kappa\psi_i + t_{\rho(i)}r')\tau_1s_3'$ and $(\kappa\psi_i + t_{\rho(i)}r')\tau_1s_3''$, which combine to produce $(\kappa\psi_i + t_{\rho(i)}r')\tau_1s_3$ in $\mathbb{G}_T$.
    
    Thus, $\mathcal{A}$ can only create $x\psi_i s_3$ from terms such as $(\kappa\psi_i + t_{\rho(i)}r')\tau_1s_3$, $s_3\kappa$, $s_3t_i$, and $s_3(x + \kappa\tau)$. However, combining $x\psi_i s_3$ via addition or subtraction within the elements of $\mathbb{G}_1$, $\mathbb{G}_2$, or $\mathbb{G}_T$ is impossible. Therefore, constructing $\delta x\mu s_3$ in $\mathbb{G}_T$ is infeasible for $\mathcal{A}$.

    %----------------------------------------------------

    \textbf{Analysis of Simulation Failure}. Let $n$ represent the total number of group elements $\mathcal{A}$ receives from its oracle queries to the hash function, groups $\mathbb{G}_1$, $\mathbb{G}_2$, $\mathbb{G}_T$, the bilinear map $e$, and its interaction with the confidentiality game. We show that the views of $\mathcal{A}$ for $\beta = 0$ and $\beta = 1$ are identically distributed, except with probability $O(n^2/q)$, based on the randomness in the variable values chosen during the simulation. This probability arises from an accidental collision, where two distinct polynomials in the GGM evaluate to the same value. As indicated in Table \ref{Tab:FEME-privacy}, the polynomial's maximum degree is 5. By the Schwartz-Zippel lemma    \cite{schwartz1980fast,zippel1979probabilistic}, the probability of such a collision occurring is $O(1/q)$. Using a union bound, the probability of a collision across all $n$ queries is at most $O(n^2/q)$, which is negligible when $q$ is exponentially large in the secret parameter $\kappa$.

    In conclusion, $\mathcal{A}$ only holds a negligible advantage in the modified game, implying that it also has a negligible advantage in the confidentiality game. 
    
    This completes the proof of Theorem \ref{theo:FEME-privacy}. $\hfill\blacksquare$

\subsection{Proof of Theorem \ref{theo:FEME-anon} (FEME: Anonymity)}
\label{app:proof-FEME-anon}

\noindent\textit{Proof}. The logic flow of this proof is similar to the one for the Theorem \ref{theo:FEME-privacy}. Firstly, we consider a modified security game that implies anonymity. Next, we bound $\mathcal{A}$'s success probability and analyze the simulation failure. 
    % In the anonymity experiment (defined in \red{ABC}), the challenger randomly samples a bit $b\in\{0,1\}$ and creates $\textsf{ct}_{5,i}$ of a challenge ciphertext either 
    
    In the anonymity game, the ciphertext components related to the two challenged attribute sets and the two challenged access policies are     $\textsf{ct}_{3,i}=h^{\textbf{M}_i(s_1||\textbf{v})^{\top}}\cdot H(\Psi_{\pi(i)})^{s_3}$, $\textsf{ct}_{5,i}=H(u_i)^{s_2}$ and $\textsf{ct}_{6,i}=H(u_i)^{\tau_1}$, where $\tau_1=\tau+\tau'$. Similarly, we can simulate them as $\textsf{ct}_{3,i}=h^{\textbf{M}_i(s_1||\textbf{v})^{\top}}\cdot H(\Psi_{\pi(i)})^{s_3}=g_1^{\kappa\lambda_i+t_{\pi(i)}s_3}$, $\textsf{ct}_{5,i}=H(u_i)^{s_2}=g_1^{t_is_2}$ and $\textsf{ct}_{6,i}=H(u_i)^{\tau_1}=g_1^{t_i\tau_1s_3}$, where $\lambda_i$ are chosen uniformly and independently at random from $\mathbb{Z}_p^*$ subject to the random distribution of $s_1$ and $\textbf{v}$. 

   The challenger produces a challenge ciphertext element $\textsf{ct}_{3,i}$ either $g_1^{\kappa\lambda_i+t_{\pi(i),0}s_3}$ or $g_1^{\kappa\lambda_i+t_{\pi(i),1}s_3}$ (given two challenge tag sets $\{t_{\pi(i),0}\}_i$, $\{t_{\pi(i),1}\}_i$, using $g_1^{t_{\pi(i),0}}$, $g_1^{t_{\pi(i),1}}$ as the querying results of $H(\Psi_{\pi(i)})$ from the random oracle, respectively), generates $\textsf{ct}_{5,i}$ either $g_1^{t_{i,0}\cdot s_2}$ or $g_1^{t_{i,1}\cdot s_2}$, creates $\textsf{ct}_{6,i}$ either $g_1^{t_{i,0}\cdot \tau_1s_3}$ or $g_1^{t_{i,1}\cdot \tau_1s_3}$ (given two challenge tag sets $\{t_{i,0}\}_i$, $\{t_{i,1}\}_i$, using $g_1^{t_{i,0}}$, $g_1^{t_{i,1}}$ as the querying results of $H(u_i)$ from the random oracle, respectively).

    We can instead consider a modified experiment in which the challenger randomly samples a bit $\beta\stackrel{\$}{\leftarrow}\{0,1\}$ and $\theta_1,\theta_2,\theta_3\stackrel{\$}{\leftarrow}\mathbb{Z}_p^*$. The generated challenge ciphertext element $\textsf{ct}_{3,i}$ is either $g_1^{\kappa\lambda_i+t_{\pi(i),\beta}s_3}$ or $g_1^{\theta_1}$, the element $\textsf{ct}_{5,i}$ is either $g_1^{t_{i,\beta}\cdot s_2}$ or $g_1^{\theta_2}$, and $\textsf{ct}_{6,i}$ is either $g_1^{t_{i,\beta}\cdot \tau_1s_3}$ or $g_1^{\theta_3}$.

    Therefore, assume that $\mathcal{A}$ has an advantage $\epsilon$ in winning the anonymity game. Then, $\mathcal{A}$ has an advantage $\epsilon/2$ in distinguishing $g_1^{\kappa\lambda_i+t_{\pi(i),\beta}s_3}$ from $g_1^{\theta_1}$, an advantage $\epsilon/2$ in distinguishing $g_1^{t_{i,\beta}\cdot s_2}$ from $g_1^{\theta_2}$, and an advantage $\epsilon/2$ in distinguishing $g_1^{t_{i,\beta}\cdot \tau_1s_3}$ from $g_1^{\theta_3}$. For instance, the probability of distinguishing  $g_1^{t_{i,0}\cdot \tau_1s_3}$ from $g_1^{\theta_3}$ is equal to that of distinguishing $g_1^{t_{i,1}\cdot \tau_1s_3}$ from $g_1^{\theta_3}$. We call them the three tuples for simplicity in the following proof.\vspace{1mm}

    % For simplicity, we denote $g_1^{t_{i,\beta}$ and $g_1^{t_{i,\beta}\cdot \theta}$ as $g_1^{t_i s_2}$ and $g_1^{t_i\theta}$ respectively in all further paragraphs. The modified game is simulated as follows.

    \textbf{Simulation of the Modified Game}. The adversary $\mathcal{A}$ aims to distinguish the three tuples.
    
    \noindent{\scriptsize$\bullet$} \textbf{Setup}. Same as defined in the proof of Theorem \ref{theo:FEME-privacy}.

    \noindent{\scriptsize$\bullet$} \textbf{Phase 1}. In phase 1, $\mathcal{A}$ can make oracle queries to the random oracle and a key generation oracle as follows.

    \noindent$-$ \textit{Random oracle} $(\mathcal{O}_{H})$. Same as defined above.

    \noindent$-$ \textit{Key generation oracles} $(\mathcal{O}_{\EKGen},\mathcal{O}_{\DKGen},\mathcal{O}_{\PolGen})$. Same as defined in the proof of Theorem \ref{theo:FEME-privacy}.    

    \noindent{\scriptsize$\bullet$} \textbf{Challenge}. $\mathcal{A}$ outputs $\mathcal{S}_0^*=\{u_{i,0}\}_{i\in[m]}=\{n_i,v_{i,0}\}_{i\in[\ell_1]}$, $\mathcal{S}_1^*=\{u_{i,1}\}_{i\in[\ell_1]}=\{n_i,v_{i,1}\}_{i\in[\ell_1]}$, and two access policies $\mathbb{A}_{\snd_0}^*=(\M,\pi,\Psi_{\pi(i),0}=\{n_{\pi(i)},v_{\pi(i),0}\}_{i\in[m_1]})$, $\mathbb{A}_{\snd_1}^*=(\M,\pi,\Psi_{\pi(i),1}=\{n_{\pi(i)},v_{\pi(i),1}\}_{i\in[m_1]})$ that it intends to attack. Note that $\mathcal{S}_0^*,\mathcal{S}_1^*$ have the same attribute names $\{n_i\}_{i\in[\ell_1]}$, and $\mathbb{A}_{\snd_0}^*$, $\mathbb{A}_{\snd_1}^*$ have the same attribute names $\{n_{\pi(i)}\}_{i\in[m_1]}$. $\mathcal{C}$ checks whether $\mathcal{S}_0^*$ or $\mathcal{S}_1^*$ satisfies any of the access policy $\mathbb{A}$ queried in Phase 1. If yes, $\mathcal{C}$ rejects $\mathcal{S}_0^*$, $\mathcal{S}_1^*$. $\mathcal{C}$ also checks whether any of the attribute set $\mathcal{S}$ queried in Phase 1 satisfies $\mathbb{A}_{\snd_0}^*$ or $\mathbb{A}_{\snd_1}^*$. If yes, $\mathcal{C}$ rejects $\mathbb{A}_{\snd_0}^*$, $\mathbb{A}_{\snd_1}^*$. Otherwise, $\mathcal{C}$ chooses $\theta,s_1,s_2',s_2'',s_3',s_3'',\tau'\stackrel{\$}{\leftarrow}\mathbb{Z}_p^*$ and sets $s_2=s_2'+s_2''$, $s_3=s_3'+s_3''$, $s=s_1+s_2$. Then $\mathcal{C}$ selects $\beta\stackrel{\$}{\leftarrow}\{0,1\}$ for encryption one set of attributes, and flips a coin $b\in\{0,1\}$. 
    
    If $b=0$, it generates the challenge ciphertext as follows:    
    {\small     $$V=e(g_1,g_2)^{\alpha s+x\mu s_3},~\ct_0=\phi(\msg)\oplus\hat{H}(V),~\textsf{ct}_1=g_2^{s_1},$$    $$\textsf{ct}_2=g_2^{s_3},~\textsf{ct}_{3,i}=g_1^{\kappa\lambda_i+t_{\pi(i),\beta}s_3},~\textsf{ct}_{4,1}=g_2^{b_1s_2'},~\textsf{ct}_{4,2}=g_2^{b_2s_2''},$$
    $$\textsf{ct}_{5,i}=g_1^{t_{i,\beta}s_2},~\textsf{ct}_{6,i}=g_1^{t_{i,\beta}\tau_1s_3},~\textsf{ct}_7=g_2^{b_1\tau_1s_3'},$$
    $$\textsf{ct}_8=g_2^{b_2\tau_1s_3''},~\textsf{ct}_9=g_1^{(x+\kappa\tau_1)s_3}, \text{~where~}\tau_1=\tau+\tau'.$$
    }
    
    Otherwise, it generates $\textsf{ct}_{3,i}=g_1^{\theta_1}$, $\textsf{ct}_{5,i}=g_1^{\theta_2}$, $\textsf{ct}_{6,i}=g_1^{\theta_3}$, and the other ciphertext components are kept the same.
    
    Then, $\mathcal{C}$ sends to adversary $\mathcal{A}$ the ciphertext $\textsf{CT}_{\snd}=((\textbf{M},\pi,\{n_{\pi(i)}\}_{i\in[m_1]}),\textsf{ct}_0,\textsf{ct}_1,\textsf{ct}_2,(\textsf{ct}_{3,i})_{i\in[m_1]},\textsf{ct}_{4,1},\textsf{ct}_{4,2},$ $\{\textsf{ct}_{5,i},$ $\textsf{ct}_{6,i}\}_{i\in[\ell_1]},\textsf{ct}_7,\textsf{ct}_8,\textsf{ct}_9)$.

    \noindent{\scriptsize$\bullet$} \textbf{Phase 2}. It is the same as in Phase 1 with the restriction that any input access policy $\mathbb{A}$ are not allowed to satisfy the challenge attribute sets $\mathcal{S}_0^*$ and $\mathcal{S}_1^*$.

    \noindent{\scriptsize$\bullet$} \textbf{\textit{Guess}}. $\mathcal{A}$ outputs a bit as a guess.

    \textbf{Analysis of $\mathcal{A}$'s Success Probability}. For simplicity, we denote $t_{\pi(i),\beta}$ as $t_{\pi(i)}$, and $t_{i,\beta}$ as $t_i$ in all further paragraphs.
    Suppose that $\mathcal{A}$ can algebraically construct $e(g_1,g_2)^{\delta_1(\kappa\lambda_i+t_{\pi(i)}s_3)}$, $e(g_1,g_2)^{\delta_2t_is_2}$, $e(g_1,g_2)^{\delta_3t_i\tau_1s_3}$ for some $\delta_1,\delta_2,\delta_3\in\mathbb{Z}_q$ using all oracle outputs it has already queried, then it can use them to distinguish the three tuples.

    To calculate the probability of $\mathcal{A}$ constructing $e(g_1,g_2)^{\delta_1(\kappa\lambda_i+t_{\pi(i)}s_3)}$, $e(g_1,g_2)^{\delta_2 t_is_2}$, $e(g_1,g_2)^{\delta_3 t_i\tau_1s_3}$ for some $\delta_1,\delta_2,\delta_3\in\mathbb{Z}_q$, we perform a case analysis based on the information $\mathcal{A}$ receives from the simulation. 
    % For completeness, 
    % we first summarize the exponent elements available to $\mathcal{A}$ in groups $\mathbb{G}_1$, $\mathbb{G}_2$, and $\mathbb{G}_T$.\vspace{1mm}

    % \noindent$-$ $\mathbb{G}_1$ elements: $1$, $\kappa$, $t_i$, $t_i\tau$, $t_ir$, 
    % % $t_is_2$, 
    % $x+\kappa\tau$,  $\kappa\lambda_i+t_is_3$, $t_i\tau_1s_3$, $(x+\kappa\tau_1)s_3$, $\frac{1}{b_1}(\lambda_i+t_ir')$, $\frac{1}{b_2}(\lambda_i+t_ir')$, $\frac{1}{b_1}(\kappa\psi_i+t_ir')$, $\frac{1}{b_2}(\kappa\psi_i+t_ir')$.\vspace{1mm}

    % \noindent$-$ $\mathbb{G}_2$ elements: 1, $\mu$, $b_1$, $b_2$, $r$, $r'$, $s_1$, $s_3$, $b_1\tau$, $b_2\tau$, $b_1s_2'$, $b_2s_2''$, $\alpha+\kappa r$,  $b_1\tau_1s_3'$, $b_2\tau_1s_3''$. \vspace{1mm}

    % \noindent$-$ $\mathbb{G}_T$ elements: 1, $\alpha$, $x\mu$, 
    % $\alpha s+x\mu s_3$.\vspace{1mm}
    
    For completeness, we list all possible queries into $\mathbb{G}_T$ using the bilinear map and group elements available to $\mathcal{A}$, as shown in Table \ref{Tab:FEME-privacy}. The violet elements in Table \ref{Tab:FEME-privacy} should be excluded for this anonymity proof, they are used in the confidentiality proof. $\mathcal{A}$ can compute arbitrary linear combinations of these terms. Below, we show that constructing these expressions is computationally infeasible, giving $\mathcal{A}$ only a negligible advantage in winning the anonymity game.

    %----------------------------------------------------

    (1) Consider how to construct $e(g_1,g_2)^{\delta_1(\kappa\lambda_i+t_{\pi(i)}s_3)}$ for some $\delta_1$. According to Table \ref{Tab:FEME-privacy}, $\mathcal{A}$ cannot combine $s_3=s_3'+s_3''$ via simple addition or subtraction within the elements of $\mathbb{G}_1$, $\mathbb{G}_2$, or $\mathbb{G}_T$. 
    
    To generate a term involving $\kappa\lambda_i$ is by pairing $\frac{1}{b_1} \cdot (\lambda_i + t_{\rho(i)}r')$ with $b_1\tau_1s_3'$ and pairing $\frac{1}{b_2} \cdot (\lambda_i + t_{\rho(i)}r')$ with $b_2\tau_1s_3''$, yielding $(\lambda_i + t_{\rho(i)}r')\tau_1s_3'$ and $(\lambda_i + t_{\rho(i)}r')\tau_1s_3''$, which combine to produce $(\lambda_i + t_{\rho(i)}r')\tau_1s_3$ in $\mathbb{G}_T$.
    
    To generate a term involving $t_is_3$ is by pairing $\frac{1}{b_1} \cdot (\kappa\psi_i + t_{\rho(i)}r')$ with $b_1\tau_1s_3'$ and pairing $\frac{1}{b_2} \cdot (\kappa\psi_i + t_{\rho(i)}r')$ with $b_2\tau_1s_3''$, yielding $(\kappa\psi_i + t_{\rho(i)}r')\tau_1s_3'$ and $(\kappa\psi_i + t_{\rho(i)}r')\tau_1s_3''$, which combine to produce $(\kappa\psi_i + t_{\rho(i)}r')\tau_1s_3$ in $\mathbb{G}_T$. 
    
    However, $\kappa\lambda_i+t_{\pi(i)}s_3$ cannot be derived through addition or subtraction of $(\lambda_i + t_{\rho(i)}r')\tau_1s_3$, $(\kappa\psi_i + t_{\rho(i)}r')\tau_1s_3$ with any other elements in $\mathbb{G}_1$, $\mathbb{G}_2$, or $\mathbb{G}_T$. Hence, constructing $\delta_1(\kappa\lambda_i+t_{\pi(i)}s_3)$ in $\mathbb{G}_T$ is impossible for $\mathcal{A}$.    \vspace{1mm}

    %----------------------------------------------------

    (2) Now, consider constructing $e(g_1,g_2)^{\delta_2 t_is_2}$ for some $\delta_2$. According to Table \ref{Tab:FEME-privacy}, $\mathcal{A}$ cannot combine $s_2=s_2'+s_2''$ via simple addition or subtraction within the elements of $\mathbb{G}_1$, $\mathbb{G}_2$, or $\mathbb{G}_T$. 
    
    The first way $\mathcal{A}$ could generate a term involving $t_is_2$ is by pairing $\frac{1}{b_1}\cdot(\lambda_i+t_{\rho(i)}r')$ with $b_1s_2'$ and pairing $\frac{1}{b_2}\cdot(\lambda_i+t_{\rho(i)}r')$ with $b_2s_2''$, producing $(\lambda_i+t_{\rho(i)}r')s_2'$ and $(\lambda_i+t_{\rho(i)}r')s_2''$, which combine to $(\lambda_i+t_{\rho(i)}r')(s_2'+s_2'')=\lambda_is_2+r't_{\rho(i)}s_2$ in $\mathbb{G}_T$.
    However, $t_is_2$ cannot be derived through addition or subtraction of $\lambda_is_2+r't_{\rho(i)}s_2$ with any other elements in $\mathbb{G}_1$, $\mathbb{G}_2$, or $\mathbb{G}_T$. Hence, constructing $\delta_2 t_is_2$ in $\mathbb{G}_T$ is impossible for $\mathcal{A}$.

    Another way $\mathcal{A}$ could generate a term involving $t_is_2$ is by pairing $\frac{1}{b_1}\cdot(\kappa\psi_i+t_{\rho(i)}r')$ with $b_1s_2'$ and pairing $\frac{1}{b_2}\cdot(\kappa\psi_i+t_{\rho(i)}r')$ with $b_2s_2''$, producing $(\kappa\psi_i+t_{\rho(i)}r')s_2'$ and $(\kappa\psi_i+t_{\rho(i)}r')s_2''$, which combine to $(\kappa\psi_i+t_{\rho(i)}r')(s_2'+s_2'')=\kappa\psi_is_2+r't_{\rho(i)}s_2$ in $\mathbb{G}_T$.
    However, $t_is_2$ cannot be derived through addition or subtraction of $\kappa\psi_is_2+r't_{\rho(i)}s_2$ with any other elements in $\mathbb{G}_1$, $\mathbb{G}_2$, or $\mathbb{G}_T$. Hence, constructing $\delta_2 t_is_2$ in $\mathbb{G}_T$ is impossible for $\mathcal{A}$.

    %----------------------------------------------------

    (3) Finally, consider constructing $e(g_1,g_2)^{\delta_3t_i\tau_1s_3}$ for some $\delta_3$. According to Table \ref{Tab:FEME-privacy}, $\mathcal{A}$ cannot combine $s_3=s_3'+s_3''$ via simple addition or subtraction within the elements of $\mathbb{G}_1$, $\mathbb{G}_2$, or $\mathbb{G}_T$. 
    
    The first way $\mathcal{A}$ could generate a term involving $t_i\tau_1s_3$ is by pairing $\frac{1}{b_1} \cdot (\lambda_i + t_{\rho(i)}r')$ with $b_1\tau_1s_3'$ and pairing $\frac{1}{b_2} \cdot (\lambda_i + t_{\rho(i)}r')$ with $b_2\tau_1s_3''$, yielding $(\lambda_i + t_{\rho(i)}r')\tau_1s_3'$ and $(\lambda_i + t_{\rho(i)}r')\tau_1s_3''$, which combine to produce $(\lambda_i + t_{\rho(i)}r')\tau_1s_3=\lambda_i\cdot\tau_1s_3 + r'\cdot t_i\tau_1s_3$ in $\mathbb{G}_T$.     
    However, $t_i\tau_1s_3$ cannot be derived through addition or subtraction of $\lambda_i\cdot\tau_1s_3 + r'\cdot t_{\rho(i)}\tau_1s_3$ with any other elements in $\mathbb{G}_1$, $\mathbb{G}_2$, or $\mathbb{G}_T$. Hence, constructing $\delta_3t_i\tau_1s_3$ in $\mathbb{G}_T$ is impossible for $\mathcal{A}$.
    
    Another way $\mathcal{A}$ could generate a term involving $t_i\tau_1s_3$ is by pairing $\frac{1}{b_1} \cdot (\kappa\psi_i + t_{\rho(i)}r')$ with $b_1\tau_1s_3'$ and pairing $\frac{1}{b_2} \cdot (\kappa\psi_i + t_{\rho(i)}r')$ with $b_2\tau_1s_3''$, yielding $(\kappa\psi_i + t_{\rho(i)}r')\tau_1s_3'$ and $(\kappa\psi_i + t_{\rho(i)}r')\tau_1s_3''$, which combine to produce $(\kappa\psi_i + t_{\rho(i)}r')\tau_1s_3$ in $\mathbb{G}_T$.     
    However, $t_i\tau_1s_3$ cannot be derived through addition or subtraction of $(\kappa\psi_i + t_{\rho(i)}r')\tau_1s_3$ with any other elements in $\mathbb{G}_1$, $\mathbb{G}_2$, or $\mathbb{G}_T$. Hence, constructing $\delta_3t_i\tau_1s_3$ in $\mathbb{G}_T$ is impossible for $\mathcal{A}$.
    
    \textbf{Analysis of Simulation Failure}. Same as that in the proof for Theorem \ref{theo:FEME-privacy}.

    In conclusion, $\mathcal{A}$ only holds a negligible advantage in the modified game, implying that it also has a negligible advantage in the anonymity game. 
    
    This completes the proof of Theorem \ref{theo:FEME-anon}. $\hfill\blacksquare$

\subsection{Proof of Theorem \ref{theo:FEME-authn} (FEME: Authenticity)}
\label{app:proof-FEME-authn}

% \noindent\textit{Proof Sketch}. 
To prove the authenticity of FEME, we show that an adversary cannot create a valid ciphertext that passes decryption without the proper attribute encryption keys. The challenger sets up the system, providing the adversary with master public keys and simulating key generation oracles. The adversary can query these oracles to obtain encryption and decryption keys for chosen attribute sets and policies, except for those matching the final target policy. The proof relies on a contradiction: assuming the adversary produces a valid ciphertext under conditions not allowed by the oracles, we demonstrate this would imply the adversary must have queried the sender's encryption key with attributes matching the target policy, violating the security model's constraints. This involves showing that certain polynomial relations between group elements must hold, which leads to the contradiction, proving that any valid forgery is computationally infeasible.
% The concrete proof is deferred to the full version\cite{PriSrv+} due to length limitation.

\noindent\textit{Proof}. The interaction between the adversary $\mathcal{A}$ and challenger $\mathcal{C}$ proceeds as follows.

    \noindent{\scriptsize$\bullet$} \textbf{\textit{Setup}}. The challenger $\mathcal{C}$ chooses $\alpha,x,\mu,b_1,b_2,\kappa\stackrel{\$}{\leftarrow}\mathbb{Z}_p^*$, and calculates $Z=e(g_1,g_2)^{\alpha}$, $Y=e(g_1,g_2)^{x\mu}$, $\delta_0=g_2^{\mu}$, $\delta_1=g_2^{b_1}$, $\delta_2=g_2^{b_2}$ and $h=g_1^{\kappa}$. $\mathcal{C}$ sends the master public key $\mpk=(Z,Y,h,\delta_0,\delta_1,\delta_2)$ to $\mathcal{A}$.

    \noindent{\scriptsize$\bullet$} \textbf{\textit{Phase 1}}. In phase 1, $\mathcal{A}$ can make oracle queries to the random oracle and a key generation oracle as follows.

    \noindent$-$ \textit{Random oracle} $(\mathcal{O}_{H})$. Same as defined above.

    \noindent$-$ \textit{Attribute encryption key generation oracle} $(\mathcal{O}_{\EKGen})$. When $\mathcal{A}$ makes a encryption key query for an attribute set $\mathcal{S}_{\textsf{snd}}$, $\mathcal{C}$ picks $\tau_j\stackrel{\$}{\leftarrow}\mathbb{Z}_p^*$ for the $j$-th query. Then, $\mathcal{C}$ generates the attribute encryption key as     $$\textsf{ek}_{1,i}=h_i^{\tau_j},~~\textsf{ek}_2=\delta_1^{\tau_j},~~\textsf{ek}_3=\delta_2^{\tau_j},~~\textsf{ek}_4=g_1^{x}\cdot h^{\tau_j}.$$
    Then, $\mathcal{C}$ sends to $\mathcal{A}$ the attribute encryption key $\textsf{EK}_{\mathcal{S}_{\textsf{snd}}}=(\{n_i\}_{i\in[\ell_1]},\{\textsf{ek}_{1,i}\}_{i\in[\ell_1]},\textsf{ek}_2,\textsf{ek}_3,\textsf{ek}_4)$.   

    % \noindent$-$ \textit{Attribute decryption key generation oracle} $(\mathcal{O}_{\DKGen})$. Same as defined in the proof of Theorem \ref{theo:FEME-privacy}.
    
    % \noindent$-$ \textit{Policy decryption key generation oracle} $(\mathcal{O}_{\PolGen})$. Same as defined in the proof of Theorem \ref{theo:FEME-privacy}.

    \noindent$-$ \textit{Attribute decryption key generation oracle} $(\mathcal{O}_{\DKGen})$. When $\mathcal{A}$ makes a key query for an attribute set $\mathcal{S}_{\textsf{rcv}}$, $\mathcal{C}$ picks $r\stackrel{\$}{\leftarrow}\mathbb{Z}_p^*$. Then, $\mathcal{C}$ generates the attribute decryption key as  $$\textsf{dk}_1=g_1^{\alpha}\cdot h^{r},~~\textsf{dk}_{2,i}=h_i^{r},~~\textsf{dk}_3=g_2^{r}.$$
    Then, $\mathcal{C}$ sends to $\mathcal{A}$ the attribute decryption key $\textsf{DK}_{\mathcal{S}_{\textsf{rcv}}}=(\{n_i\}_{i\in[\ell_2]},\textsf{dk}_1,\{\textsf{dk}_{2,i}\}_{i\in[\ell_2]},\textsf{dk}_3)$.

    \noindent$-$ \textit{Policy decryption key generation oracle} $(\mathcal{O}_{\PolGen})$. When $\mathcal{A}$ makes a key query for a policy $\mathbb{A}_{\textsf{rcv}}=(\textbf{A},\rho,\{\Psi_{\rho(i)}\}_{i\in[m_2]})$, $\mathcal{C}$ picks $r'\stackrel{\$}{\leftarrow}\mathbb{Z}_p^*$ and a vector $\textbf{y}\stackrel{\$}{\leftarrow}\mathbb{Z}_p^{m_2-1}$. Let $\lambda_i=\textbf{A}_i(\alpha||\textbf{y})^{\top}$ and $\psi_i=\textbf{A}_i(\mu||\textbf{y})^{\top}$. Note that the $\lambda_i$ (resp. $\psi_i$) are chosen uniformly and independently at random from $\mathbb{Z}_p^*$ subject to the random distribution of $\alpha$ (resp. $\mu$) and $\textbf{y}$. Then, $\mathcal{C}$ generates the policy decryption key as   $$\textsf{sk}_1=g_2^{r'},~~\textsf{sk}_{2,i}=(g_1^{\lambda_i}\cdot h_{\rho(i)}^{r'})^{\frac{1}{b_1}},~~\textsf{sk}_{3,i}=(g_1^{\lambda_i}\cdot h_{\rho(i)}^{r'})^{\frac{1}{b_2}},$$    $$\textsf{sk}_{4,i}=(h^{\psi_i}\cdot h_{\rho(i)}^{r'})^{\frac{1}{b_1}},~~\textsf{sk}_{5,i}=(h^{\psi_i}\cdot h_{\rho(i)}^{r'})^{\frac{1}{b_2}}.$$
    Then, $\mathcal{C}$ sends to $\mathcal{A}$ the policy decryption key $\textsf{SK}_{\mathbb{A}_{\textsf{rcv}}}=((\textbf{A},\rho,\{n_{\rho(i)}\}_{i\in[m_2]}),\textsf{sk}_1,\{\textsf{sk}_{2,i},\textsf{sk}_{3,i},\textsf{sk}_{4,i},\textsf{sk}_{5,i}\}_{i\in[m_2]})$.   

    \noindent{\scriptsize$\bullet$} \textbf{\textit{Forgery}}. The attacker $\mathcal{A}$ sends $(\CT_{\snd}^*,\mathcal{S}_{\snd}^*,\mathbb{A}_{\rcv}^*)$ to $\mathcal{C}$, where $\mathcal{S}_{\snd}^*\models\mathbb{A}_{\rcv}^*$. The restriction is that the attribute encryption key of $\mathcal{S}_{\snd}^*$ has not been queried in $\mathcal{O}_{\EKGen}$, and the policy decryption key of $\mathbb{A}_{\rcv}^*$ and any other $\mathbb{A}_{\rcv}$ satisfying $\mathcal{S}_{\snd}^*\models\mathbb{A}_{\rcv}$ have not been queried in $\mathcal{O}_{\PolGen}$.
    We must prove that under the restrictions, the adversary $\mathcal{A}$ cannot forge a valid ciphertext $\CT_{\snd}^*$. 
    
    We use proof by contradiction to assume that $\CT_{\snd}^*$ is a valid ciphertext. The adversary $\mathcal{A}$ has access to the group elements provided in the master public key $\textsf{mpk}=(Z,Y,h,\delta_0,\delta_1,\delta_2)$, $h_i=H(u_i)$ obtained from $\mathcal{O}_{H}$, and $\EK_{\mathcal{S}_{\snd}}$ obtained from $\mathcal{O}_{\EKGen}$, as they are the input of $\Enc$ for $\CT_{\snd}^*$ generation. Let $\textsf{EK}_{\mathcal{S}_{\textsf{snd}}}^{(j)}=(\{n_i^{(j)}\}_{i\in[\ell_1]},\{\textsf{ek}_{1,i}^{(j)}\}_{i\in[\ell_1]},\textsf{ek}_2^{(j)},\textsf{ek}_3^{(j)},\textsf{ek}_4^{(j)})$ be the output of the $j$-th query to $\mathcal{O}_{\EKGen}$. Denote $q_e$ as the query time to $\mathcal{O}_{\EKGen}$. In GGM, the only way for the adversary to generate new group elements is to use the existing exponent elements available to $\mathcal{A}$ in groups $\mathbb{G}_1$, $\mathbb{G}_2$ and $\mathbb{G}_T$. This means that there are known scalars $\big(a,b,b',b'',c,c',c'',\{d_i\}_{i\in[m_1]},e,\{f_{1,j},f_{2,j},f_{3,j},f_{4,j}\}_{j\in[q_e]}\big)$ such that:
    $$V=Z^{a+b}\cdot Y^{c},~\ct_0=\phi(\msg)\oplus\hat{H}(V),$$ $$\textsf{ct}_1^*=g_2^{a},~\textsf{ct}_2^*=g_2^{c},~\textsf{ct}_{3,i}^*=h^{d_i}\cdot (h_{\pi(i)})^{c},$$    $$\textsf{ct}_{4,1}^*=\delta_1^{b'},~\textsf{ct}_{4,2}^*=\delta_2^{b''},~\textsf{ct}_{5,i}^*=h_i^{b},$$
    \begin{eqnarray*}               \textsf{ct}_{6,i}^*&=&\prod\nolimits_{j=1}^{q_e}(\ek_{1,i}^{(j)})^{f_{1,j}\cdot c}\cdot h_i^{e\cdot c},\\        \textsf{ct}_7^*&=&\prod\nolimits_{j=1}^{q_e}(\ek_2^{(j)})^{f_{2,j}\cdot c'}\cdot\delta_1^{e\cdot c'}, \\    \textsf{ct}_8^*&=&\prod\nolimits_{j=1}^{q_e}(\ek_3^{(j)})^{f_{3,j}\cdot c''}\cdot\delta_2^{e\cdot c''},\\    \textsf{ct}_9^*&=&\prod\nolimits_{j=1}^{q_e}(\ek_4^{(j)})^{f_{4,j}\cdot c}\cdot h^{e\cdot c}.
    \end{eqnarray*}
    In the ciphertext $\CT_{\snd}^*$ construction, it implicit sets that $a=s_1$, $b=s_2$, $c=s_3$, $b'=s_2'$, $b''=s_2''$, $c'=s_3'$, $c''=s_3''$, where $b=b'+b''$ and $c=c'+c''$. It also implies that $d_i=\textbf{M}_{i}(s_1||\textbf{v})^{\top}$ and $e=\tau'$.

    Recall that $\CT_{\snd}^*$ is a valid ciphertext on $\mathcal{S}_{\snd}^*$ and $\mathcal{S}_{\snd}^*\models\mathbb{A}_{\rcv}^*$. According to the correctness proof (shown in the full version \cite{PriSrv+}), we know that:
    $$\frac{e(\prod_{i\in I_2}(\sk_{4,\rho(i)})^{\omega_i},\ct_7)e(\prod_{i\in I_2}(\sk_{5,\rho(i)})^{\omega_i},\ct_8)}{e(\ct_9,\delta_0)e(\prod\nolimits_{i\in I_2}(\ct_{6,\rho(i)})^{\omega_i},\sk_1)}=Y^{-s_3},$$
    where $\sum_{i\in I_2}(\A_i(\mu||\sfy)^{\top}\cdot\omega_i)=\mu$.

    Based on this equation, we derive that
    {\footnotesize
    \begin{eqnarray*}
        &&\frac{e(\prod_{i\in I_2}(\sk_{4,\rho(i)})^{\omega_i},\ct_7)e(\prod_{i\in I_2}(\sk_{5,\rho(i)})^{\omega_i},\ct_8)}{e(\ct_9,\delta_0)e(\prod\nolimits_{i\in I_2}(\ct_{6,\rho(i)})^{\omega_i},\sk_1)}\\
        &=&\frac{e(\prod_{i\in I_2}((h^{\psi_i}\cdot h_{\rho(i)}^{r'})^{\frac{1}{b_1}})^{\omega_i},\prod\nolimits_{j=1}^{q_e}(\ek_2^{(j)})^{f_{2,j}c'}\cdot\delta_1^{e\cdot c'})}{e(\prod\nolimits_{j=1}^{q_e}(\ek_4^{(j)})^{f_{4,j}c}\cdot h^{e\cdot c},\delta_0)}\\
        &&\cdot \frac{e(\prod_{i\in I_2}((h^{\psi_i}\cdot h_{\rho(i)}^{r'})^{\frac{1}{b_2}})^{\omega_i},\prod\nolimits_{j=1}^{q_e}(\ek_3^{(j)})^{f_{3,j}c''}\cdot\delta_2^{e\cdot c''})}{e(\prod\nolimits_{i\in I_2}(\prod\nolimits_{j=1}^{q_e}(\ek_{1,\rho(i)}^{(j)})^{f_{1,j}c}\cdot h_{\rho(i)}^{e\cdot c})^{\omega_i},g_2^{r'})}\\         
        &=&\frac{e(\prod_{i\in I_2}(h^{\psi_i}\cdot h_{\rho(i)}^{r'})^{\frac{\omega_i}{b_1}},\prod\nolimits_{j=1}^{q_e}(\delta_1^{\tau_j})^{f_{2,j}c'}\cdot\delta_1^{e\cdot c'})}{e(\prod\nolimits_{j=1}^{q_e}(g_1^{x}\cdot h^{\tau_j})^{f_{4,j}c}\cdot h^{e\cdot c},g_2^{\mu})}\\
        &&\cdot \frac{e(\prod_{i\in I_2}(h^{\psi_i}\cdot h_{\rho(i)}^{r'})^{\frac{\omega_i}{b_2}},\prod\nolimits_{j=1}^{q_e}(\delta_2^{\tau_j})^{f_{3,j}c''}\cdot\delta_2^{e\cdot c''})}{e(\prod\nolimits_{i\in I_2}(\prod\nolimits_{j=1}^{q_e}(h_{\rho(i)}^{\tau_j})^{f_{1,j}c}\cdot h_{\rho(i)}^{e\cdot c})^{\omega_i},g_2^{r'})}\\  
        &=&\frac{e\big(g_1^{\frac{\kappa}{b_1}\sum_{i\in I_2}\psi_i\omega_i}\cdot g_1^{\frac{r'}{b_1}\sum_{i\in I_2}\omega_it_{\rho(i)}},g_2^{c'b_1(\sum\nolimits_{j=1}^{q_e}f_{2,j}\tau_j+e)}\big)}{e(g_1^{cx\cdot\sum\nolimits_{j=1}^{q_e}f_{4,j}}\cdot g_1^{c(\sum\nolimits_{j=1}^{q_e}f_{4,j}\tau_j+e)\kappa},g_2^{\mu})}\\
        &&\cdot \frac{e\big(g_1^{\frac{\kappa}{b_2}\sum_{i\in I_2}\psi_i\omega_i}\cdot g_1^{\frac{r'}{b_2}\sum_{i\in I_2}\omega_it_{\rho(i)}},g_2^{c''b_2(\sum\nolimits_{j=1}^{q_e}f_{3,j}\tau_j+e)}\big)}{e(g_1^{c(\sum\nolimits_{j=1}^{q_e}f_{1,j}\tau_j+e)\sum\nolimits_{i\in I_2}\omega_it_{\rho(i)}},g_2^{r'})}\\
        &=&e(g_1,g_2)^{-x\mu\cdot c}=Y^{-c},
    \end{eqnarray*}}
    where $\sum_{i\in I_2}\psi_i\cdot\omega_i=\mu$, and $\psi_i=\textbf{A}_i(\mu||\textbf{y})^{\top}$.

    Then, we obtain the following polynomial relation:
    {\footnotesize
    \begin{eqnarray*}
        &&\big(\frac{\kappa}{b_1}\sum_{i\in I_2}\psi_i\omega_i+\frac{r'}{b_1}\sum_{i\in I_2}\omega_it_{\rho(i)}\big)\cdot \big(c'b_1(\sum\nolimits_{j=1}^{q_e}f_{2,j}\tau_j+e)\big)\\
        &&+\big(\frac{\kappa}{b_2}\sum_{i\in I_2}\psi_i\omega_i+\frac{r'}{b_2}\sum_{i\in I_2}\omega_it_{\rho(i)}\big)\cdot\big(c''b_2(\sum\nolimits_{j=1}^{q_e}f_{3,j}\tau_j+e)\big)\\
        &&- \big(cx\cdot \sum\nolimits_{j=1}^{q_e}f_{4,j}+c(\sum\nolimits_{j=1}^{q_e}f_{4,j}\tau_j+e)\kappa\big)\cdot\mu\\
        &&-\big(c(\sum\nolimits_{j=1}^{q_e}f_{1,j}\tau_j+e)\sum\nolimits_{i\in I_2}\omega_it_{\rho(i)}\big)\cdot r'\\
        &=&-x\mu c
    \end{eqnarray*}}

    Let us rearrange the left side of this equation as follows.
    {\footnotesize
    \begin{eqnarray*}
        &&\big(\frac{\kappa}{b_1}\sum_{i\in I_2}\psi_i\omega_i+\frac{r'}{b_1}\sum_{i\in I_2}\omega_it_{\rho(i)}\big)\cdot \big(c'b_1(\sum\nolimits_{j=1}^{q_e}f_{2,j}\tau_j+e)\big)\\
        &&+\big(\frac{\kappa}{b_2}\sum_{i\in I_2}\psi_i\omega_i+\frac{r'}{b_2}\sum_{i\in I_2}\omega_it_{\rho(i)}\big)\cdot\big(c''b_2(\sum\nolimits_{j=1}^{q_e}f_{3,j}\tau_j+e)\big)\\
        &&- \big(cx\cdot \sum\nolimits_{j=1}^{q_e}f_{4,j}+c(\sum\nolimits_{j=1}^{q_e}f_{4,j}\tau_j+e)\kappa\big)\cdot\mu\\
        &&-\big(c(\sum\nolimits_{j=1}^{q_e}f_{1,j}\tau_j+e)\sum\nolimits_{i\in I_2}\omega_it_{\rho(i)}\big)\cdot r'\\
        &=&\big(\kappa c'\sum\nolimits_{j=1}^{q_e}f_{2,j}\tau_j(\sum_{i\in I_2}\psi_i\omega_i)\\        &&~~~~+r'c'\sum\nolimits_{j=1}^{q_e}f_{2,j}\tau_j(\sum_{i\in I_2}\omega_it_{\rho(i)})\\
        &&~~~~+\kappa c'e\sum_{i\in I_2}\psi_i\omega_i+r'c'e\sum_{i\in I_2}\omega_it_{\rho(i)}\big)\\
        &&+\big(\kappa c''\sum\nolimits_{j=1}^{q_e}f_{3,j}\tau_j(\sum_{i\in I_2}\psi_i\omega_i)\\        &&~~~~+r'c''\sum\nolimits_{j=1}^{q_e}f_{3,j}\tau_j(\sum_{i\in I_2}\omega_it_{\rho(i)})\\
        &&~~~~+\kappa c''e\sum_{i\in I_2}\psi_i\omega_i+r'c''e\sum_{i\in I_2}\omega_it_{\rho(i)}\big)\\
        &&- \big(\sum\nolimits_{j=1}^{q_e}f_{4,j}\cdot x\mu c+c(\sum\nolimits_{j=1}^{q_e}f_{4,j}\tau_j+e)\kappa\cdot\mu\big)\\
        &&-\big(cr'(\sum\nolimits_{j=1}^{q_e}f_{1,j}\tau_j+e)\sum\nolimits_{i\in I_2}\omega_it_{\rho(i)}\big) \\
        &=&\kappa (ce+c'\sum\nolimits_{j=1}^{q_e}f_{2,j}\tau_j+c''\sum\nolimits_{j=1}^{q_e}f_{3,j}\tau_j)(\sum_{i\in I_2}\psi_i\omega_i)\\        &&+r'(ce+c'\sum\nolimits_{j=1}^{q_e}f_{2,j}\tau_j+c''\sum\nolimits_{j=1}^{q_e}f_{3,j}\tau_j)(\sum_{i\in I_2}\omega_it_{\rho(i)})\\
        &&- \big(\sum\nolimits_{j=1}^{q_e}f_{4,j}\cdot x\mu c+\kappa\cdot c\mu(\sum\nolimits_{j=1}^{q_e}f_{4,j}\tau_j+e)\big)\\
        &&-\big(r'(c\sum\nolimits_{j=1}^{q_e}f_{1,j}\tau_j+ce)\sum\nolimits_{i\in I_2}\omega_it_{\rho(i)}\big) \\
        &=&\kappa [(ce+c'\sum\nolimits_{j=1}^{q_e}f_{2,j}\tau_j+c''\sum\nolimits_{j=1}^{q_e}f_{3,j}\tau_j)\mu\\ 
        &&~~~~~~~~-c\mu(\sum\nolimits_{j=1}^{q_e}f_{4,j}\tau_j+e)\big)]\\
        &&+r'(-c\sum\limits_{j=1}^{q_e}f_{1,j}\tau_j+c'\sum\limits_{j=1}^{q_e}f_{2,j}\tau_j+c''\sum\limits_{j=1}^{q_e}f_{3,j}\tau_j)\\
        &&~~~~~\cdot(\sum_{i\in I_2}\omega_it_{\rho(i)})\\
        &&- \sum\nolimits_{j=1}^{q_e}f_{4,j}\cdot x\mu c\\
        &=& (-c\sum\nolimits_{j=1}^{q_e}f_{4,j}\tau_j+c'\sum\nolimits_{j=1}^{q_e}f_{2,j}\tau_j+c''\sum\nolimits_{j=1}^{q_e}f_{3,j}\tau_j)\kappa\mu\\ 
        &&+r'(-c\sum\limits_{j=1}^{q_e}f_{1,j}\tau_j+c'\sum\limits_{j=1}^{q_e}f_{2,j}\tau_j+c''\sum\limits_{j=1}^{q_e}f_{3,j}\tau_j)\\
        &&~~~~~\cdot(\sum_{i\in I_2}\omega_it_{\rho(i)})\\
        &&- \sum\nolimits_{j=1}^{q_e}f_{4,j}\cdot x\mu c
    \end{eqnarray*}}

    Since the right side of the equation is $-x\mu c$, we must have 
    \begin{eqnarray}
        -c\sum\limits_{j=1}^{q_e}f_{4,j}\tau_j+c'\sum\limits_{j=1}^{q_e}f_{2,j}\tau_j+c''\sum\limits_{j=1}^{q_e}f_{3,j}\tau_j=0,\\
        -c\sum\limits_{j=1}^{q_e}f_{1,j}\tau_j+c'\sum\limits_{j=1}^{q_e}f_{2,j}\tau_j+c''\sum\limits_{j=1}^{q_e}f_{3,j}\tau_j=0,\\
        \sum\nolimits_{j=1}^{q_e}f_{4,j}=1.
    \end{eqnarray}

    Combining equations (1)-(3), we derive that $$\sum\nolimits_{j=1}^{q_e}f_{1,j}=\sum\nolimits_{j=1}^{q_e}f_{4,j}=1.$$
   
    We arrange equation (1) as
    $$\sum\nolimits_{j=1}^{q_e}(-cf_{4,j}+c'f_{2,j}+c''f_{3,j})\tau_j=0.$$
    According to linear algebraic theory, we deduce that 
    $-cf_{4,j}+c'f_{2,j}+c''f_{3,j}=0$ for all $j\in[q_e]$.     
    As $c=c'+c''$, we have 
    \begin{eqnarray*}
        -(c'+c'')f_{4,j}+c'f_{2,j}+c''f_{3,j}&=&0\\
        c'(f_{2,j}-f_{4,j})+c''(f_{3,j}-f_{4,j})&=&0
    \end{eqnarray*}
    Then, we deduce that $f_{2,j}=f_{3,j}=f_{4,j}$ for all $j\in[q_e]$.
    Similarly, we deduce that $f_{1,j}=f_{2,j}=f_{3,j}$ for all $j\in[q_e]$ from equation (2). 
    Then, $f_{1,j}=f_{2,j}=f_{3,j}=f_{4,j}$ for all $j\in[q_e]$.
    Under this situation, we have  $$\textsf{EK}_{\mathcal{S}_{\textsf{snd}}}^{(j)}=(\{n_i^{(j)}\}_{i\in[\ell_1]},\{\textsf{ek}_{1,i}^{(j)}\}_{i\in[\ell_1]},\textsf{ek}_2^{(j)},\textsf{ek}_3^{(j)},\textsf{ek}_4^{(j)})$$ are created for the same $\mathcal{S}_{\textsf{snd}}^*$ for all $j\in[q_e]$, where $\mathcal{S}_{\textsf{snd}}^*\models\mathbb{A}_{\textsf{rcv}}^*$ (the deduction of this step is deterred to the end of the proof).     
    It indicates that the ciphertext $\CT_{\snd}^*$ must be constructed as    
    \begin{eqnarray*}                   \textsf{ct}_{6,i}^*&=&\prod\limits_{j=1}^{q_e}(H(u_i)^{\tau_j})^{f_{j}\cdot c}\cdot h_i^{e\cdot c}=(H(u_i)^{\tau_0}\cdot h_i^e)^c,\\    \textsf{ct}_7^*&=&\prod\limits_{j=1}^{q_e}(\delta_1^{\tau_j})^{f_{j}\cdot c'}\cdot\delta_1^{e\cdot c'}=\big(\delta_1^{\tau_0}\cdot\delta_1^{e}\big)^{c'},\\
    \textsf{ct}_8^*&=&\prod\limits_{j=1}^{q_e}(\delta_2^{\tau_j})^{f_{j}\cdot c''}\cdot\delta_2^{e\cdot c''}=\big(\delta_2^{\tau_0}\cdot\delta_2^{e}\big)^{c''},\\
    \textsf{ct}_9^*&=&\prod\limits_{j=1}^{q_e}(g_1^xh^{\tau_j})^{f_{j}\cdot c}\cdot h^{e\cdot c}=\big((g_1^xh^{\tau_0})\cdot h^{e}\big)^{c},
    \end{eqnarray*}
    where $f_j=f_{1,j}=f_{2,j}=f_{3,j}=f_{4,j}$ and $\tau_0=\sum\nolimits_{j=1}^{q_e}\tau_jf_j$. 
    
    We can deduce that 
    $$\textsf{ct}_{6,i}^*=(\ek_{1,i}^*\cdot h_i^{e})^c,~\textsf{ct}_7^*=(\ek_2^*\cdot\delta_1^{e})^{c'},$$
    $$\textsf{ct}_8^*=(\ek_3^*\cdot\delta_2^{e})^{c''},~\textsf{ct}_9^*=(\ek_4^*\cdot h^{e})^c,$$
    where $\ek_{1,i}^*= H(u_i)^{\tau_0}$, $\ek_2^*=\delta_1^{\tau_0}$, $\ek_3^*=\delta_2^{\tau_0}$, $\ek_4^*=g_1^xh^{\tau_0}$.    
    Since the scalers $\big(c,c',c'',e)$ are selected by the adversary, $\mathcal{A}$ can easily calculate $\EK_{\snd}^*=(\{n_i^*\}_{i\in[\ell_1]},\{\textsf{ek}_{1,i}^*\}_{i\in[\ell_1]},\textsf{ek}_2^*,\textsf{ek}_3^*,\textsf{ek}_4^*)$ for $\mathcal{S}_{\textsf{snd}}^*$ and $\mathcal{S}_{\textsf{snd}}^*\models\mathbb{A}_{\textsf{rcv}}^*$, which contradicts with the constraint.

    Therefore, the adversary cannot symbolically produce a valid forgery, and the \textit{authenticity} of FEME is proved. $\hfill\qedsymbol$\\
    
    In the following, we proved that $\textsf{EK}_{\mathcal{S}_{\textsf{snd}}}^{(j)}$ are created for the same $\mathcal{S}_{\textsf{snd}}^*$. 
    % based on $f_{1,j}=f_{2,j}=f_{3,j}=f_{4,j}=f_j$ and $\sum\nolimits_{j=1}^{q_e}f_j=1$, for all $j\in[q_e]$.

    \noindent(1) Let $q_e=1$. We have $f_{1,j}=f_{2,j}=f_{3,j}=f_{4,j}=f_j=1$ and 
    $$\textsf{ct}_{6,i}^*=(\ek_{1,i})^{c}\cdot h_i^{e\cdot c},~\textsf{ct}_7^*=(\ek_2)^{c'}\cdot\delta_1^{e\cdot c'},$$
    $$\textsf{ct}_8^*=(\ek_3)^{c''}\cdot\delta_2^{e\cdot c''},~\textsf{ct}_9^*=(\ek_4)^{c}\cdot h^{e\cdot c}.$$
    If the challenge ciphertext $\CT_{\snd}^*$ is valid, the adversary can directly calculate $\EK_{\snd}^*$ for $\mathcal{S}_{\textsf{snd}}^*$ and $\mathcal{S}_{\textsf{snd}}^*\models\mathbb{A}_{\textsf{rcv}}^*$, which contradicts with the constraint.
    \vspace{1.5mm}

    \noindent(2) Let $q_e=2$. We have $f_{1,j}=f_{2,j}=f_{3,j}=f_{4,j}=f_j$ and $\sum\nolimits_{j=1}^{2}f_j=f_1+f_2=1$. \vspace{1.5mm}
    
    \noindent(2.1) We firstly prove that $\mathcal{S}_{\snd}^{(1)}=\{u_i^{(1)}\}_{i\in[\ell_1^{(1)}]}$ and $\mathcal{S}_{\snd}^{(2)}=\{u_i^{(2)}\}_{i\in[\ell_1^{(2)}]}$ must have the same number of attributes, i.e., $\ell_1^{(1)}=\ell_1^{(2)}$. 
    
    Suppose that $\mathcal{S}_{\snd}^{(1)}$ and $\mathcal{S}_{\snd}^{(2)}$ has different number of attributes and $\ell_1^{(1)}>\ell_1^{(2)}$. We have 
    $$\textsf{ct}_{6,i}^*=\big(H(u_i^{(1)})^{\tau_1f_1}\cdot h_i^e\big)^c,\text{~for~}i=\ell_1^{(1)}.$$
    Therefore, $\tau_0=\tau_1f_1$. Since $\tau_0=\sum\nolimits_{j=1}^{2}\tau_jf_j$, it can be deduced that $f_2=0$. Since $f_1+f_2=1$, we have $f_1=1$. Then,
    $$\textsf{ct}_{6,i}^*=(\ek_{1,i}^*\cdot h_i^{e})^c,~\textsf{ct}_7^*=(\ek_2^*\cdot\delta_1^{e})^{c'},$$
    $$\textsf{ct}_8^*=(\ek_3^*\cdot\delta_2^{e})^{c''},~\textsf{ct}_9^*=(\ek_4^*\cdot h^{e})^c.$$
    $\mathcal{A}$ can easily calculate $\EK_{\snd}^*$ for $\mathcal{S}_{\textsf{snd}}^*$ and $\mathcal{S}_{\textsf{snd}}^*\models\mathbb{A}_{\textsf{rcv}}^*$, which contradicts with the constraint.\vspace{1.5mm}
    
    \noindent(2.2) Next, we prove that $\mathcal{S}_{\textsf{snd}}^{(1)}=\mathcal{S}_{\textsf{snd}}^{(2)}$.
    If $\mathcal{S}_{\textsf{snd}}^{(1)}\neq\mathcal{S}_{\textsf{snd}}^{(2)}$, there must exist some certain attribute $u_i^{(1)}\neq u_i^{(2)}$, where $i\in[\ell_1^{(2)}]$. Put $u_i^{(1)}$ and $u_i^{(2)}$ into the random oracle to get the hash values $h_i^{(1)}=H(u_i^{(1)})=g_1^{t_i^{(1)}}$ and $h_i^{(2)}=H(u_i^{(2)})=g_1^{t_i^{(2)}}$, where $t_i^{(1)},t_i^{(2)}\stackrel{\$}{\leftarrow}\mathbb{Z}_p^*$ and $t_i^{(1)}\neq t_i^{(2)}$.

    The ciphertext $\CT_{\snd}^*$ is constructed as    
    \begin{eqnarray*}                   \textsf{ct}_{6,i}^*&=&\prod\limits_{j=1}^{2}(H(u_i^{(j)})^{\tau_j})^{f_{j}\cdot c}\cdot h_i^{e\cdot c},\\    \textsf{ct}_7^*&=&\prod\limits_{j=1}^{2}(\delta_1^{\tau_j})^{f_{j}\cdot c'}\cdot\delta_1^{e\cdot c'}=\big(\delta_1^{\tau_0}\cdot\delta_1^{e}\big)^{c'},\\
    \textsf{ct}_8^*&=&\prod\limits_{j=1}^{2}(\delta_2^{\tau_j})^{f_{j}\cdot c''}\cdot\delta_2^{e\cdot c''}=\big(\delta_2^{\tau_0}\cdot\delta_2^{e}\big)^{c''},\\
    \textsf{ct}_9^*&=&\prod\limits_{j=1}^{2}(g_1^xh^{\tau_j})^{f_{j}\cdot c}\cdot h^{e\cdot c}=\big((g_1^xh^{\tau_0})\cdot h^{e}\big)^{c},
    \end{eqnarray*}
    where $\tau_0=\sum\nolimits_{j=1}^{2}\tau_jf_j$.

    If the ciphertext $\CT_{\snd}^*$ is valid, it must have
    $$\textsf{ct}_{6,i}^*=\prod\limits_{j=1}^{2}(H(u_i^{(j)})^{\tau_j})^{f_{j}\cdot c}\cdot h_i^{e\cdot c}=(H(\bar{u}_i)^{\tau_0}\cdot h_i^e)^c$$
    for some attribute $\bar{u}_i$. Let $\bar{h}_i=H(\bar{u}_i)=g_1^{\bar{t}_i}$, where $\bar{t}_i\stackrel{\$}{\leftarrow}\mathbb{Z}_p^*$. We have
    \begin{eqnarray*}
        \textsf{ct}_{6,i}^*&=&\prod\nolimits_{j=1}^{2}(H(u_i^{(j)})^{\tau_j})^{f_{j}\cdot c}\cdot h_i^{e\cdot c}\\
        &=&\big(g_1^{\sum\nolimits_{j=1}^{2}t^{(j)}\cdot \tau_jf_{j}}\cdot h_i^{e}\big)^c\\
        &=&(g^{\bar{t}_i\tau_0}\cdot h_i^e)^c=(H(\bar{u}_i)^{\tau_0}\cdot h_i^e)^c.
    \end{eqnarray*}
    Since $\tau_0=\sum\nolimits_{j=1}^{2}\tau_jf_j$ and $\sum\nolimits_{j=1}^{2}f_j=f_1+f_2=1$, we can deduce that 
    \begin{eqnarray*}
        &&\sum\nolimits_{j=1}^{2}t_i^{(j)}\cdot f_{j}\tau_j=\bar{t}_i\tau_0\\        &\Rightarrow&t_i^{(1)}\tau_1f_1+t_i^{(2)}\tau_2f_2=\bar{t}_i(\tau_1f_1+\tau_2f_2)\\        
        &\Rightarrow&(t_i^{(1)}-\bar{t}_i)f_1\tau_1+(t_i^{(2)}-\bar{t}_i)f_2\tau_2=0.       
        % &\Rightarrow&(t_i^{(1)}-\bar{t}_i)\tau_1f_1+(t^{(2)}-\bar{t}_i)\tau_2(1-f_1)=0\\           
        % &\Rightarrow&[(t_i^{(1)}-\bar{t})\tau_1-(t_i^{(2)}-\bar{t}_i)\tau_2]f_1= (\bar{t}_i-t_i^{(2)})\tau_2\\
    \end{eqnarray*}

    As $\tau_1$, $\tau_2$ are random numbers in $\mathbb{Z}_p^*$, we have $(t_i^{(1)}-\bar{t}_i)f_1=0$ and $(t_i^{(2)}-\bar{t}_i)f_2=0$. Since $f_1+f_2=1$, we have $t_i^{(1)}=\bar{t}_i$ and $f_1=1$, $f_2=0$, or $f_1=0$, $f_2=1$ and $t_i^{(2)}=\bar{t}_i$.
    In either way, if the challenge ciphertext $\CT_{\snd}^*$ is valid, the adversary can directly calculate $\EK_{\snd}^*$ for $\mathcal{S}_{\textsf{snd}}^*$ and $\mathcal{S}_{\textsf{snd}}^*\models\mathbb{A}_{\textsf{rcv}}^*$, which contradicts with the constraint.
    \vspace{1.5mm}

    \noindent(3) Let $q_e\geq 3$. We have $f_{1,j}=f_{2,j}=f_{3,j}=f_{4,j}=f_j$ and $\sum\nolimits_{j=1}^{q_e}f_j=1$. \vspace{1.5mm}
    
    \noindent(3.1) We firstly prove that $\mathcal{S}_{\snd}^{(j)}$ must have the same number of attributes, where $j\in[q_3]$. 
    
    Suppose that $\mathcal{S}_{\snd}^{(1)}$, $\mathcal{S}_{\snd}^{(2)}$, $...$ $\mathcal{S}_{\snd}^{(q_e)}$ has different number of attributes and $\ell_1^{(1)}>\ell_1^{(2)}>\cdots>\ell_1^{(q_e)}$. We deduce that 
    \begin{eqnarray*}                   \textsf{ct}_{6,\ell_1^{(1)}}^*&=&\big(H(u^{(1)})^{\tau_1f_1}\cdot h_{\ell_1^{(1)}}^e\big)^c=\big(H(u^{(1)})^{\tau_0}\cdot h_{\ell_1^{(1)}}^e\big)^c,\\  
    &&\text{~~~~~~~~~~~~~~~~~~~~~~~~~where~} h_{\ell_1^{(1)}}=H(u^{(1)}),\\    \textsf{ct}_{6,\ell_1^{(2)}}^*&=&\big(H(\hat{u}^{(1)})^{\tau_1f_1}\cdot H(\hat{u}^{(2)})^{\tau_2f_2}\cdot h_{\ell_1^{(2)}}^e\big)^c\\    
   &=&\big(H(\bar{u})^{\tau_0}\cdot h_{\ell_1^{(2)}}^e\big)^c,\text{~~~~~~~where~} h_{\ell_1^{(2)}}=H(\bar{u}),\\
    &&\cdots\\    \textsf{ct}_{6,\ell_1^{(q_e)}}^*&=&\big(H(\tilde{u}^{(1)})^{\tau_1f_1}\cdots H(\check{u}^{(q_e)})^{\tau_{q_e}f_{q_e}}\cdot h_{\ell_1^{(3)}}^e\big)^c,\\   
    &=&\big(H(\check{u})^{\tau_0}\cdot h_{\ell_1^{(3)}}^e\big)^c,\text{~~~~~~~where~} h_{\ell_1^{(3)}}=H(\check{u}),\\
    \textsf{ct}_7^*&=&\prod\limits_{j=1}^{q_e}(\delta_1^{\tau_j})^{f_{j}\cdot c'}\cdot\delta_1^{e\cdot c'}=\big(\delta_1^{\tau_0}\cdot\delta_1^{e}\big)^{c'},\\
    \textsf{ct}_8^*&=&\prod\limits_{j=1}^{q_e}(\delta_2^{\tau_j})^{f_{j}\cdot c''}\cdot\delta_2^{e\cdot c''}=\big(\delta_2^{\tau_0}\cdot\delta_2^{e}\big)^{c''},\\
    \textsf{ct}_9^*&=&\prod\limits_{j=1}^{q_e}(g_1^xh^{\tau_j})^{f_{j}\cdot c}\cdot h^{e\cdot c}=\big((g_1^xh^{\tau_0})\cdot h^{e}\big)^{c},
    \end{eqnarray*}
    where $f_j=f_{1,j}=f_{2,j}=f_{3,j}=f_{4,j}$.

    Therefore, we have $\tau_0=\tau_1f_1$ and $\tau_0=\sum\nolimits_{j=1}^{q_e}\tau_jf_j$. It can be deduced that $f_2=f_3=\cdots=f_{q_e}=0$. Since $\sum\nolimits_{j=1}^{q_e}f_j=1$, we have $f_1=1$. Then,
    $$\textsf{ct}_{6,i}^*=(\ek_{1,i}^*\cdot h_i^{e})^c,~\textsf{ct}_7^*=(\ek_2^*\cdot\delta_1^{e})^{c'},$$
    $$\textsf{ct}_8^*=(\ek_3^*\cdot\delta_2^{e})^{c''},~\textsf{ct}_9^*=(\ek_4^*\cdot h^{e})^c.$$
    $\mathcal{A}$ can easily calculate $\EK_{\snd}^*$ for $\mathcal{S}_{\textsf{snd}}^*$ and $\mathcal{S}_{\textsf{snd}}^*\models\mathbb{A}_{\textsf{rcv}}^*$, which contradicts with the constraint.\vspace{1.5mm}

    \noindent(3.2) Next, we prove that $\mathcal{S}_{\textsf{snd}}^{(1)}=\cdots=\mathcal{S}_{\textsf{snd}}^{(q_e)}$.
    If $\mathcal{S}_{\textsf{snd}}^{(1)}\neq\mathcal{S}_{\textsf{snd}}^{(2)}$, there must exist some certain attribute $u_i^{(1)}\neq\cdots\neq u_i^{(q_e)}$, where $i\in[\ell_1^{(3)}]$. Put $u_i^{(1)}$, $\cdots$, $u_i^{(q_e)}$ into the random oracle to get the hash values $h_i^{(1)}=H(u_i^{(1)})=g_1^{t_i^{(1)}}$, $\cdots$, $h_i^{(q_e)}=H(u_i^{(q_e)})=g_1^{t_i^{(q_e)}}$, where $t_i^{(1)},\cdots,t_i^{(q_e)}\stackrel{\$}{\leftarrow}\mathbb{Z}_p^*$ and $t_i^{(1)}\neq \cdots\neq t_i^{(q_e)}$.

    The ciphertext $\CT_{\snd}^*$ is constructed as    
    \begin{eqnarray*}                   \textsf{ct}_{6,i}^*&=&\prod\limits_{j=1}^{q_e}(H(u_i^{(j)})^{\tau_j})^{f_{j}\cdot c}\cdot h_i^{e\cdot c},\\    \textsf{ct}_7^*&=&\prod\limits_{j=1}^{q_e}(\delta_1^{\tau_j})^{f_{j}\cdot c'}\cdot\delta_1^{e\cdot c'}=\big(\delta_1^{\tau_0}\cdot\delta_1^{e}\big)^{c'},\\
    \textsf{ct}_8^*&=&\prod\limits_{j=1}^{q_e}(\delta_2^{\tau_j})^{f_{j}\cdot c''}\cdot\delta_2^{e\cdot c''}=\big(\delta_2^{\tau_0}\cdot\delta_2^{e}\big)^{c''},\\
    \textsf{ct}_9^*&=&\prod\limits_{j=1}^{q_e}(g_1^xh^{\tau_j})^{f_{j}\cdot c}\cdot h^{e\cdot c}=\big((g_1^xh^{\tau_0})\cdot h^{e}\big)^{c},
    \end{eqnarray*}
    where $\tau_0=\sum\nolimits_{j=1}^{q_e}\tau_jf_j$.

    If the ciphertext $\CT_{\snd}^*$ is valid, it must have
    $$\textsf{ct}_{6,i}^*=\prod\limits_{j=1}^{q_e}(H(u_i^{(j)})^{\tau_j})^{f_{j}\cdot c}\cdot h_i^{e\cdot c}=(H(\bar{u}_i)^{\tau_0}\cdot h_i^e)^c$$
    for some attribute $\bar{u}_i$. Let $\bar{h}_i=H(\bar{u}_i)=g_1^{\bar{t}_i}$, where $\bar{t}_i\stackrel{\$}{\leftarrow}\mathbb{Z}_p^*$. We have
    \begin{eqnarray*}
        \textsf{ct}_{6,i}^*&=&\prod\nolimits_{j=1}^{q_e}(H(u_i^{(j)})^{\tau_j})^{f_{j}\cdot c}\cdot h_i^{e\cdot c}\\
        &=&\big(g_1^{\sum\nolimits_{j=1}^{q_e}t^{(j)}\cdot \tau_jf_{j}}\cdot h_i^{e}\big)^c\\
        &=&(g^{\bar{t}_i\tau_0}\cdot h_i^e)^c=(H(\bar{u}_i)^{\tau_0}\cdot h_i^e)^c.
    \end{eqnarray*}
    Since $\tau_0=\sum\nolimits_{j=1}^{q_e}\tau_jf_j$ and $\sum\nolimits_{j=1}^{q_e}f_j=1$, we can deduce that 
    \begin{eqnarray*}
        &&\sum\nolimits_{j=1}^{q_e}t_i^{(j)}f_{j}\tau_j=\bar{t}_i\tau_0=\sum\nolimits_{j=1}^{q_e}\bar{t}_i\tau_jf_j\\
        &\Rightarrow&\sum\nolimits_{j=1}^{q_e}(t_i^{(j)}-\bar{t}_i)f_{j}\tau_j=0.
    \end{eqnarray*}

    As $\tau_1$, $\cdots$, $\tau_{q_e}$ are random numbers in $\mathbb{Z}_p^*$, we have $(t_i^{(j)}-\bar{t}_i)f_j=0$ for $j\in[q_e]$. Since $\sum\nolimits_{j=1}^{q_e}f_j=1$, we have $t_i^{(j)}=\bar{t}_i$, $f_j=1$ for some $j\in[q_e]$, and $f_{j'}=0$, for the $j'\in[q_e]$ and $j'\neq j$.
    In either way, if the challenge ciphertext $\CT_{\snd}^*$ is valid, the adversary can directly calculate $\EK_{\snd}^*$ for $\mathcal{S}_{\textsf{snd}}^*$ and $\mathcal{S}_{\textsf{snd}}^*\models\mathbb{A}_{\textsf{rcv}}^*$, which contradicts with the constraint.\vspace{2.5mm}    
    
    Therefore, we prove that  $\textsf{EK}$ are created for the same $\mathcal{S}_{\textsf{snd}}^*$ for all $j\in[q_e]$, where $\mathcal{S}_{\textsf{snd}}^*\models\mathbb{A}_{\textsf{rcv}}^*$.\\

    This completes the proof of Theorem \ref{theo:FEME-authn}.$\hfill\blacksquare$

\end{document}